\keywords{Software Verification, Invariant Synthesis, Model-Checking}
\crefname{algorithm}{Alg.}{Algs.}
\crefname{section}{Sec.}{Secs.}
\crefname{definition}{Def.}{Defs.}
\crefname{table}{Tab.}{Tabs.}
\crefname{example}{Ex.}{Exs.}
\crefname{proposition}{Prop.}{Props.}
\crefname{theorem}{Thm.}{Thms.}
\crefname{corollary}{Cor.}{Cors.}
\crefname{appendix}{Appx.}{Appxs.}
\renewcommand{\implies}{\Rightarrow}
\renewcommand{\iff}{\Leftrightarrow}
\newcommand{\auxlineref}[1]%
  {line~\ref{#1}\xspace}
\newcommand{\linerange}[2]%
  {lines~\ref{#1}--\ref{#2}\xspace}
\newcommand{\Linerange}[2]%
  {Lines~\ref{#1}--\ref{#2}\xspace}
\newcommand{\mmcode}[1]%
  {\lstinline[style=ipsmp,basicstyle=\ttfamily]{#1}}
\newcommand{\lncode}[1]%
  {\lstinline[style=ipsmp,basicstyle=\small\ttfamily]{#1}}
\newcommand{\code}[1]
  {\ifmmode\text{\mmcode{#1}}\else\lncode{#1}\fi}
\newcommand{\smallcode}[1]{{\small\code{#1}}}
\newcommand{\mmsmallcode}[1]{\text{\smallcode{#1}}}
\newcommand{\ipsmp}{IPS-MP\xspace}
\newcommand{\ipsmpfull}{Inductive Predicate Synthesis Modulo Programs\xspace}
\newcommand{\cvc}{\textsc{CVC4}\xspace}
\newcommand{\eldarica}{\textsc{Eldarica}\xspace}
\newcommand{\hornspec}{\textsc{HornSpec}\xspace}
\newcommand{\seahorn}{\textsc{SeaHorn}\xspace}
\newcommand{\smartace}{\textsc{SmartACE}\xspace}
\newcommand{\spacer}{\textsc{Spacer}\xspace}
\newcommand{\verx}{\textsc{VerX}\xspace}
\newcommand{\semgus}{\textsc{SemGuS}\xspace}
\newcommand{\sygus}{\textsc{SyGuS}\xspace}
\newcommand{\appendixcite}[1]%
  {\ifthenelse{\boolean{supplementary}}%
              {\cref{#1}}%
              {the supplementary material}}
\newcommand{\proofloc}[0]%
  {\ifthenelse{\boolean{supplementary}}%
              {appendix}%
              {\text{supplementary material}}}
\newcommand{\addappendix}[1]%
  {\ifthenelse{\boolean{supplementary}}{#1}{}}
\title{Inductive Predicate Synthesis Modulo Programs (Extended)}
\author{Scott Wesley}{Dalhousie University, Halifax, Canada}{}{}{}
\author{Maria Christakis}{TU Wien, Vienna, Austria}{}{}{supported by the Vienna Science and Technology Fund (WWTF) and the City of Vienna [Grant ID: 10.47379/ICT22007].}
\author{Jorge A. Navas}{Certora, Seattle, Washington, USA}{}{}{}
\author{Richard Trefler}{University of Waterloo, Waterloo, Canada}{}{}{supported, in part, by a Discovery Grant (Individual) from the Natural Sciences and Engineering Research Council of Canada.}
\author{Valentin W{\"u}stholz}{ConsenSys, Vienna, Austria}{}{}{}
\author{Arie Gurfinkel}{University of Waterloo, Waterloo, Canada}{}{}{supported, in part, by a Discovery Grant (Individual) from the Natural Sciences and Engineering Research Council of Canada.}
\authorrunning{S. Wesley, M. Christakis, J.\,A. Navas, R. Trefler, V. W{\"u}stholz, and A. Gurfinkel}
\begin{document}
\maketitle

\begin{abstract}
  A growing trend in program analysis is to encode verification conditions within the language of the input program.
  This simplifies the design of analysis tools by utilizing off-the-shelf verifiers, but makes communication with the underlying solver more challenging.
  Essentially, the analysis tools operates at the level of input programs, whereas the solver operates at the level of problem encodings.
  To bridge this gap, the verifier must pass along proof-rules from the analysis tool to the solver.
  For example, an analysis tool for concurrent programs built on an inductive program verifier might need to declare Owicki-Gries style proof-rules for the underlying solver.
  Each such proof-rule further specifies how a program should be verified, meaning that the problem of passing proof-rules is a form of invariant synthesis.

  Similarly, many program analysis tasks reduce to the synthesis of pure, loop-free Boolean functions (i.e.,~\emph{predicates}), relative to a program.
  From this observation, we propose \ipsmpfull (\ipsmp) which extends high-level languages with minimal synthesis features to guide analysis.
  In \ipsmp, unknown predicates appear under assume and assert statements, acting as specifications modulo the program semantics.
  Existing synthesis solvers are inefficient at \ipsmp as they target more general problems.
  In this paper, we show that \ipsmp admits an \emph{efficient} solution in the Boolean case, despite being generally undecidable.  
  Moreover, we show that \ipsmp reduces to the satisfiability of constrained Horn clauses, which is less general than existing synthesis problems, yet expressive enough to encode verification tasks.
  We provide reductions from challenging verification tasks---such as parameterized model checking---to \ipsmp.
  We realize these reductions with an efficient \ipsmp-solver based on \seahorn, and describe a real-world application to smart-contract verification.
\end{abstract}

\section{Introduction}
\label{Sec:Intro}

In recent years, many tools have emerged to verify C programs by leveraging the Clang/LLVM compiler infrastructure~(e.g.,~\cite{JiriLubos2010,SinzFalke2010,RienerFey2012,RakamaricEmmi2014,GurfinkelKahsai2015}).
These tools take as input C programs annotated with assumptions and assertions, and decide whether an assertion can be violated given that all assumptions are satisfied.
One such tool is \seahorn~\cite{GurfinkelKahsai2015}, which employs techniques from software model checking~\cite{KomuravelliGurfinkel2014}, abstract interpretation~\cite{GurfinkelNavas2022}, and memory analysis~\cite{KuderskiNavasGurfinkel2019} to enable efficient verification.
Due to these features, many tool designers have started using annotated C code as an intermediate language to dispatch program analysis problems to \seahorn~(e.g.,~\cite{DanOrnaSharon2018,KalraGoel2018,ElviraJesus2019,BloemJacobsVizel2019,NishanthEtAl2019,WesleyEtAl2021}).
In this setting, programs with specifications are transformed into C programs with assumptions and assertions, and then these C programs are analyzed using \seahorn.
The results obtained from \seahorn are examined to draw conclusions about the input programs.

However, the flexibility afforded by C code as an intermediate language makes communication with the underlying verification algorithm more challenging.
When \seahorn is given a program to verify, it automatically applies various builtin proof-rules, such as induction for loops~\cite{DBLP:conf/cav/AlbarghouthiLGC12} and function summarization~\cite{KomuravelliGurfinkel2014}.
A tool designer has no control over how these rules are employed, nor is the developer able to introduce new proof-rules to \seahorn.
The goal of this paper is to extend \seahorn with the language features required to communicate new declarative proof-rules to the underlying verification algorithm.

To illustrate this challenge, we consider \smartace~\cite{WesleyEtAl2021}, a tool that uses \seahorn for modular Solidity smart-contract verification.
In \smartace, each smart-contract is modeled by a non-terminating loop that executes a sequence of transactions\footnote{Transactions in Solidity/Ethereum can be thought of as sequences of method invocations.}.
For \smartace to verify a smart-contract, it first requires an inductive invariant for the non-terminating loop, and a compositional invariant for each map\footnote{In Solidity, maps are often used to store data for individual smart-contract users.} in the program.
The discovery of an inductive invariant is automated by \seahorn's invariant inference capabilities.
However, \seahorn is unaware of the modular proof-rules used by \smartace, and therefore, the end-user must provide the compositional invariants manually.
The authors of \smartace hypothesized~\cite{WesleyEtAl2021} that if each proof-rule could be declared to \seahorn, then \seahorn could instruct the underlying verification algorithm to infer all invariants automatically.
Inspired by this hypothesis, we first implemented compositional invariant synthesis in \seahorn, and then discovered that our solution generalized to many program verification problems.
Consequently, our solution forms a general-purpose framework well-suited to compositional invariant synthesis.

\begin{figure}[t]
  \centering
  \includegraphics[scale=0.6]{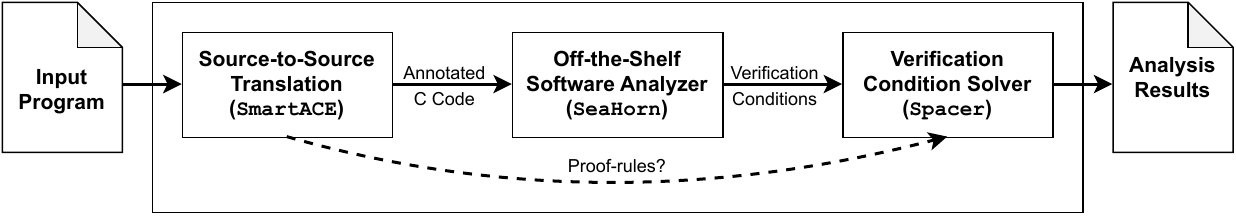}
  \caption{The architecture of an analysis framework built atop an off-the-shelf software verifier.
           Examples are given with respect to \smartace.}
  \label{Fig:Intro:ToolPipeline}
\end{figure}

To illustrate this more general problem, consider a tool designer who wishes to use an off-the-shelf software verifier (e.g.,~\seahorn) as the back-end to a new analysis framework (e.g.,~\smartace).
Recall that many off-the-shelf verifiers rely on specialized solvers to discharge verification conditions, including solvers for Satisfiability Modulo Theories~\cite{BarrettTinelli2018}, Constrained Horn Clauses (CHCs)~\cite{JaffarL87}, or intermediate verification languages (e.g.,~\cite{BarnettChang2005,FilliatrePaskevich2013}).
As depicted in \cref{Fig:Intro:ToolPipeline}, an analysis framework built atop an off-the-shelf verifier takes as input a program with specifications, translates this program into the language of the verifier, and then uses the verifier to generate verification conditions for its specialized solver.
Since software verification is undecidable in general, it is often necessary for the tool designer to declare additional proof-rules for the solver.
Example proof-rules include introducing predicate abstractions, suggesting modular abstractions for an array, and proposing modular decompositions for a parameterized system.
However, it is challenging for the tool designer to communicate proof-rules to the solver---the former operates at the level of the input program, while the latter operates at the level of verification conditions.
If a tool designer does attempt to encode proof-rules at the level of the input program, then these proof-rules are typically eliminated by optimizations from the verifier\footnote{For example, a pure function with annotations may be optimized away by the Clang compiler.}, long before verification conditions are ever produced.
That is, there is an impedance mismatch!

To bridge this gap, the verifier must pass proof-rules from the tool designer to the solver.
Each proof-rule is associated with a set of invariants that the solver must find in order to prove the program correct.
In other words, the invariants are declared by the proof-rules.
Since these invariants span many classes (e.g., inductive, compositional, and object invariants), it is often the case that specialized invariant inference techniques cannot solve this problem.
Instead, one must note that each proof-rule refines the invariants which the solver must synthesize. 
Consequently, one solution to the aforementioned impedance mismatch is to use synthesis techniques (e.g.,~\cite{AlurFisman2016,FedyukovichKaufman2017,ZhuMagill2018,SiNaik2020}).
In particular, using synthesis allows the tool designer to declare proof-rules by specifying what invariants are to be synthesized at the level of the input program.
This flexibility, however, comes at a price.
General synthesis is significantly more expensive than verification~\cite{Vardi2008}!

Our key contribution is a definition of \emph{a new form of
  synthesis}, called  \emph{Inductive Predicate Synthesis Modulo Programs}
(\ipsmp), that bridges the gap between flexible verification and efficient synthesis.
Our \emph{theoretical results} are two-fold, we show that: (a)
\ipsmp reduces to satisfiability of CHCs, hence establishing that \ipsmp is a specialization of general synthesis~\cite{Solarlezama2013,TorlakB14,AlurBodik2015,KimHu2021};
(b) for the special case of Boolean programs, \ipsmp is decidable with the same complexity as verification.
We conjecture that the latter extends to other decidable models of programs (e.g., timed automata).
Our \emph{practical result} is to reduce a wide range of common proof-rules to \ipsmp.
We show how \ipsmp guides inference of inductive invariants, class invariants, array invariants, and even modular parameterized model checking.
In other words, IPS-MP is well-suited to many areas of program analysis.
As a real-world application, we show that \ipsmp enables the full automation of \smartace.

Similar to existing synthesis frameworks, \ipsmp extends a programming language with unknowns.
The language itself is unrestricted (i.e., it has loops, procedures, memory, etc.).
However, the unknowns may only appear within \code{assume} and \code{assert} statements, denoting constraints on the strongest and weakest possible solutions, respectively.
A solution to an \ipsmp problem is a mapping from each unknown to a Boolean predicate such that the resulting program is correct (i.e., satisfies all of its assertions).
A high-level overview of \ipsmp is shown in \cref{Fig:Overview:Pipeline}.
Each problem instance consists of two components: (1)~a program with its specification (described by assumptions and assertions), which contains calls to unknown predicates under assume and assert, and (2)~the declarations of those predicates, which we refer to as \emph{predicate templates}.
Intuitively, a predicate template is a partial implementation with a number of unknown statements.
A solution consists of a full implementation of each predicate, or a witness to unrealizability (i.e.,~a proof that a solution does not exist).

\begin{figure}[t]
  \centering
  \includegraphics[scale=0.6]{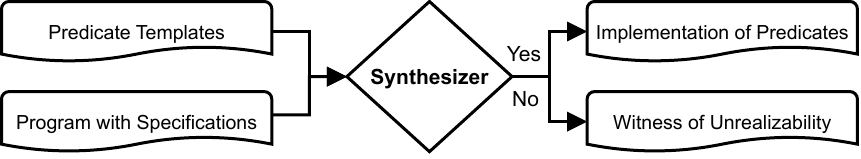}
  \caption{Overview of the \ipsmp problem.}
  \label{Fig:Overview:Pipeline}
\end{figure}

The reducibility of \ipsmp to CHC-solving motivates an efficient \ipsmp solver.
We build on the \seahorn framework (thus, our underlying language is the fragment of C supported by \seahorn~\cite{GurfinkelKahsai2015}), and integrate with two CHC solvers, namely \spacer~\cite{KomuravelliGurfinkel2014}, and \eldarica~\cite{HojjatRummer2018}.
Our empirical results on verification problems from various domains show that: (1) \ipsmp is effective at specifying verification strategies, (2) our implementation combined with existing CHC-solvers is highly efficient for linear arithmetic invariants, and (3) existing reductions to either general synthesis or specification inference are infeasible.
Our evaluation focuses on general synthesis, rather than invariant inference, since the invariants in our benchmarks span many classes.
To contextualize these results, we briefly review the state-of-the-art in synthesis.

\noindent\textbf{State-of-the-art in synthesis}.
The general synthesis problem is the task of generating a program
that satisfies a given specification.
There are many general synthesis frameworks, e.g., Sketch~\cite{Solarlezama2013}, Rosette~\cite{TorlakB14}, \sygus~\cite{AlurBodik2015}, and \semgus~\cite{KimHu2021}.
In Sketch and Rosette, users write programs with \emph{holes}, representing unknowns.
These holes are filled with predefined, loop-free expressions such that all program assertions are satisfied.
\sygus introduced a more language-agnostic approach to general synthesis. It generates loop-free \emph{programs}, satisfying a given behavioral specification, from a potentially infinite language.
Building on \sygus, \semgus allows users to define pluggable semantics, thereby enabling synthesis of programs with loops.
A distinguishing characteristic along this line of work is an emphasis on software development.
In contrast, IPS-MP targets software verification and proof synthesis, which are theoretically simpler problems.

Specification synthesis~(e.g.,~\cite{DasLahiri2015,AlbarghouthiDillig2016,PrabhuFedyukovich2021}) is another line of work that addresses a more specialized synthesis problem targeting program analysis, rather than software development.
In specification synthesis, a program may call functions with unknown implementations.
The goal is to synthesize specifications (e.g., the weakest specification for an unknown library procedure) that ensure the correctness of the calling program.
Typically, a specification synthesizer imposes extra requirements, such as non-vacuity~\cite{PrabhuFedyukovich2021}, maximality~\cite{AlbarghouthiDillig2016}, or reachability~\cite{DasLahiri2015}, to ensure that solutions are reasonable.
In contrast, the invariants synthesized by IPS-MP have constraints on both the strongest and weakest possible solutions, avoiding the need for additional (and often costly) requirements.

Of particular interest are the similarities and differences between \ipsmp and syntax-guided synthesis.
In \ipsmp, program holes are filled by expressions from an unbounded language.
To make this problem tractable, \ipsmp restricts Sketch and Rosette by requiring that holes only appear in partial predicates.
Formally, this means that \ipsmp solving is subsumed by non-linear constrained Horn clause solving.
This restriction is crucial as it allows an \ipsmp solver to prove that a problem is unrealizable, unlike in Sketch or Rosette.
Furthermore, \ipsmp differs from \sygus and \semgus in that the behavioral specification is given with respect to a given program (in other words, modulo a given program), rather than through a separate logical specification.
The program itself also places requirements on the holes, through assumptions and assertions, which is in contrast to specification synthesis.

In recent years, new extensions have been proposed to the Sketch framework.
However, these extensions all generalize Sketch to more complex, and consequently less tractable, problems, whereas \ipsmp restricts Sketch to a more tractable problem which proves to be useful in the domain of program verification.
To illustrate these gaps, we compare \ipsmp to PSKETCH~\cite{SolarLezama2008}, Synapse~\cite{Bornholt2016}, Grisette~\cite{Sirui2023}, and MetaLift~\cite{Bhatia2023}.
In the case of PSKETCH, both frameworks target the development of provably correct concurrent programs.
However, PSKETCH focuses on inductive program verification in the presence of interleaving executions, whereas \ipsmp focuses on the verification of sequential code fragments via user-defined proof rules (e.g., the synthesis of compositional invariants in \smartace).
In the case of Synapse, both tools aim to extend program synthesis problems with hints provided by an end-user.
However, the nature of these hints is very different.
In \ipsmp, the user introduces entirely new proof-rules, for which an underlying solver oversees the search for a solution.
In contrast, the hints provided by an end-user to Synapse assign costs to solutions, for which the underlying solver tries to optimize.
These hints do not allow the end-user to propose new proof-rules, and are suited to synthesis optimization rather than program verification.
In the case of Grisette, a framework was proposed to programmatically generate and solve sketches.
However, Grisette is based around bounded model-checking, whereas the \ipsmp problem targets unbounded model-checking and is, therefore, incomparable.
More closely related is MetaLift, which makes use of the fact that inductive program verification can be reduced to syntax-guided synthesis.
However, this verification program is not exposed to end-users.
In particular, the assume and assert statements are hidden from end-users, and the end-user has no way to propose new placements for them.
We conclude that \ipsmp is a novel synthesis problem.

\noindent\textbf{Constrained Horn clauses}.
A prominent approach to verification is reduction to the satisfiability of CHCs, otherwise known as \emph{verifier synthesis}~\cite{GrebenshchikovLPR12}.
While verifier synthesis does enable the flexible design of software verifiers, it does not address the issue of communicating proof-rules to the underlying solver.
In invariant synthesis, the proof-rules are either chosen once and for all~\cite{GrebenshchikovLPR12}, or are implicit in the solving algorithm (e.g.,~\cite{DBLP:journals/acta/LerouxRS16,DBLP:conf/cav/GovindCSG20}).
While we show that IPS-MP reduces to CHC-solving, our focus is on communicating new proof-rules to the solver via synthesis.
Other solutions to \ipsmp might emerge in the future.

\noindent\textbf{Contributions}.
This paper makes the following contributions:
\begin{enumerate}
\item \cref{Sect:Problem} presents the novel \ipsmp problem which has many applications to both program analysis and software verification;
\item \cref{Sect:Decidability} shows that even though \ipsmp is undecidable in general, there exists an efficient solution modulo Boolean programs;
\item \cref{Sect:Reductions} provides reductions from important verification problems to \ipsmp;
\item \cref{Sect:Implementation} presents a solver for \ipsmp within \seahorn.
      We demonstrate the effectiveness of our implementation compared to state-of-the-art synthesis frameworks \cvc~\cite{BarretConway2011} (a \sygus synthesizer) and \hornspec~\cite{PrabhuFedyukovich2021} (a specification synthesizer).
      We conclude that IPS-MP fills a gap not met by other synthesis frameworks. 
\end{enumerate}
All omitted proofs are found in the \proofloc.

\section{Overview}
\label{Sect:Overview}

To illustrate the basics of \ipsmp, we start with an artificial example.
For the moment, we focus on the language used in our presentation and the possible solutions to an \ipsmp problem.
Realistic applications of \ipsmp, highlighting its importance, are presented later in this section.

\begin{figure}[t]
  \lstinputlisting[style=ipsmp,multicols=2]{code/sample.c}
  \caption{A simple example of the \ipsmp problem.}
  \label{Fig:Overview:Sample}
\end{figure}

Our example, shown in \cref{Fig:Overview:Sample}, consists of a single function \code{main} that provides the context for a synthesis problem.
The function \code{main} is written in a typical imperative language, with loops and function calls.
We extend the language with two verification statements, \code{assume} and \code{assert}, with their usual semantics.
In our example, \code{y} is initially positive, due to \code{assume(y > 0)} on \auxlineref{Line:Overview:Sample:AssumeY}, and the program is correct if \code{assert(x == y)} on \auxlineref{Line:Overview:Sample:PfnAssumeEnd} holds for all executions.
That is, lines~\ref{Line:Overview:Sample:AssumeY} and \ref{Line:Overview:Sample:PfnAssumeEnd} provide a program specification.
The goal of this example is to synthesize a pure expression $e$ such that the program is correct after substituting $e$ for each call to \code{Post}.
To indicate that a predicate is a target for synthesis, the language is extended by the predicate template annotation \code{PRED_TEMPLATE} (\auxlineref{Line:Overview:Sample:Template}).
Each predicate template is a pure, loop-free function whose body either returns \code{true}, or returns via a call to the special predicate \code{synth}.
Each call to \code{synth} indicates a \emph{hole} in the predicate implementation, and must be determined by a synthesizer.
Each return of \code{true} places an \emph{explicit constraint} on when the implementation must be true.
In our example, \code{Post} always returns via a call to \code{synth} (\auxlineref{Line:Overview:Sample:Pfn}).
In the rest of the program, a call to a partial predicate can only appear as an argument to either \code{assume} or \code{assert}.
As described below, verification calls place \emph{implicit constraints} on when the implementation must be true.
Multiple calls to the same predicate are allowed.
In our example, \code{Post} is called once under \code{assert}~(\auxlineref{Line:Overview:Sample:PfnAssert}), and once under \code{assume}~(\auxlineref{Line:Overview:Sample:PfnAssumeMid}).

\newcommand{\synthfn}{\mathit{synth}}
A \emph{solution} to an \ipsmp problem is a mapping from each partial predicate $p$ to a
pure Boolean expression $e$ over the arguments of $p$, such that if every call to
\code{synth} in $p$ is replaced by $e$, then the main program satisfies all of its
assertions.
If such a solution does not exist, the output is a \emph{witness to unrealizability}, which is a mapping from each
partial predicate $p$ to a pure Boolean expression $e$ over the arguments of $p$,
which is both necessitated by the assertions placed on the partial predicate, and
sufficient to violate an assertion that is part of the specification.
In our example, there are many possible solutions. The weakest and strongest solutions are $\mathit{post}_{weak}(x, y) = (x = y)$ and $\mathit{post}_{strong}(x, y) = (y > 0  \land x = y)$.
Each solution defines a corresponding predicate \code{Post} such that all assertions in the \code{main} program are satisfied.
Intuitively, each call to \code{Post} under \code{assume} provides an implicit constraint on the weakest possible synthesized solution.
Likewise, each call to \code{Post} under \code{assert} provides an implicit constraint on the strongest possible synthesized solution.
Following this intuition, the example shows an application of \ipsmp to find an intermediate post-condition, over two variables \code{x} and \code{y}, that is true after the loop and is strong enough to ensure an assertion.
This means that solving \ipsmp requires, in general, inferring inductive invariants for loops and summaries for functions.

To illustrate the case when synthesis is not possible, consider removing
\auxlineref{Line:Overview:Sample:AddToX} from \cref{Fig:Overview:Sample}. Since \code{x}
is not incremented, it will never equal \texttt{y}. However, \code{Post} cannot be mapped to \code{false}, since this violates the assertion on
\auxlineref{Line:Overview:Sample:PfnAssert}. If \code{Post} is not \code{false}, then the assertion on
\auxlineref{Line:Overview:Sample:PfnAssert}
is reachable and will fail. Therefore, this
\ipsmp problem is unrealizable.
The witness to unrealizability is a mapping that sends \code{Post} to an expression over \code{x} and \code{y}, which is necessitated by the assertion on \auxlineref{Line:Overview:Sample:PfnAssert} and violates the assertion on \auxlineref{Line:Overview:Sample:PfnAssumeEnd}.
An example witness is $\synthfn_{witness}( x, y ) = ( x = 0 \land y = 1 )$.

This section continues with three important applications of \ipsmp.
\cref{Sect:Overview:Methodology} presents a methodology to reduce verification problems to \ipsmp.
For readers new to verification as synthesis, the standard example of inductive loop invariant inference can be found in \appendixcite{Appendix:Loop}.
Secs.~\ref{Sect:Overview:Class}, \ref{Sect:Overview:Array}, and \ref{Sect:Overview:PCMC} extend on the techniques in \appendixcite{Appendix:Loop} to unify class invariant inference, array verification, and parameterized compositional model checking under a single synthesis framework.
\cref{Sect:Overview:Discussion} discusses the benefits of predicate templates and explains why \ipsmp requires both assumptions and assertions of partial predicates.
We note that the automation in \smartace is a special case of \cref{Sect:Overview:Array}.

\subsection{Methodology}
\label{Sect:Overview:Methodology}

In \cref{Fig:Overview:Sample}, a single predicate (i.e.,~\code{Post}) represents a single unknown (i.e.,~the post-condition of a loop).
This permits an \ipsmp solver to explore all relations between arguments (e.g.~\code{x} and \code{y} of \code{Post}).
When there are many variables, or large variable domains, the space of candidate solutions becomes very large.
Restricting the syntactic structure of each unknown, referred to as its \emph{shape}, helps to prune the search space.
In general, an unknown can be split into cases (see~\cref{Sect:Overview:Array}), and the variables in each case can be partitioned (see~\appendixcite{Appendix:Loop}).
Each partition is encoded by a unique predicate.
Refining a predicate's shape prunes the candidate solution space, but may eliminate valid solutions.

Whenever an unknown is refined, the syntactic changes are reflected only where the unknown is assumed or asserted.
The program remains unchanged otherwise.
For this reason, in \ipsmp, it is convenient to separate unknowns from their shapes.
In the context of program verification, this is accomplished with the following methodology.
First, a proof-rule for the program of interest is reduced to assumptions and assertions on one or more unknowns.
This is done once per proof-rule.
Second, the shape of each unknown is refined using insight from the program.
Third, the program is instrumented with assumptions and assertions.
The instrumented program is an \ipsmp problem and is automatically solved by an \ipsmp solver. 
We illustrate this methodology using examples from object-oriented program analysis, array verification, and parameterized verification.

\subsection{Class Invariant Inference as Synthesis}
\label{Sect:Overview:Class}

\begin{figure}[t]
  \begin{subfigure}{0.41\textwidth}
    \lstinputlisting[style=ipsmp]{code/class_prob.c}
    \caption{The original program.}
    \label{Fig:Overview:ClassProb}
  \end{subfigure}
  \hfill
  \begin{subfigure}{0.56\textwidth}
    \lstinputlisting[style=ipsmp]{code/class_soln.c}
    \caption{The \ipsmp problem (using \cref{Fig:Overview:ClassProb}).}
    \label{Fig:Overview:ClassSoln}
  \end{subfigure}
  \caption{A program (see~\cref{Fig:Overview:ClassProb}) which is correct relative to the choice of class invariant $( 0 <
    \textsf{o.max} ) \land ( 0 \le \textsf{o.pos} \le \textsf{o.max} )$, and
    a corresponding \ipsmp instance.}
\end{figure}

As a first example, we illustrate a reduction from class invariant inference to \ipsmp.
In object-oriented programming, a class bundles together a data structure, its initialization procedure, and its operations.
For example, the \code{Counter} class in \cref{Fig:Overview:ClassProb} accumulates values between $0$ and some maximum value.
The underlying data structure is a pair consisting of the current value, \code{pos}, and the maximum value, \code{max}.
The initialization procedure on \linerange{Line:Overview:ClassProb:InitStart}{Line:Overview:ClassProb:InitEnd} first ensures that \code{\_max} is positive, and then sets the current value to $0$ and the maximum value to \code{\_max}.
The operations for \code{Counter} include \code{reset}, \code{capacity}, and \code{increment}.
When \code{reset} is called, the current value is set back to $0$.
When \code{capacity} is called, the distance to the maximum value is returned.
When \code{increment} is called, if \code{capacity} is greater than $0$, then the current value is incremented and \code{true} is returned, else the current value is unchanged and \code{false} is returned.

The goal of this example is to prove that \code{drain} satisfies its assertions.
The \code{drain} function takes an instance of \code{Counter} (in an arbitrary state), exhausts the counter's capacity, and then resets the counter to 0.
The function is correct if \code{increment} always returns \code{true} on \auxlineref{Line:Overview:ClassProb:IncrAssert}, and \code{capacity} always returns a positive value on \auxlineref{Line:Overview:ClassProb:CapAssert}.
Verifying these claims is non-trivial, as the correctness of \code{drain} depends on the possible states of \code{Counter}.
For example, proving the assertion on \auxlineref{Line:Overview:ClassProb:IncrAssert} requires the invariant $( 0 \le \code{max} - \code{pos} )$.

A common approach to the modular analysis of object-oriented programs is class invariant inference (e.g.,~\cite{Ernst2000,HuizingKuiper2000,AggarwalRandall2001,Logozzo2004}).
A class invariant is a predicate that is true of a class instance after initialization, closed under the application of impure class methods, and sufficient to prove the correctness of the class~\cite{HuizingKuiper2000}.
In the case of \code{Counter}, the impure methods are \code{reset} and \code{increment}.
Therefore, a class invariant for \code{Counter} must satisfy four requirements.

\cref{Fig:Overview:ClassSoln} illustrates a technique to encode multiple cases in a single \ipsmp program.
Intuitively, this program uses non-determinism to execute one of four possible cases.
A case is selected on \auxlineref{Line:Overview:ClassSoln:Start1} by a sequence of \code{if}-\code{else} statements, each with a non-deterministic condition.
Even though the execution of each case is mutually exclusive, the \ipsmp solution must work in all cases.
The cases in \cref{Fig:Overview:ClassSoln} correspond to the requirements of a class invariant for \code{Counter}.
To ensure that the class invariant holds after initialization, the first case initializes an instance of \code{Counter} with non-deterministic arguments, and then asserts that the instance satisfies the class invariant (\linerange{Line:Overview:ClassSoln:Start1}{Line:Overview:ClassSoln:End1}).
To ensure that the class invariant is closed with respect to \code{reset}, the second case selects an arbitrary instance of \code{Counter} (through non-determinism), assumes that this instance satisfies the class invariant, executes \code{reset}, and then asserts that the instance continues to satisfy the class invariant (\linerange{Line:Overview:ClassSoln:Start2}{Line:Overview:ClassSoln:End2}).
Similarly, the third case ensures that the class invariant is closed with respect to \code{increment} (\linerange{Line:Overview:ClassSoln:Start3}{Line:Overview:ClassSoln:End3}).
Finally, to ensure that the class invariant entails the correctness of \code{drain}, the fourth case selects an arbitrary instance of \code{Counter}, assumes that this instance satisfies the class invariant, and then calls \code{drain} with the instance as an argument (\linerange{Line:Overview:ClassSoln:Start4}{Line:Overview:ClassSoln:End4}).
This gives a program with unknowns, as required by the verification methodology.

Next, the shape of the class invariant is considered.
In this example, we lack program-specific knowledge to help split the invariant into sub-cases.
Furthermore, it would be futile to partition the invariant's arguments, as the invariant must relate \code{max} to \code{pos} (e.g., \auxlineref{Line:Overview:ClassProb:IncrAssert} of \cref{Fig:Overview:ClassProb} requires that $0 \le \code{max} - \code{pos}$).
Therefore, $\code{CInv}( m, p )$ is used as the shape of the invariant.
In \cref{Fig:Overview:ClassSoln}, \code{CInv} corresponds to the partial predicate on \auxlineref{Line:Overview:ClassSoln:Pfn}.
One solution to \cref{Fig:Overview:ClassSoln} is the expression \mbox{\code{(m > 0) \&\& (p <= m)}} for the hole in \code{CInv}.
To prove the correctness of \code{drain}, a synthesizer may also infer the invariant $( 0 \le \code{o.pos} \land \code{o.pos} \le \code{o.max} )$ for the loop on \auxlineref{Line:Overview:ClassProb:Loop} of \cref{Fig:Overview:ClassProb}.

\subsection{Verification of Array-Manipulating Programs as Synthesis}
\label{Sect:Overview:Array}

\begin{figure}[t]
  \begin{subfigure}{0.41\textwidth}
    \lstinputlisting[style=ipsmp]{code/array_prob.c}
    \caption{The original program.}
    \label{Fig:Overview:ArrayProb}
  \end{subfigure}
  \hfill
  \begin{subfigure}{0.56\textwidth}
    \lstinputlisting[style=ipsmp]{code/array_soln.c}
    \caption{The \ipsmp problem.}
    \label{Fig:Overview:ArraySoln}
  \end{subfigure}
  \caption{A program (see~\cref{Fig:Overview:ArrayProb}) which is correct relative to the choice of array abstraction $( i =
    \textsf{s} \land v = 0 ) \lor ( i \ne \textsf{s} \land 0 \le v \le
    \textsf{max} )$, and a corresponding \ipsmp instance.}
\end{figure}

Consider the array-manipulating program in \cref{Fig:Overview:ArrayProb}.
This program initializes the array \code{data}, and then performs an unbounded sequence of updates to the cells of \code{data} while maintaining the maximum element of \code{data} in \code{max}.
A special index, stored by \code{sid}, remains unchanged during execution.
On \linerange{Line:Overview:ArrayProb:SetupStart}{Line:Overview:ArrayProb:SetupEnd}, \code{data} is allocated and then zero-initialized.
On \auxlineref{Line:Overview:ArrayProb:SetGlobals}, \code{max} is set to $0$, since the maximum element of a zero-initialized array is~$0$.
On \auxlineref{Line:Overview:ArrayProb:AssumeGlobals}, \code{sid} is set to an arbitrary index in \code{data}.
The unbounded sequence of updates begins on \auxlineref{Line:Overview:ArrayProb:LoopStart}, when the program enters a non-terminating loop.
During each iteration, an index is selected (\linerange{Line:Overview:ArrayProb:SelectId}{Line:Overview:ArrayProb:RestrictIdEnd}), and if this index is not \code{sid}, then the corresponding cell in \code{data} is incremented by $1$ (\auxlineref{Line:Overview:ArrayProb:WriteBranch}).
If the cell is incremented, then \code{max} is updated accordingly (\auxlineref{Line:Overview:ArrayProb:MaxBranch}).
Note that \cref{Fig:Overview:ArrayProb} can be thought of as a simplified smart contract, where \code{data} is an \emph{address mapping}, \code{sid} is an \emph{address variable}, and each iteration of the loop is a \emph{transaction }.
For a more general presentation of smart contracts as array-manipulating programs, see \cite{WesleyEtAl2021}.

The goal of this example is to prove two properties about the cells of \code{data}.
The first property is that every cell of \code{data} is at most \code{max}.
The second property is that \code{data[sid]} is always zero.
It is not hard to see why these properties are true.
For example, the first property is true since every cell of \code{data} is initially zero, and after increasing the value of a cell, \code{max} is updated accordingly.
However, verifying these properties is challenging, since \code{data} has an arbitrary number of cells.
One solution to this problem is to compute a summary for each cell of \code{data}, with respect to \code{max} and \code{sid}, and independent of \code{data}'s length.
This summary is then used in place of each array access to obtain a new program with a finite number of cells.
For simplicity, we assume that array accesses are in bounds, and that integers cannot overflow (i.e., are modeled as mathematical integers). 

A common approach to obtain such a summary is to
over-approximate the least fixed point of the program by an
\emph{abstract domain} that provides a tractable set of array cell
partitions according to semantic properties
(e.g.,~\cite{GopanReps2005,HalbwachsPeron2008,CousotLogozzo2011}).
An alternative approach (followed here) is to compute a \emph{compositional invariant}~\cite{NamjoshiTrefler2016} for each cell of the array.
A compositional (array) invariant is a predicate that is initially true of all cells in the array, and closed under every write to the array.
Furthermore, a compositional invariant must be closed under \emph{interference}, that is, if $i \ne j$ and the cell \code{data[i]} is updated, then \code{data[j]} continues to satisfy the compositional invariant.
That is, a compositional invariant is assumed after each read and asserted after each write.

Using this approach, the program in \cref{Fig:Overview:ArraySoln} is obtained. 
On \auxlineref{Line:Overview:ArraySoln:SelectId}, an arbitrary index named \code{id} is selected, as in the original program.
However, on \linerange{Line:Overview:ArrayProb:SelectOtr}{Line:Overview:ArrayProb:RestrictOtr}, a second, \emph{distinct} index named \code{x} is selected, to stand for a cell under interference.
On \linerange{Line:Overview:ArraySoln:AssumeStart}{Line:Overview:ArraySoln:AssumeEnd}, the compositional invariant is assumed, in place of reading the values at \mbox{\code{data[id]}} and \code{data[x]}.
On \linerange{Line:Overview:ArraySoln:PropStart}{Line:Overview:ArraySoln:PropEnd}, the two properties are asserted.
If an arbitrary cell satisfies both properties, then every cell must satisfy both properties.
On \linerange{Line:Overview:ArrayProb:UpdateStart}{Line:Overview:ArrayProb:UpdateEnd}, the cell updates are performed as in the original program.
On \linerange{Line:Overview:ArraySoln:AssertStart}{Line:Overview:ArraySoln:AssertEnd}, the compositional invariant is asserted, in place of writing to \code{data[id]}.
Note that \linerange{Line:Overview:ArrayProb:SetupStart}{Line:Overview:ArrayProb:SetupEnd} of \cref{Fig:Overview:ArrayProb} do not appear in \cref{Fig:Overview:ArraySoln} since the compositional invariant abstracts away the contents of \code{data}.
This gives a program with unknowns, as required by the verification methodology.

Next, the shape of the compositional invariant is restricted.
Observe that on \auxlineref{Line:Overview:ArrayProb:WriteBranch} of \cref{Fig:Overview:ArrayProb}, the value written into \code{data} depends on whether the index is \code{sid}.
This suggests that the compositional invariant has two cases that branch on whether \code{id} equals \code{sid}, namely $( ( \code{id} = \code{sid} ) \land \code{Inv3}( \code{max}, v ) ) \lor ( ( \code{id} \ne \code{sid} ) \land \code{Inv4}( \code{max}, v ) )$.
In the \ipsmp encoding, both \code{Inv3} and \code{Inv4} correspond to partial predicates (see lines~\ref{Line:Overview:ArraySoln:Pfn1} and \ref{Line:Overview:ArraySoln:Pfn2} in \cref{Fig:Overview:ArraySoln}, respectively).
The templates, on lines~\ref{Line:Overview:ArraySoln:BaseCase1} and \ref{Line:Overview:ArraySoln:BaseCase2}, correspond to the initialization rule for the invariant.
Recall, however, that these templates are not strictly necessary.
One alternative is to assert \code{Inv3(max,0)} and \code{Inv4(max,0)} before \auxlineref{Line:Overview:ArraySoln:Loop}, though this is not illustrated.
Due to the branching shape of the invariant, each \code{assume} and \code{assert} statement must branch between the two partial predicates (see \linerange{Line:Overview:ArraySoln:AssumeStart}{Line:Overview:ArraySoln:AssumeEnd} and \ref{Line:Overview:ArraySoln:AssertStart}--\ref{Line:Overview:ArraySoln:AssertEnd}).
Given \cref{Fig:Overview:ArraySoln}, a synthesizer may find the expressions \code{(v == 0)} for the hole in \code{Inv3}, and \code{(0 <= v) \&\& (v <= max)} for the hole in \code{Inv4}.
By substitution, $( ( \code{id} = \code{sid} ) \land ( v  = 0 ) ) \lor ( ( \code{id} \ne \code{sid} ) \land ( 0 \le v ) \land ( v \le \code{max} ) )$.
To verify \code{main}, a synthesizer may also infer the invariant $( 0 \le \code{max} )$ for the loop at \auxlineref{Line:Overview:ArraySoln:Loop}.

\subsection{Parameterized Verification as Synthesis}
\label{Sect:Overview:PCMC}

\begin{figure}[t]
  \begin{subfigure}{0.48\textwidth}
    \lstinputlisting[style=ipsmp]{code/pcmc_prob.c}
    \caption{The process.}
    \label{Fig:Overview:PcmcProb}
  \end{subfigure}
  \hfill
  \begin{subfigure}{0.51\textwidth}
    \lstinputlisting[style=ipsmp]{code/pcmc_soln.c}
    \caption{The \ipsmp problem (uses \code{tr}).}
    \label{Fig:Overview:PcmcSoln}
  \end{subfigure}
   \caption{A process for a parameterized ring, and an \ipsmp problem that verifies the process. The process is correct relative to the compositional invariant $( ( \textsf{v.lhs} \ne \textsf{Left} ) \lor ( \textsf{v.rhs} \ne \textsf{Right} ) ) \implies ( \textsf{v.st} \ne \textsf{Critical} )$, and the \ipsmp problem synthesizes the compositional invariant. Note that \code{Lock} and \code{State} are defined in \cref{Fig:Overview:PcmcProb} using \code{typedef}, and that \code{otr} is a process under interference.}
\end{figure}

As a third example, we illustrate a reduction from parameterized verification to \ipsmp.
This example considers two or more processes running in a ring network of arbitrary size.
A ring network organizes processes into a single cycle, such that each process has a left and right neighbour~\cite{ChandyMisra1989}.
In this ring, adjacent processes share a lock on a common resource.
Processes are either \emph{trying} to acquire a lock, or have acquired all locks and are in a \emph{critical} section.
Initially, all processes are trying and all locks are free.
Each processes runs the program in \cref{Fig:Overview:PcmcProb}.
The state of each process is given by \code{View} on \auxlineref{Line:Overview:PcmcProb:View}, and the transition relation of each process is given by \code{tr}\footnote{For simplicity, \code{tr} is not deadlock-free as processes retain their locks until reaching their critical sections. However, the critical section can be reached any number of times without encountering a deadlock.} on \auxlineref{Line:Overview:PcmcProb:Tr}.
Since each process runs the same program with the same configuration of locks, the ring network is said to be \emph{symmetric}.

The goal of this example is to prove that if a process is in its critical section, then the process holds both adjacent locks.
Following \cite{NamjoshiTrefler2016}, this property is proven by computing an adequate compositional invariant for each process.
An adequate compositional invariant is true of the initial state of each process, closed under the transition relation, closed under interference, and entails the property of interest.
Remarkably, in a symmetric ring network, a compositional invariant can be computed by analyzing a ring with exactly two processes.

Using this approach, the program in \cref{Fig:Overview:PcmcSoln} is obtained.
This program uses a non-deterministic \code{if} statement to split the analysis into two cases (\auxlineref{Line:Overview:PcmcSoln:If}).
The first case instantiates a two-process network using the compositional invariant (\linerange{Line:Overview:PcmcSoln:ClosureSetupStart}{Line:Overview:PcmcSoln:ClosureSetupEnd}).
Due to network symmetry, the left lock of the first process is the right lock of the second process, and vice versa.
A single process in this network executes a transition (\auxlineref{Line:Overview:PcmcSoln:ClosureTr}), and then the closure of the compositional invariant is asserted for both processes (\linerange{Line:Overview:PcmcSoln:ClosureCheckStart}{Line:Overview:PcmcSoln:ClosureCheckEnd}).
The assertions on \linerange{Line:Overview:PcmcSoln:ClosureCheckStart}{Line:Overview:PcmcSoln:ClosureCheckEnd} ensure both closure under the transition relation and closure under interference, since only a single process transitioned.
The second case instantiates a single process using the compositional invariant (\auxlineref{Line:Overview:PcmcSoln:AdequateSetup}), and then asserts the property of interest (\linerange{Line:Overview:PcmcSoln:AdequateCheckStart}{Line:Overview:PcmcSoln:AdequateCheckEnd}).
Together, these cases define a compositional invariant.
This gives a program with unknowns, as required by the verification methodology.

Next, the shape of the compositional invariant is considered.
In this example, there is no motivation to split the invariant into cases.
Furthermore, it would not make sense to partition the arguments of the invariant, since the state of a process is dependent on the combined state of its adjacent locks.
Therefore, $\code{RInv}( l, s, r )$ is assumed to be the shape of the invariant.
In the \ipsmp encoding, \code{RInv} corresponds to the partial predicate on \auxlineref{Line:Overview:PcmcSoln:Pfn}.
The template on \auxlineref{Line:Overview:PcmcSoln:BaseCase} ensures that the compositional invariant is true of the initial state of each process.
As an alternative to a template, one can instead assert \code{RInv(Free,Try,Free)} before \auxlineref{Line:Overview:PcmcSoln:If}.
One solution to this problem is to fill the hole in \code{RInv} with the expression $\code{(s == Try) || ((l == Left) \&\& (r == Right))}$.
Consequently, $( ( s \ne \code{Try} ) \implies ( l = \code{Left} \land r = \code{Right} ) )$.

\subsection{Discussion}
\label{Sect:Overview:Discussion}

\begin{figure}[t]
  \begin{subfigure}{0.47\textwidth}
    \lstinputlisting[style=ipsmp]{code/bound.c}
    \caption{Predicate template encoding.}
    \label{Fig:Overview:Bound}
  \end{subfigure}
  \hfill
  \begin{subfigure}{0.47\textwidth}
    \lstinputlisting[style=ipsmp]{code/assertional.c}
    \caption{Assertion-based encoding.}
    \label{Fig:Overview:Assert}
  \end{subfigure}
  \caption{The initial condition $( x + y = 5 )$ encoded using a predicate (see \cref{Fig:Overview:Bound}), and its equivalent encoding using an assertion (see \cref{Fig:Overview:Assert}).}
\end{figure}

In Sec.~\ref{Sect:Overview:Array}, all explicit constraints were easily replaced by implicit constraints.
However, explicit constraints can yield more succinct encodings.
For example, consider the initial condition $( x + y = 5 )$.
In \cref{Fig:Overview:Bound}, the condition is given as an explicit predicate template, and in \cref{Fig:Overview:Assert}, it is desugared as an assertion.
To desugar the constraint, additional variables and assumptions are required.

In the examples presented so far, each \ipsmp problem places both assumptions and assertions on each partial predicate.
All interesting \ipsmp problems follow this pattern.
However, \ipsmp is well defined even if a partial predicate has only assumptions placed on it, only assertions placed on it, or neither.
In these cases, the \ipsmp problem is trivial or reduces to a simpler problem.

If partial predicates only appear in assumptions, then the synthesized solution is never strengthened.
In other words, the solution may be arbitrarily weak.
This is an instance of specification synthesis.
Usually, in specification synthesis, non-functional requirements are placed on each specification to ensure that a solution is ``interesting'' (e.g.,~\cite{DasLahiri2015,AlbarghouthiDillig2016,PrabhuFedyukovich2021}).
Without these requirements, uninteresting solutions, such as \code{false}, are permitted.
Since \ipsmp only places functional requirements on its solutions, this case is trivial and returning \code{false} from each predicate is always a solution (given a correct program).

If partial predicates only appear in assertions, then the synthesized solution is only ever strengthened.
A solution in this case is an expression that subsumes all assertions placed on the predicate.
However, an expression that evaluates to \code{true} subsumes all possible assertions.
Therefore, this case is also trivial and returning \code{true} from each predicate is always a solution (given a correct program).

If partial predicates never appear in the program, then the synthesizer can select an arbitrary implementation for each predicate.
However, if the synthesizer returns a solution, then the program must be correct relative to the solution.
Therefore, if the program does violate an assertion, then the synthesizer must return a witness to unrealizability instead.
Consequently, the output of the synthesizer indicates if the program is correct, and is equivalent to verification.

\section{Background}
\label{Sect:Background}

This section recalls results from logic-based program verification.
\cref{Sect:Background:FOL} reviews the key definitions of First Order Logic (FOL) and the Constrained Horn Clause (CHC) fragment of FOL. 
\cref{Sect:Background:Lang} introduces a programming language used throughout this paper.
\cref{Sect:Background:WLP} recalls the connection between CHCs and program semantics through the weakest liberal precondition transformer.

\subsection{First Order Logic and Constrained Horn Clauses}
\label{Sect:Background:FOL}

\newcommand{\SigmaBool}{\Sigma_{\mathrm{Bool}}}

\newcommand{\cV}{\mathcal{V}}
\newcommand{\Terms}{\mathrm{Term}}
\newcommand{\QFFml}{\mathrm{QFFml}}
A \emph{first order signature} $\Sigma$ defines a set of predicates, a set of relations, and their respective arities.
Given a set of variables $\cV$, a \emph{term} is either a variable from $\cV$ or an application of a relation in $\Sigma$ to one or more terms.
An \emph{atom} is an application of a predicate in $\Sigma$ to one or more terms.
A \emph{formula} joins atoms using standard logical connectives, existential quantification, and universal quantification.
A formula is \emph{quantifier-free} if it contains neither existential nor universal quantification.
A formula is a \emph{sentence} if all variable instances are quantified.
Given a FO-formula $\varphi$, the formula $\varphi[ x / y ]$ is defined by substituting $y$ for all free instances of $x$ in $\varphi$.
We write $\Terms( \Sigma, \cV )$ and $\QFFml( \Sigma, \cV )$ for the sets of terms and quantifier-free formulas generated by $\Sigma$ and $\cV$.

\newcommand{\cF}{\mathcal{F}}
\newcommand{\cT}{\mathcal{T}}
A FO-\emph{theory} $\cT$ is a deductively closed set of sentences over a signature $\Sigma$.
A \emph{$\cT$-model} for a formula $\varphi$ is an interpretation of each predicate, relation, and free variable in $\cT \cup \{ \varphi \}$ such that every formula in $\cT \cup \{ \varphi \}$ is true.
If a $\cT$-model exists for $\varphi$, then $\varphi$ is \emph{satisfiable}, otherwise, $\varphi$ is \emph{unsatisfiable}.
In the case that all valid interpretations of $\cT$ are $\cT$-models for $\varphi$, then $\varphi$ is $\cT$-valid and we write $\models_{\cT} \varphi$.
Furthermore, if each interpretation of a $\cT$-model $M$ can be expressed in some logical fragment $\cF$, then $M$ provides an \emph{$\cF$-solution} to $\varphi$ .

\newcommand{\keep}{\textsf{keep}}
Constrained Horn Clauses (CHCs) are a fragment of FOL determined by a FO-signature $\Sigma$ and an set of predicates $P$.
A CHC is a sentence of the form
$\forall \,V \cdot \varphi \land p_1( \vec{x}_1 ) \land \cdots \land p_k( \vec{x}_k ) \implies h( \vec{y} )$, 
where $\varphi \in \QFFml( \Sigma, \cV )$ and $\{ p_1, \ldots, p_k, h \} \subseteq P$.
For program semantics, it is useful to use $v'$ to denote the value of a variable $v$ after a program transition and $\keep( W ) := \bigwedge_{w \in W} w = w'$ to denote that variables $W \subseteq \cV$ are unchanged during a transition.
Given a set of variables $V = \{ v_1, \ldots, v_n \} \subseteq \cV$, the set of variables $\{ v_1', \ldots, v_n' \}$ is denoted $V'$.
Likewise, given a formula $\varphi$ over the variables in $V$, the formula $\varphi[v_1/v_1']\cdots[v_n/v_n']$ over $V'$ is denoted $\varphi'$.

\subsection{Procedural Programming Language}
\label{Sect:Background:Lang}

\newcommand{\Programs}{\textsf{Progs}}

\newcommand{\cP}{\mathcal{P}}
\newcommand{\Procs}{\textsf{Procs}}
Throughout this paper, we consider a simple procedural programming language, whose syntax is standard and can be found in~\appendixcite{Appendix:Language}.
We assume that all expressions are factored out by a FO-signature $\Sigma$, with variables from a set $\cV$.
That is, each expression is of the form $\QFFml( \Sigma, \cV )$.
The set of all programs in the language is denoted $\Programs( \Sigma, \cV )$.
For simplicity, types are omitted.
In this language, a program has one or more procedures, with execution starting from \code{main}.
Each procedure is written in an imperative language, including loops and procedure calls.
The language is extended with a non-deterministic assignment (i.e.,~\code{*}), and verification statements \code{assume} and \code{assert}.
The expressions in assume and assert can be either from $\QFFml( \Sigma, \cV )$ or a call to a pure Boolean procedure, called a \emph{predicate}.
Predicates may only be called within assume or assert statements.
Given a program $\cP \in \Programs( \Sigma, \cV )$, $\Procs( \cP )$ denotes the procedures in $\cP$.
A special case is when all variables are Boolean.

\newcommand{\Locs}{\mathit{Locs}}
\newcommand{\GV}{\mathit{GV}}
\newcommand{\LV}{\mathit{LV}}
\newcommand{\Bops}{\mathit{NE}}
\newcommand{\Bcalls}{\mathit{CE}}
\newcommand{\Bassume}{\mathit{FE}}
\newcommand{\Bassert}{\mathit{AE}}
\newcommand{\Bsum}{\mathit{PE}}
\newcommand{\lcall}{l_{\mathrm{call}}}
\newcommand{\lin}{l_{\mathrm{in}}}
\newcommand{\lout}{l_{\mathrm{out}}}
\newcommand{\lret}{l_{\mathrm{ret}}}
\begin{definition}
  \label{Def:BoolProg}
  Let $\SigmaBool$ denote a Boolean signature.
  A \emph{Boolean program} is a tuple $( \Locs, \GV, \LV, E )$ with $E = \left( \Bops, \Bcalls, \Bassume, \Bassert, \Bsum \right)$ and $V = \GV \cup \LV$ such that:
  \begin{enumerate}
  \item $\Locs$ is a finite set of control-flow locations with entry-point $\code{main} \in \Locs$;
  \item $\GV$ and $\LV$ are disjoint sets of local and global variables (respectively);
  \item $\Bops \subseteq \Locs \times \QFFml \left( \SigmaBool, V \cup V' \right) \times \Locs$ is a set of \emph{normal edges}, $\Bcalls \subseteq \Locs \times \Locs \times \Locs$ is a set of \emph{call edges}, $\Bassume \subseteq \Locs \times \Locs \times \Locs$ is a set of \emph{(partial predicate) call-under-assume edges}, $\Bassert \subseteq \Locs \times \Locs \times \Locs$ is a set of \emph{(partial predicate) call-under-assert edges}, and $\Bsum \subseteq \Locs \times \Locs$ is a set of \emph{procedure summary edges};
  \item If $( l_1, R, l_2 ) \in \Bops$, then $l_2$ is reachable from $l_1$ by updating the variables according to $R$ and if $( \lcall, \lin, \lret ) \in \Bcalls$, then $\lret$ is reachable from $\lcall$ by executing the procedure with entry location $\lin$;
  \item If $( \lcall, \lin, \lret ) \in \Bassume$, then $\lret$ is reachable from $\lin$ by assuming the partial predicate with entry location $\lcall$ and if $( \lcall, \lin, \lret ) \in \Bassert$, then $\lret$ is reachable from $\lin$ by asserting the partial predicate with entry location $\lcall$;
  \item If $( \lin, \lout ) \in \Bsum$, then the procedure with entry location $\lin$ has exit location $\lout$.
  \end{enumerate}
\end{definition}
      
\newcommand{\lbot}{l_{\bot}}
A Boolean program consists of control-flow locations and edges between locations.
Each procedure has a single entry location, $\lin$, and a single exit location, $\lout$, where $( \lin, \lout ) \in \Bsum$.
The program enters at $\code{main} \in \Locs$, and a special location $\lbot \in \Locs$ indicates failure.
\emph{Normal edges} connect control-flow locations within a procedure and represent non-procedural statements.
For example, the statement \code{assert(e)} (where $e$ is an expression) corresponds to two normal edges, $\left( l_1, e \land \keep( \GV \cup \LV ), l_2 \right)$ and $\left( l_1, \neg e \land \keep( \GV \cup \LV ), \lbot \right)$.
\emph{Call edges} (optionally under assume or assert) connect locations in a caller's procedure and a callee's procedure by giving the call and return locations of the caller ($\lcall$ and $\lret$, respectively), and the entry location for the callee ($\lin$).
For simplicity, all procedures have the same local variables, and arguments are passed by global variables.
The location $\lbot$ is assumed to have no outgoing edges.

The state of a Boolean program is a tuple $( l, s )$, where $l$ is a location and $s$ is an assignment to each Boolean variable.
Initially, $l = \code{main}$ and $s$ is an arbitrary assignment.
For each normal edge $( l_1, R, l_2 )$, a transition exists from $( l_1, s_1 )$ to $( l_2, s_ 2 )$, if $s_1 \land R \land s_2'$ is valid.
All call edges have the expected semantics.

\subsection{Logical Program Verification}
\label{Sect:Background:WLP}

\newcommand{\Funcs}{\textsf{Funcs}}
\newcommand{\wlp}{\mathit{wlp}}
\newcommand{\ToCHC}{\mathsf{ToCHC}}
\newcommand{\linv}{\mathit{loop}}
\newcommand{\wlpassert}{\textsf{assert}}
\newcommand{\wlpassume}{\textsf{assume}}
\newcommand{\wlpskip}{\textsf{skip}}
\newcommand{\wlpif}{\textsf{if}}
\newcommand{\wlpelse}{\textsf{else}}
\newcommand{\wlpwhile}{\textsf{while}}
\newcommand{\wlpret}{\textsf{return}}

The Weakest Liberal Precondition (WLP) transformer gives logical semantics to imperative programs~\cite{Dijkstra1975}.
We write $\wlp( S, Q )$ for the WLP of a statement $S$ with respect to a post-condition$Q$.
The WLP transformer for $\Programs( \Sigma, \cV )$ is standard and can be found in~\appendixcite{Appendix:Language}.
Note that in this transformation, the $\linv_{ln}$ predicate is an invariant for a loop at line $\textit{ln}$.

\begin{figure}[t]
  \input{diagrams/tochc}
  \vspace{-0.25in}
  \caption{The partial correctness conditions for a program $\cP \in \Programs( \Sigma, \cV )$. This follows the presentation of~\cite{BjornerGurfinkel2015}.}
  \label{Fig:Background:ToCHC}
\end{figure}

The $\wlp( - )$ transformer can be used to verify partial correctness for procedural programs.
This is achieved through the $\ToCHC( - )$ transformer in \cref{Fig:Background:ToCHC}.
The $\wlp( \cP( \code{main} ), \top )$ term asserts that \code{main} satisfies all assertions.
For each procedure $f \in \Procs( \cP )$, the term $\ToCHC( f )$ asserts that $f$ is correct for all inputs passed to $f$ in every execution.
Note that in \cref{Fig:Background:ToCHC}, $f_{pre}$ collects inputs to $f$, and $f_{sum}$ relates the inputs of $f$ to the outputs of $f$.
In the case that $f$ is a predicate, $f_{pre}$ and $f_{sum}$ are omitted, since $f$ is side-effect free.
Together, $\ToCHC( \cP )$ asserts that the program $\cP$ is correct for any execution starting from \code{main}.
If $\ToCHC( \cP )$ is satisfiable, then there exist loop invariants for $\cP$ such that $\cP$ satisfies all assertions~\cite{BjornerGurfinkel2015}.
Therefore, $\ToCHC( \cP )$ can be used to verify $\cP$.
Furthermore, $\ToCHC( \cP )$ is in the CHC fragment~\cite{BjornerGurfinkel2015}.

\newcommand{\bddexists}{\textsf{elim}}
Efficient procedures exist to prove that Boolean programs are correct.
For example, \emph{program summarization} simultaneously computes a summary $\theta$ from control-flow locations to input-to-reachable-state relations, and a summary $\sigma$ from procedures to input-output relations.
For a location $l \in \Locs$, if $\theta( l ) = \bot$, then $l$ is unreachable.
Therefore, a Boolean program $\cP$ is correct if and only if $\theta( \lbot ) = \bot$ in the least summary of $\cP$~\cite{AlurBouajjani2018}.
Program summarization is defined in \cref{Def:Summary}\footnote{To align with \cref{Alg:BebopReach}, \cref{Def:Summary} is non-standard but equivalent to \cite{AlurBouajjani2018}.}.
The algorithm to compute $\theta$ is presented in full, for reuse in \cref{Sect:Decidability:Bool}.
For presentation, $\bddexists( \varphi, W )$ denotes the existential elimination of $W$ in $\varphi$.

\begin{definition}[\cite{AlurBouajjani2018}]
  \label{Def:Summary}
  A \emph{Boolean program summary} for $\left( \Locs, \GV, \LV, E \right)$, where $E = \left( \Bops, \Bcalls, \varnothing, \varnothing, \Bsum \right)$ is a tuple $( \theta, \sigma )$ such that $Q = \QFFml( \SigmaBool, V \cup V' )$, $V = \GV \cup \LV$ and the following hold:
  \begin{enumerate}
  \item $\sigma: \Locs \rightarrow Q$ and $\theta: \Locs \rightarrow Q$;
  \item $\sigma \left( \code{main} \right) = \top$;
  \item $\forall \left( l_1, R, l_2 \right) \in \Bops \cdot \theta \left( l_1 \right) \land R' \implies \theta \left( l_2 \right)[ V' / V'' ]$;
  \item $\forall \left( \lcall, \lin, \lret \right) \in \Bcalls \cdot \theta \left( \lcall \right) \land \sigma' \left( \lin \right) \land \keep \left( \LV' \right) \implies \theta \left( \lret \right)[ V' / V'' ]$;
  \item $\forall \left( \lcall, \lin, \lret \right) \in \Bcalls \cdot \bddexists( \theta \left( \lcall \right), V \cup \LV') \implies \theta \left( \lin \right)$;
  \item $\forall \left( \lin, \lout \right) \in \Bsum \cdot \theta \left( \lout \right) \implies \sigma \left( \lin \right)$.
  \end{enumerate}
\end{definition}

\code{ComputeBoolReach} in \cref{Alg:BebopReach} is the standard algorithm to compute a least program summary.
The algorithm works by iteratively applying the rules of \cref{Def:Summary} until a fixed point is reached (we write $R^{*} := R[V'/V''][V/V']$).
Termination is ensured by the finite-state of Boolean programs and the monotonicity of each rule.
We extend on the algorithm \code{ComputeBoolReach} in \cref{Sect:Decidability:Bool}.

\newcommand{\lwork}{l_{\mathrm{wk}}}
\newcommand{\vwork}{s_{\mathrm{wk}}}
\newcommand{\vsum}{s_{\mathrm{sum}}}
\newcommand{\vret}{s_{\mathrm{ret}}}
\newcommand{\vin}{s_{\mathrm{in}}}
\newcommand{\vdiff}{s_{\mathrm{diff}}}
\newcommand{\worklist}{\mathrm{W}}
\newcommand{\var}{\mathbf{var}}
\begin{algorithm}[t]
  \caption{Computes the least Boolean program summary $( \theta, \sigma )$~\cite{BallRajamani2001}. Follows the presentation of~\cite{ChakiGurfinkel2018}.}\label{Alg:BebopReach}
  \footnotesize
  \setlength{\columnsep}{0.65cm}
  \begin{multicols}{2}
  \SetKwFunction{FDoIntraprocedural}{DoIntraproc}
\SetKwFunction{FDoProcedures}{DoProcs}
\SetKwFunction{FDoProcedureSummary}{DoProcSum}
\SetKwFunction{FUpdateReach}{UpdateReach}
\SetKwFunction{FInitBoolReach}{InitBoolReach}
\SetKwFunction{FComputeBoolReach}{ComputeBoolReach}
\SetKwProg{Fn}{Func}{:}{}
$\var\;( \theta, \sigma )$
\tcp*{A program summary}
$\var\; \worklist$
\tcp*{A map from $\Locs$ to a queued state}

\Fn{\FUpdateReach{$l$, $v$}}{
  $\vdiff \gets v \land \neg\, \theta( l )$\;
  \uIf{$\vdiff \ne \bot$}{
    $\theta( l ) \gets \theta( l ) \lor \vdiff$\;
    $\worklist( l ) \gets \worklist( l ) \lor \vdiff$
  }
}

\Fn{\FDoIntraprocedural{$V$, $\Bops$, $\lwork$, $\vwork$}}{
  \For{$( \lwork, R, l_2 ) \in \Bops$}{
    $s_2 \gets \bddexists( \vwork \land R^*, V' )$\;
    $s_2 \gets s_2[ V'' / V' ]$\;
    \FUpdateReach{$l_2$, $s_2$}\;
  }
}

\Fn{\FDoProcedureSummary{$V$, $\LV$, $\Bsum$, $\Bcalls$, $\lwork$, $\vwork$}}{
  \For{$( \lin, \lwork ) \in \Bsum$}{
    $\vsum \gets \bddexists( \vwork, \LV \cup \LV' ) \land \neg \sigma( \lin )$\;
    \lIf{$\vsum = \bot$}{$\textsf{continue}$}
    $\sigma( \lin ) \gets \sigma( \lin ) \lor \vsum$\;
    \For{$(\lcall, \lin, \lret ) \in \Bcalls$}{
      $X \gets \vsum^* \land \keep( \LV' )$ \;
      $s \gets \bddexists( \theta( \lcall ) \land X, V' )[ V'' / V' ]$\;
      \FUpdateReach{$\lret$, $s$}\;
    }
  }
}

\Fn{\FDoProcedures{$V$, $\LV$, $\Bsum$, $\Bcalls$, $\lwork$, $\vwork$}}{
  \For{$( \lwork, \lin, \lret ) \in \Bcalls$}{
    $\vin \gets \bddexists( \vwork, V \cup \LV' )[ V' / V ]$\;
    \FUpdateReach{$\lin$, $\vin$}\;
    $X \gets \sigma( \lin )^* \land \keep( \LV' )$\;
    $s \gets \bddexists( \vwork \land X, V' )[ V'' / V' ]$\;
    \FUpdateReach{$\lret$, $s$}\;
  }
}

\Fn{\FInitBoolReach{$\Locs$, $\Bsum$}}{
  \lFor{$l \in \Locs$}{
    $\theta( l ) \gets \bot$
  }
  \lFor{$( \lin, \lout ) \in \Bsum$}{
    $\sigma( \lin ) \gets \bot$
  }
  $\theta( \texttt{main} ) \gets \top$;
  $\worklist( \texttt{main} ) \gets \top$\;
}

\Fn{\FComputeBoolReach{$\cP$}}{
  $( \Locs, \GV, \LV, ( N, C, \varnothing, \varnothing, P ) ) \gets \cP$\;
  $V \gets \GV \cup \LV$\;
  \FInitBoolReach{$\Locs$, $P$}\;
  \While{$\exists \;\lwork \in \Locs \cdot \worklist( \lwork ) \ne \bot$}{
    $\vwork \gets \worklist( \lwork )$ ;
    $\worklist( \lwork ) \gets \bot$\;
    \FDoIntraprocedural{$V$, $N$, $\lwork$, $\vwork$}\;
    \FDoProcedures{$V$, $\LV$, $P$, $C$, $\lwork$, $\vwork$}\;
    \FDoProcedureSummary{$V$, $\LV$, $P$, $C$, $\lwork$, $\vwork$}\;
  }
}

  \end{multicols}
  \normalsize
  \vspace{0.5em}
\end{algorithm}

\section{IPS-MP: Problem Definition}
\label{Sect:Problem}

\newcommand{\Partial}{\textsf{\textsf{Partial}}}
This section defines partial predicates and the \ipsmp problem.
A \emph{partial predicate} is a pure Boolean function without an implementation.
A program $\cP$ is \emph{open} if it contains a partial predicate $p$.
An implementation for $p$ is a Boolean expression $e$ over the arguments of $p$.
The program obtained by implementing $p$ as $\code{return}\; e$ is denoted $\cP[ p \leftarrow e ]$.
The set of all partial predicates in $\cP$ is written $\Partial( \cP ) = \{ p_1, \ldots, p_k \}$.
Given a function $\Pi$ from $\Partial( \cP )$ to pure Boolean expressions, we write $\cP[ \Pi ]$ to denote \mbox{$\cP[ p_1 \leftarrow \Pi( p_1 ) ] \cdots [ p_k \leftarrow \Pi( p_k ) ]$}.
The \ipsmp problem is to find a $\Pi$ such that $\cP[\Pi]$ is correct.

\begin{figure}[t]
  \begin{subfigure}{0.48\textwidth}
    \lstinputlisting[style=ipsmp]{code/sample.weakest.c}
    \caption{The program $\cP[\Pi_{\mathit{weak}}]$.}
    \label{Fig:Synth:Weakest}
  \end{subfigure}
  \hfill
  \begin{subfigure}{0.48\textwidth}
    \lstinputlisting[style=ipsmp]{code/sample.strongest.c}
    \caption{The program $\cP[\Pi_{\mathit{strong}}]$.}
    \label{Fig:Synth:Strongest}
  \end{subfigure}
  \caption{Implementations of the simple program in \cref{Fig:Overview:Sample}.}
\end{figure}

\begin{example}
  Recall program $\cP$ from \cref{Fig:Overview:Sample}.
  Since \code{Post} is unimplemented in $\cP$, then $\cP$ is an open program.
  Formally, $\Partial( \cP ) = \{ \code{Post} \}$.
  In \cref{Sect:Overview}, two implementations were proposed for \code{Post}, namely $( x = y )$ and $( y > 0 \land x = y )$.
  These implementations are represented by the mappings $\Pi_{\mathit{weak}}$ and $\Pi_{\mathit{strong}}$ from $\Partial( \cP )$ to pure Boolean expressions such that $\Pi_{\mathit{weak}}: \code{Post} \mapsto ( x = y )$ and $\Pi_{\mathit{strong}}: \code{Post} \mapsto ( y > 0 \land x = y )$.
  The closed programs $\cP[\Pi_{\mathit{weak}}]$ and $\cP[\Pi_{\mathit{strong}}]$ are illustrated in \cref{Fig:Synth:Weakest} and \cref{Fig:Synth:Strongest}, respectively.
  \qed
\end{example}

\begin{definition}
  \label{Def:IPSMP}
  An \emph{\ipsmpfull~(\ipsmp)} problem is a tuple $( \cP, \cT, \Pi_0 )$ such that $\cP \in \Programs( \Sigma, \cV )$ with first-order signature $\Sigma$ and variable set $\cV$, $\cT$ is a first-order theory, and $\Pi_0: \Partial( \cP ) \rightarrow \QFFml( \Sigma, \cV )$ are predicate templates.
  A \emph{solution} to $( \cP, \cT, \Pi_0 )$ is a function $\Pi: \Partial( \cP ) \rightarrow \QFFml( \Sigma, \cV )$ such that $\cP[ \Pi ]$ is correct relative to $\cT$ and $\forall p \in \Partial( \cP ) \,\cdot \models_{\cT} \Pi_0( p ) \implies \Pi( p )$.
\end{definition}

Assume that $( \cP, \cT, \Pi_0 )$ is an \ipsmp problem with a solution $\Pi$.
With respect to the \ipsmp overview in \cref{Fig:Overview:Pipeline}, $\cP$ is a \emph{program with specifications}, $\Pi_0$ is a collection of \emph{predicate templates}, and $\Pi$ is an \emph{implementation of partial predicates}.
The \emph{witness to unrealizability} is discussed in \cref{Sect:Decidability}.
As an example of \cref{Def:IPSMP}, \cref{Fig:Overview:ArraySoln} is restated as a formal \ipsmp problem.

\begin{example}
  This example restates \cref{Fig:Overview:ArraySoln} as an \ipsmp problem $( \cP, \Pi_0, \cT )$.
  The program $\cP$ is given by \linerange{Line:Overview:ArraySoln:MainIn}{Line:Overview:ArraySoln:MainOut} of \cref{Fig:Overview:ArraySoln}.
  Then $\Partial( \cP ) = \{ \code{Inv3}, \code{Inv4} \}$, since \code{Inv3} and \code{Inv4} are called on \linerange{Line:Overview:ArraySoln:Inv3Called}{Line:Overview:ArraySoln:Inv4Called}, but lack full implementations.
  From \linerange{Line:Overview:ArraySoln:Pfn1}{Line:Overview:ArraySoln:Pfn2End}, $\Pi_0( \code{Inv3} ) = \Pi_0( \code{Inv4} ) = ( m = 0 \land v = 0 )$.
  Now, recall from \cref{Sect:Overview:Array} that all variables in \cref{Fig:Overview:ArraySoln} are arithmetic integers.
  Therefore, $\cT$ is the theory of integer linear arithmetic.
  A solution to $( \cP, \Pi_0, \cT )$ is $\Pi$ such that $\Pi( \code{Inv3} ) = ( v = 0 )$ and $\Pi( \code{Inv4} ) = ( 0 \le v \land v \le m )$.
  \qed
\end{example}

\section{Decidability of IPS-MP}
\label{Sect:Decidability}

This section considers the decidability of \ipsmp.
\cref{Sect:Decidability:Bool} shows that \ipsmp is efficiently decidable in the Boolean case.
\cref{Sect:Decidability:General} shows that \ipsmp is undecidable in general, but admits sound proof-rules for realizability and unrealizabiliy.

\subsection{The Case of Boolean Programs}
\label{Sect:Decidability:Bool}

This section shows that for Boolean programs, \ipsmp is decidable with the same time complexity as problem verification (i.e.,~polynomial in the number of program states).
In contrast, general synthesis is known to have exponential time complexity in the Boolean case~\cite{Vardi2008}.
Therefore, \ipsmp modulo Boolean programs does in fact offer the benefits of general synthesis without the associated costs.
To prove this result, we first extend Boolean program summaries (\cref{Def:Summary}) to programs with partial predicates.
These new summaries are then used to extract solutions to \ipsmp (or witnesses to unrealizability).
\code{Analyze} of \cref{Alg:BoolSynth} extends on \cref{Alg:BebopReach} to compute these new summaries.
The total correctness and time complexity of \code{Analyze} are proven in \cref{Cor:BoolCorrectness} and \cref{Thm:BoolComplexity}, respectively.

To simplify our presentation, we assume that all predicates are partial.
In a Boolean program, each partial predicate has an entry location, but no edges nor exit location.
This means that a standard summary can be obtained for a Boolean program with partial predicates by discarding all calls to partial predicates.
Such a summary characterizes reachability, under the assumption that partial predicates are never called.
From this summary, the arguments passed to each partial predicate under assert can be collected.
For the program summary to be correct, the partial predicates must return true on these asserted arguments.
If the partial predicate returns true on these asserted arguments, then for any call under assume using the same arguments, the program execution must continue to the next state.
This procedure can then be repeated until a fixed point is obtained.
This new \emph{partial program summary} is defined formally in \cref{Def:PartialSummary}.
\begin{definition}
  \label{Def:PartialSummary}
  Let $\cP =  ( \Locs, \GV, \LV, ( \Bops, \Bcalls, \Bassume, \Bassert, \Bsum ) )$ be a Boolean program.
  A \emph{partial program summary} for $\cP$ is a tuple $( \theta, \sigma, \Pi )$ such that:
  \begin{enumerate}
  \item $\Pi: \Partial( \cP ) \rightarrow \QFFml( \SigmaBool, \GV )$;
  \item $( \theta, \sigma )$ is a program summary for $\left( \Locs, \GV, \LV, \left( \Bops, \Bcalls, \varnothing, \varnothing, \Bsum \right) \right)$;
  \item $\forall ( \lcall, \lin, \lret ) \in \Bassert \cdot \theta( \lcall ) \implies \Pi'( \lin )$;
  \item $\forall ( \lcall, \lin, \lret ) \in \Bassert \cdot \theta( \lcall ) \implies \theta( \lret )$;
  \item $\forall ( \lcall, \lin, \lret ) \in \Bassume \cdot \theta( \lcall ) \land \Pi'( \lin ) \implies \theta( \lret )$.
  \end{enumerate}
\end{definition}

The rules of \cref{Def:PartialSummary} follow directly from the preceding discussion.
Rule~2 ensures that $( \theta, \sigma )$ is a program summary for $\cP$ after discarding all calls to partial predicates.
Rules3~and~4 collect the arguments passed to partial predicates under assert.
Rule~5 advances the program state from calls to partial predicates under assume, according to the collected arguments.
These steps are made operational by \code{Analyze} of \cref{Alg:BoolSynth}.
Note that \code{Analyze} does not call \code{ComputeBoolReach} directly, and instead applies all rules in a single loop.

\begin{algorithm}[t]
  \caption{An extension of \cref{Alg:BebopReach} to solve \ipsmp for Boolean programs.}\label{Alg:BoolSynth}
  \footnotesize
  \setlength{\columnsep}{0.65cm}
  \begin{multicols}{2}
  \SetKwFunction{FInit}{Init}
\SetKwFunction{FDoIntraprocedural}{DoIntraproc}
\SetKwFunction{FDoProcedures}{DoProcs}
\SetKwFunction{FDoAssumptions}{DoAssumes}
\SetKwFunction{FDoAssertions}{DoAsserts}
\SetKwFunction{FDoFunctionSummary}{DoFuncSum}
\SetKwFunction{FDoProcedureSummary}{DoProcSum}
\SetKwFunction{FAnalyze}{Analyze}
\SetKwFunction{FBoolSynth}{BoolSynth}
\SetKwProg{Fn}{Func}{:}{}

$\var\;( \theta, \sigma, \Pi )$
\tcp*{A partial program summary}
$\var\; \worklist$
\tcp*{A map from $\Locs$ to queued states}

\Fn{\FDoAssumptions{$V$, $\LV$, $\Bsum$, $\Bassume$, $\lwork$, $\vwork$}}{
  \For{$( \lwork, \lin, \lret ) \in \Bassume$}{
    $\vin \gets \bddexists( \vwork, V \cup \LV' )[ V' / V ]$\;
    \FUpdateReach{$\lret$, $\Pi( \lin ) \land \vin$}\;
  }
}

\Fn{\FDoAssertions{$V$, $\LV$, $\Bsum$, $\Bassert$, $\lwork$, $\vwork$}}{
  \For{$( \lwork, \lin, \lret ) \in \Bassert$}{
    \FUpdateReach{$\lret$, $\Pi( \lin ) \land \vwork$} \;
    \FUpdateReach{$\lin$, $\bddexists( \vwork, V \cup \LV' )[ V' / V ]$}\;
  }
}

\Fn{\FDoFunctionSummary{$V$, $\LV$, $\Bassume$, $\Bassert$, $\lwork$, $\vwork$}}{
  \uIf{$\lwork \in \Partial( \cP )$}{
    $\Pi( \lwork ) \gets \Pi( \lwork ) \lor \vwork$\;
    \For{$(\lcall, \lwork, \lret ) \in \Bassume \cup \Bassert$}{
      \FUpdateReach{$\lret$, $\theta( \lcall ) \land \vwork$}
   }
  }
}

\Fn{\FInit{$\Locs$, $\Bsum$, $\Pi_0$}}{
  \FInitBoolReach{$\Locs$, $\Bsum$}\;
  \lFor{$l \in \Partial( \cP )$}{
    $\Pi( l ) \gets \Pi_0( l )$\label{Line:BoolSynth:InitPi}
  }
}

\Fn{\FAnalyze{$\cP$, $\Pi_0$}}{
  $( \Locs, \GV, \LV, ( N, C, F, A, P ) ) \gets \cP$\;
  $V \gets \GV \cup \LV$ ;
  \FInit{$\Locs$, P, $\Pi_0$}\;
  \label{Line:BoolSynth:InitAll}
  \While{$\exists \;\lwork \in \Locs \cdot \worklist( \lwork ) \ne \bot$}{ \label{Line:BoolSynth:Loop}
    $\vwork \gets \worklist( \lwork )$ ;
    $\worklist( \lwork ) \gets \bot$\;
    \FDoIntraprocedural{$V$, N, $\lwork$, $\vwork$}\;
    \FDoProcedures{$V$, $\LV$, P, C, $\lwork$, $\vwork$}\;
    \FDoAssumptions{$V$, $\LV$, P, F, $\lwork$, $\vwork$}\;
    \FDoAssertions{$V$, $\LV$, P, A, $\lwork$, $\vwork$}\;
    \FDoProcedureSummary{$V$, $\LV$, P, C, $\lwork$, $\vwork$}\;
    \FDoFunctionSummary{$V$, $\LV$, F, A, $\lwork$, $\vwork$};
  }
  \label{Line:BoolSynth:FixedPoint}
}

\Fn{\FBoolSynth{$\cP$, $\Pi_0$}}{
  \FAnalyze{$\cP$, $\Pi_0$}\;
  \label{Line:BoolSynth:CallAnalyze}
  \lIf{$\theta( \lbot ) = \bot$}{
    return $( \checkmark, \Pi )$
  }
  \label{Line:BoolSynth:Decide}
  \lElse{
    return $( \times, \Pi )$
  }
}

  \end{multicols}
  \normalsize
  \vspace{0.5em}
\end{algorithm}

The termination of \code{Analyze} follows analogously to \code{ComputeBoolReach}.
First, note that \code{Analyze} terminates if all work items have been processed.
Each iteration of the loop at \auxlineref{Line:BoolSynth:Loop} processes at least one work item.
A state is added to the work list only if it has not yet been visited.
The number of states is finite, since Boolean programs are finite-state.
Therefore, \code{Analyze} must terminate with time polynomial in the number of program states.
This is in contrast to general synthesis, which requires time exponential in the number of program states~\cite{Vardi2008}.

\begin{theorem}
  \label{Thm:BoolComplexity}
  Let $\cP = \left( \Locs, \GV, \LV, E \right)$ with $E = \left( \Bops, \Bcalls, \Bassume, \Bassert, \Bsum \right)$ be a Boolean program.
  Then for each input $( \cP, \Pi_0 )$, \code{Analyze} of \cref{Alg:BoolSynth} terminates in $O( n^2 m \cdot |\Locs| )$ symbolic Boolean operations where $n = 2^{|\GV \cup \LV|}$ is the number of variable assignments and $m = \cdot |\Bops \cup \Bcalls \cup \Bassume \cup \Bassert|$ is the number of edges.
\end{theorem}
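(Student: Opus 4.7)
The plan is to bound the number of iterations of the outer while-loop at \auxlineref{Line:BoolSynth:Loop} and the symbolic work performed per iteration, then multiply.

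For the iteration count, I would argue that the only way a work item is ever enqueued is through \code{UpdateReach}, which commits an update only when $\vdiff = v \land \lnot\, \theta(l) \ne \bot$, and in that case strictly extends $\theta(l)$. Since each $\theta(l)$ lies in $\QFFml(\SigmaBool, V \cup V')$ and hence represents a subset of at most $n^2$ concrete assignments (where $n = 2^{|V|}$ and $|V| = |\GV \cup \LV|$), each location admits at most $n^2$ strict $\theta$-extensions. Summing over $|\Locs|$ locations bounds the total number of successful \code{UpdateReach} calls by $n^2 \cdot |\Locs|$. Each main-loop iteration drains a single location's worklist, so the number of iterations is $O(n^2 \cdot |\Locs|)$; the same potential simultaneously witnesses termination, since every worklist is eventually drained and no new item can be enqueued.

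For per-iteration cost, each of \code{DoIntraproc}, \code{DoProcs}, \code{DoAssumes}, \code{DoAsserts}, \code{DoProcSum}, and \code{DoFuncSum} loops over the subset of $\Bops \cup \Bcalls \cup \Bassume \cup \Bassert \cup \Bsum$ incident to $\lwork$ in a particular role (source, entry, exit, or return). In the worst case this is $O(m + |\Bsum|) = O(m + |\Locs|)$ edges, each of which triggers a constant number of symbolic operations ($\land$, $\lor$, $\lnot$, $\bddexists$, and variable renaming). Note that \code{DoProcSum} nests an inner loop over callers, but that work is also bounded by $|\Bcalls| \le m$. Multiplying the $O(n^2 \cdot |\Locs|)$ iteration bound by $O(m)$ per-iteration work yields the claimed $O(n^2 m \cdot |\Locs|)$ symbolic Boolean operations.

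The delicate point, which I expect to be the main obstacle, is justifying that updates to $\Pi$ in \code{DoFuncSum} do not spawn iterations beyond the $\theta$-budget. Inspecting the code, \code{DoFuncSum}'s only side effect that could enqueue new work is a call to \code{UpdateReach}$(\lret, \theta(\lcall) \land \vwork)$, which itself enqueues only when $\theta(\lret)$ strictly grows. A careful chase through all callers of \code{UpdateReach}, together with the observation that $\Pi$-growth propagates into the worklist solely through the $\theta$-potential, is needed to close the accounting and confirm that the $\theta$-bound really does dominate.
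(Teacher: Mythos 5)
Your proof is correct and follows essentially the same route as the paper's: the paper likewise bounds the outer loop at \auxlineref{Line:BoolSynth:Loop} by $O( n^2 \cdot |\Locs| )$ iterations (each location paired with one of the $n^2$ assignments to $V \cup V'$ is ``visited at most once,'' which is exactly your monotone-$\theta$ potential, and your explicit check that $\Pi$-updates in \code{DoFuncSum} feed back only through \code{UpdateReach} is a point the paper leaves implicit), and then charges $O(m)$ symbolic operations to the six handlers per iteration. The one loose end is that you count $O( m + |\Bsum| )$ edges per iteration but then multiply by $O(m)$ without justification; the paper closes this by observing that each procedure has a single exit location, so \code{DoProcSum} matches at most one edge of $\Bsum$ per iteration (the scan over $\Bsum$ costs no symbolic Boolean operations) and hence contributes only $O( |\Bcalls| )$ operations, keeping the per-iteration cost at $O(m)$.
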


The correctness of \code{BoolSynth} follows from the correctness of \code{ComputeBoolReach} in~\cite{ChakiGurfinkel2018}.
\cref{Thm:Analyze} proves that \code{Analyze} extends \code{ComputeBoolReach} to obtain a least partial program summary.
\cref{Cor:BoolCorrectness} proves that an \ipsmp solution (or a witness to unrealizability) can be extracted from a least partial program summary.
Since \code{Analyze} terminates, this is a decision procedure for the Boolean case of \ipsmp.

\begin{theorem}
  \label{Thm:Analyze}
  Let $\cP = \left( \Locs, \GV, \LV, E \right)$ be a Boolean program and $\Pi_0$ be a collection of predicate templates for $\cP$.
  \code{Analyze} of \cref{Alg:BoolSynth} computes a least partial program summary, $( \theta, \sigma, \Pi )$, for $\cP$ such that $\forall p \in \Partial( \cP ) \cdot \Pi_0( p ) \implies \Pi( p )$.
\end{theorem}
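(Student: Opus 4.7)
The plan is to argue in three stages: termination, soundness, and leastness. Termination follows from \cref{Thm:BoolComplexity}, so I would take it as given.

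For soundness, I would establish the loop invariant at \auxlineref{Line:BoolSynth:Loop} that the current triple $(\theta, \sigma, \Pi)$ satisfies each clause of \cref{Def:PartialSummary}, modulo the pending contents of $\worklist$: for every instance of a rule that is not yet discharged, some witnessing state is queued in $\worklist$ at the relevant location. When the loop exits every $\worklist(l) = \bot$, so the invariant specialises to the conditions of \cref{Def:PartialSummary}. Clause 2 is preserved by the shared helpers \code{DoIntraproc}, \code{DoProcs}, and \code{DoProcSum}, whose correctness for \cref{Alg:BebopReach} is established in~\cite{ChakiGurfinkel2018}; enlarging $\theta$ via the assume/assert paths does not break clause 2 because the clauses are monotone implications. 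Clauses 3 and 4 follow from \code{DoAsserts}: when $\vwork$ is popped off $\worklist(\lcall)$ for an assert edge $(\lcall, \lin, \lret)$, the routine enqueues $\bddexists(\vwork, V \cup \LV')[V'/V]$ on $\lin$ (so that \code{DoFuncSum} later absorbs it into $\Pi(\lin)$, securing clause 3) and $\Pi(\lin) \wedge \vwork$ on $\lret$ (securing clause 4). Clause 5 is dual, delivered by \code{DoAssumes}. The template side is straightforward: \auxlineref{Line:BoolSynth:InitPi} seeds $\Pi \gets \Pi_0$, and every subsequent write to $\Pi$ is a monotone disjunction, so $\Pi_0(p) \implies \Pi(p)$ is preserved throughout.

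For leastness, I would fix any partial program summary $(\theta^{\star}, \sigma^{\star}, \Pi^{\star})$ with $\Pi_0 \implies \Pi^{\star}$ and prove by induction on the iteration count that $\theta \implies \theta^{\star}$, $\sigma \implies \sigma^{\star}$, and $\Pi \implies \Pi^{\star}$ hold pointwise. Initialization is immediate: $\theta(\texttt{main}) = \top$ is forced by clause 2, all other $\theta$ and $\sigma$ components start at $\bot$, and $\Pi \gets \Pi_0 \implies \Pi^{\star}$ by hypothesis. Each write performed by \code{UpdateReach}, \code{DoAssumes}, \code{DoAsserts}, or \code{DoFuncSum} mirrors exactly one rule of \cref{Def:PartialSummary} applied to the current (inductively smaller) values, so every added state is already forced to lie in the corresponding component of $(\theta^{\star}, \sigma^{\star}, \Pi^{\star})$.

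The main obstacle will be handling the mutual dependency between $\theta$ and $\Pi$ introduced by the assume/assert edges: $\theta(\lret)$ at an assume depends on $\Pi(\lin)$, while $\Pi(\lin)$ at an assert depends on $\theta(\lcall)$. Ordinary stratification is unavailable, so the key technical point is that whenever \code{DoFuncSum} enlarges $\Pi(\lin)$ by some $\vwork$, it simultaneously re-enqueues $\theta(\lcall) \wedge \vwork$ at $\lret$ for each edge $(\lcall, \lin, \lret) \in \Bassume \cup \Bassert$, so no assume-propagation is missed after $\Pi$ grows. Combined with the monotonicity of every update and the finiteness of the state space, this closes the mutual recursion and ensures that soundness and leastness meet at the same simultaneous fixed point.
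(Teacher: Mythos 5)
Your proposal is correct and follows essentially the same route as the paper's proof: the paper likewise argues that \code{DoAssumes}, \code{DoAsserts}, and \code{DoFuncSum} each encode the corresponding rules of \cref{Def:PartialSummary}, inherits the treatment of rule~2 from the correctness of \code{ComputeBoolReach}, and concludes that the worklist loop computes the least fixed point above the initial assignment $\Pi \gets \Pi_0$. Your write-up is in fact more explicit than the paper's---particularly the loop invariant ``rules hold modulo pending worklist items'' and the observation that \code{DoFuncSum}'s re-enqueueing of $\theta(\lcall) \land \vwork$ is what closes the mutual dependency between $\theta$ and $\Pi$---but these are elaborations of the same fixed-point argument, not a different proof.
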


\begin{corollary}
  \label{Cor:BoolCorrectness}
  \code{BoolSynth} of \cref{Alg:BoolSynth} decides \ipsmp for Boolean programs.
\end{corollary}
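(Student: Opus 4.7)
The plan is to derive the corollary as a direct consequence of \cref{Thm:BoolComplexity} and \cref{Thm:Analyze}, together with a semantic lemma linking the least partial program summary to the standard semantics of $\cP[\Pi]$. Termination of \code{BoolSynth} is immediate from \cref{Thm:BoolComplexity}, since \code{BoolSynth} only invokes \code{Analyze} (\auxlineref{Line:BoolSynth:CallAnalyze}) and then performs a single test on $\theta( \lbot )$ (\auxlineref{Line:BoolSynth:Decide}). Hence the remaining work is to show that the decision at \auxlineref{Line:BoolSynth:Decide} is sound in both directions, and that the returned $\Pi$ is either a valid solution or a genuine witness to unrealizability.

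For the realizability direction, suppose \code{Analyze} terminates with $\theta( \lbot ) = \bot$. By \cref{Thm:Analyze}, $( \theta, \sigma, \Pi )$ is a least partial program summary with $\Pi_0( p ) \implies \Pi( p )$, so the template constraint of \cref{Def:IPSMP} is satisfied by $\Pi$. It remains to show that $\cP[\Pi]$ satisfies all of its assertions. I would prove a short auxiliary lemma stating that if $( \theta, \sigma, \Pi )$ is a partial program summary for $\cP$, then $\theta$ over-approximates the concrete reachable state at each location in $\cP[\Pi]$ under standard Boolean-program semantics. This lemma proceeds by induction on the length of an execution of $\cP[\Pi]$: normal, call, summary, and procedure-entry transitions are handled by rules 2--4 of \cref{Def:Summary}, while assume/assert transitions from $\Bassume$ and $\Bassert$ are handled by rules 3--5 of \cref{Def:PartialSummary} using the fact that $\Pi$ is exactly the interpretation installed in $\cP[\Pi]$. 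Since $\lbot$ is only entered when an \code{assert} fails, $\theta( \lbot ) = \bot$ then implies that no assertion fails in $\cP[\Pi]$, so $\cP[\Pi]$ is correct and $\Pi$ is a solution.

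For the unrealizability direction, suppose $\theta( \lbot ) \ne \bot$, and assume toward contradiction that some $\Pi^{*}$ solves the instance. Let $\theta^{*}$ be the concrete reachability summary induced by $\Pi^{*}$; then $( \theta^{*}, \sigma^{*}, \Pi^{*} )$ is a partial program summary for $\cP$ with $\Pi_0( p ) \implies \Pi^{*}( p )$. By the \emph{least} clause of \cref{Thm:Analyze}, the computed $\Pi$ and $\theta$ are pointwise implied by $\Pi^{*}$ and $\theta^{*}$; in particular $\theta( \lbot ) \implies \theta^{*}( \lbot )$, so $\theta^{*}( \lbot ) \ne \bot$. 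This contradicts correctness of $\cP[\Pi^{*}]$, hence no solution exists. The returned $\Pi$ is then a witness to unrealizability in the sense of \cref{Sect:Problem}: by rules 3--4 of \cref{Def:PartialSummary} it is necessitated by the assertions placed on each partial predicate, and by $\theta( \lbot ) \ne \bot$ it is sufficient to make a specification assertion reachable.

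The main obstacle, and the only piece that goes beyond bookkeeping, is the auxiliary over-approximation lemma relating the axiomatic summary of \cref{Def:PartialSummary} to concrete executions of $\cP[\Pi]$. The subtlety is that the partial predicates have no implementation edges in $\cP$, so one must argue that installing $\Pi$ as the body of each predicate precisely matches the behavior enforced by rules 3--5, in particular that asserts propagate $\theta$ even when $\Pi$ is stronger than required by a given call site, and that assumes do not create spurious reachability beyond $\Pi'( \lin ) \land \theta( \lcall )$. Once this lemma is in place, the two cases above close the proof and, combined with termination, establish that \code{BoolSynth} is a decision procedure for IPS-MP over Boolean programs.
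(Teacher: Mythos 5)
Your proposal is correct and follows essentially the same route as the paper: termination from \cref{Thm:BoolComplexity}, leastness and the template constraint from \cref{Thm:Analyze}, and the equivalence between correctness of $\cP[\Pi]$ and $\theta(\lbot) = \bot$ (which the paper obtains by citing the summary-soundness result of \cite{AlurBouajjani2018} rather than reproving your over-approximation lemma). The only substantive difference is that you make explicit the leastness argument for the unrealizability direction, which the paper compresses into the single sentence ``$\Pi$ solves the \ipsmp instance if and only if $\cP$ is safe''.
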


\subsection{The General Case}
\label{Sect:Decidability:General}

\newcommand{\SynthToCHC}{\textsf{CHCSynth}}
This section presents sound proof-rules for the realizability and unrealizability of \ipsmp problems.
These rules are shown to be instances of CHC-solving.
To justify the reduction from \ipsmp to this undecidable problem, the general case of \ipsmp is also shown to be undecidable.
First, assume that $( \cP, \cT, \Pi_0 )$ is an \ipsmp problem.
Recall that $\cP \in \Programs( \mathcal{F}, \mathcal{V} )$ where $\mathcal{F}$ is the FO-fragment of pure program expressions.
A logical encoding of $( \cP, \cT, \Pi_0 )$ is given by:
{\par\scriptsize\begin{equation*}
  \SynthToCHC( \cP, \Pi_0 ) :=
    \ToCHC( \cP ) \land \left(
      \bigwedge_{p \in \Partial( \cP )} \forall \vec{x} \cdot \left( \Pi_0( p ) \implies p( \vec{x} ) \right)
    \right)
\end{equation*}}%
The term $\ToCHC( \cP )$ encodes verification conditions for $\cP$, in which each partial predicate is unspecified.
Calls to a partial predicate $p$, under assume and assert, provide constraints on the strongest and weakest possible solutions to $\SynthToCHC( \cP, \Pi_0 )$.
The clause $\forall \vec{x} \cdot \left( \Pi_0( p ) \implies p( \vec{x} ) \right)$ then ensures the strongest solution to $p$ subsumes $\Pi_0( p )$.
Then a solution $\sigma$ to $\SynthToCHC( \cP, \Pi )$ contains an implementation $\sigma( p )$ for each partial predicate $p$, that subsumes $\Pi_0( p )$ and ensures the correctness of $\cP$ (\cref{Thm:Realizability}).
Furthermore, if $\sigma$ is an $\mathcal{F}$-solution, then each  $\sigma( p )$ can be implemented in the programming language.
On the other hand, if $\SynthToCHC( \cP, \Pi_0 )$ is unsatisfiable, then for every choice of implementation $\Pi$ satisfying $\Pi_0$, the closed program $\cP[\Pi]$ is incorrect (\cref{Thm:Unrealizability}).
Together, these theorems give sound proof rules for the realizability and unrealizability of $( \cP, \cT, \Pi_0 )$.
In practice, $\mathcal{F}$ is chosen to be the same fragment used by the CHC-solver.

\newcommand{\chcbound}{\textsf{CHCBnd}}
\begin{theorem}
  \label{Thm:Realizability}
  Let $\Sigma$ be a first-order signature, $\cV$ be a set of variable symbols, $\cF = \QFFml( \Sigma, \cV )$, $\cP \in \Programs( \Sigma, \cV )$, and $( \cP, \cT, \Pi_0 )$ be an \ipsmp problem.
  If $\sigma$ is an $\cF$-solution to $\SynthToCHC( \cP, \Pi_0 )$ relative to $\cT$, then $\Pi: \Partial( \cP ) \rightarrow \cF$ such that $\Pi: p \mapsto \sigma( p )$ is a solution to $( \cP, \cT, \Pi_0 )$.
\end{theorem}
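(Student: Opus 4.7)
My plan is to verify both conditions of \cref{Def:IPSMP} for $\Pi$ directly from the structure of $\SynthToCHC(\cP, \Pi_0)$. The template containment is essentially immediate: since $\sigma$ is an $\cF$-solution and in particular satisfies $\forall \vec{x} \cdot (\Pi_0(p) \implies p(\vec{x}))$ for every $p \in \Partial(\cP)$, and since $\sigma$ interprets each such $p$ by the formula $\Pi(p) := \sigma(p) \in \cF$, we obtain $\models_{\cT} \Pi_0(p) \implies \Pi(p)$, which is the template clause of \cref{Def:IPSMP}.

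The substantive obligation is to show that $\cP[\Pi]$ is correct relative to $\cT$. The strategy is to invoke the characterization behind \cref{Fig:Background:ToCHC}: it suffices to exhibit a $\cT$-model $\sigma'$ of $\ToCHC(\cP[\Pi])$, after which partial correctness of $\cP[\Pi]$ follows from the standard result that satisfiability of $\ToCHC$ yields loop invariants witnessing correctness. I would define $\sigma'$ to inherit from $\sigma$ the interpretation of every uninterpreted predicate appearing in $\ToCHC(\cP)$: the loop invariants $\linv_{ln}$, the pre- and summary-predicates $f_{pre}, f_{sum}$ of each procedure $f \in \Procs(\cP)$, and the now-implemented partial predicates $p \in \Partial(\cP)$ themselves.

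The verification then reduces to comparing $\ToCHC(\cP[\Pi])$ with $\ToCHC(\cP)$ rule by rule. Because partial predicates occur in $\cP$ only as atoms $p(\vec{t})$ inside assume and assert statements, the WLP rules from the appendix produce precisely the atom $p(\vec{t})$ regardless of whether $p$ has a body. Consequently, every clause generated from \code{main} and from each non-partial procedure of $\cP[\Pi]$ coincides syntactically with the corresponding clause for $\cP$, and holds under $\sigma'$ because $\sigma \models \ToCHC(\cP)$. The only genuinely new content contributed by $\ToCHC(\cP[\Pi])$ is, for each newly implemented partial predicate $p$, the defining axiom $\forall \vec{x} \cdot p(\vec{x}) \iff \Pi(p)$ supplied by the predicate rule of \cref{Fig:Background:ToCHC}; this holds under $\sigma'$ by construction, since $\sigma'(p) = \sigma(p) = \Pi(p)$. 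Hence $\sigma' \models \ToCHC(\cP[\Pi])$, and correctness of $\cP[\Pi]$ follows.

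I expect the main obstacle to be this syntactic invariance step. One must carefully inspect the WLP transformer for assumptions and assertions of partial predicates to confirm that it depends only on the atom $p(\vec{t})$, so that substituting $\Pi(p)$ for the body of $p$ leaves every non-defining clause of $\ToCHC$ unchanged. Once this is established, the remaining pieces---the template containment, the inheritance of interpretations from $\sigma$, and the appeal to the Björner--Gurfinkel characterization---compose routinely.
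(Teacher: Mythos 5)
Your proof is correct and follows essentially the same route as the paper's: both reduce correctness of $\cP[\Pi]$ to satisfiability of $\ToCHC(\cP[\Pi])$ via \cref{Prop:Safety}, and both rest on the observation that $\sigma$ already satisfies each defining axiom $\forall \vec{x}\cdot p(\vec{x}) \iff \sigma(p)$ while every other clause of $\ToCHC$ is unchanged by implementing the predicates. The only difference is presentational: the paper packages your one-shot simultaneous substitution as an induction on $|\Partial(\cP)|$, closing one partial predicate per step.
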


\begin{theorem}
  \label{Thm:Unrealizability}
  If $( \cP, \cT, \Pi_0 )$ is an \ipsmp problem and $\SynthToCHC( \cP, \Pi_0 )$ is $\cT$-unsatisfiable, then $( \cP, \cT, \Pi_0 )$ is unrealizable.
\end{theorem}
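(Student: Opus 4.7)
\textbf{Proof plan for \cref{Thm:Unrealizability}.}
My plan is to establish the contrapositive: if $( \cP, \cT, \Pi_0 )$ is realizable, then $\SynthToCHC( \cP, \Pi_0 )$ is $\cT$-satisfiable. So I assume a solution $\Pi : \Partial( \cP ) \rightarrow \QFFml( \Sigma, \cV )$ exists and construct a $\cT$-model of $\SynthToCHC( \cP, \Pi_0 )$. The key move is to reuse $\Pi$ as the interpretation of the partial predicate symbols: for each $p \in \Partial( \cP )$, interpret the symbol $p$ by the expression $\Pi( p )$. The rest of the model will be obtained from $\cP[\Pi]$, which is the closed program we know to be correct.

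First I would handle the template conjuncts. By \cref{Def:IPSMP}, $\Pi$ satisfies $\models_{\cT} \Pi_0( p ) \implies \Pi( p )$ for every $p \in \Partial( \cP )$. Under the interpretation $p \mapsto \Pi( p )$, the clause $\forall \vec{x} \cdot ( \Pi_0( p ) \implies p( \vec{x} ) )$ therefore reduces to a $\cT$-valid sentence, which handles the second conjunct of $\SynthToCHC( \cP, \Pi_0 )$. Next I would handle $\ToCHC( \cP )$. Since $\cP[\Pi]$ is correct relative to $\cT$, the Bj{\o}rner--Gurfinkel characterization recalled in \cref{Sect:Background:WLP} gives a $\cT$-model $M$ of $\ToCHC( \cP[\Pi] )$. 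By the definition of $\ToCHC$ in \cref{Fig:Background:ToCHC}, the clauses generated for $\cP[\Pi]$ consist of exactly the clauses generated for $\cP$, plus, for each former partial predicate $p$, the additional defining clause $\forall \vec{x} \cdot ( p( \vec{x} ) \iff \Pi( p )( \vec{x} ) )$. Hence $M$ is also a model of $\ToCHC( \cP )$, and in $M$ the symbol $p$ must be interpreted extensionally as $\Pi( p )$.

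Finally, I would combine both pieces: since $M$ interprets each partial predicate symbol as $\Pi( p )$, the template clauses $\forall \vec{x} \cdot ( \Pi_0( p ) \implies p( \vec{x} ) )$ hold in $M$ by the first step, and the remaining clauses of $\ToCHC( \cP )$ hold in $M$ by the second step. Thus $M \models_{\cT} \SynthToCHC( \cP, \Pi_0 )$, contradicting the hypothesis that $\SynthToCHC( \cP, \Pi_0 )$ is $\cT$-unsatisfiable. The main obstacle I expect is the bookkeeping to argue cleanly that models of $\ToCHC( \cP[\Pi] )$ are models of $\ToCHC( \cP )$ with the prescribed interpretation of partial predicates; this amounts to a syntactic comparison of the two clause sets generated by \cref{Fig:Background:ToCHC} and an observation that the extra defining biconditionals in $\ToCHC( \cP[\Pi] )$ pin down the interpretation of partial symbols to be $\Pi( p )$, which is exactly what is needed to chain the two steps.
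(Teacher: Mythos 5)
Your proposal is correct and follows essentially the same route as the paper's proof: both obtain a $\cT$-model of $\ToCHC(\cP[\Pi])$ from the correctness of the closed program, observe that $\ToCHC(\cP[\Pi])$ is $\ToCHC(\cP)$ conjoined with the defining clauses $\forall \vec{x}\cdot(p(\vec{x}) \iff \Pi(p))$, and use $\models_{\cT} \Pi_0(p) \implies \Pi(p)$ to conclude the template clauses also hold, so the model satisfies $\SynthToCHC(\cP,\Pi_0)$. Your contrapositive phrasing is just the paper's proof by contradiction restated, and if anything your write-up is slightly cleaner (the paper's appendix misstates the direction of the template implication and cites the CHC-syntax proposition where it means the correctness characterization).
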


$\SynthToCHC( \cP, \Pi_0 )$ strengthens $\ToCHC( \cP )$ by adding additional CHCs.
Since $\ToCHC( \cP )$ is a conjunction of CHCs, then $\SynthToCHC( \cP, \Pi_0 )$ is also a conjunction of CHCs.
Therefore, a CHC solver can check the satisfiability and unsatisfiability of $\SynthToCHC( \cP, \Pi_0 )$.
As a result, a CHC solver can find a solution to $( \cP, \cT, \Pi_0 )$ (\cref{Thm:Realizability}), or prove that the problem is unrealizable (\cref{Thm:Unrealizability}).

\begin{theorem}
  \label{Thm:SynthCHC}
  $\SynthToCHC( \cP, \Pi_0 )$ is a CHC conjunction.
\end{theorem}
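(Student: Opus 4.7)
\textbf{Proof plan for \cref{Thm:SynthCHC}.} The plan is to exhibit $\SynthToCHC(\cP, \Pi_0)$ explicitly as a conjunction of sentences, each of which matches the CHC template $\forall V \cdot \varphi \land p_1(\vec{x}_1) \land \cdots \land p_k(\vec{x}_k) \implies h(\vec{y})$ from \cref{Sect:Background:FOL}. The argument splits naturally along the top-level conjunction: first handle the $\ToCHC(\cP)$ conjunct, then handle each of the template clauses $\forall \vec{x} \cdot (\Pi_0(p) \implies p(\vec{x}))$.

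For the first conjunct, I would simply cite the result of Bj\o rner and Gurfinkel recalled in \cref{Sect:Background:WLP}: $\ToCHC(\cP)$ is already known to lie in the CHC fragment. No further work is needed here beyond pointing out that the predicates occurring in $\ToCHC(\cP)$ are precisely the $f_{pre}$, $f_{sum}$, $\linv_{ln}$, and partial-predicate symbols $p$, all of which are taken as elements of the set $P$ underlying the CHC fragment.

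For each template clause, I would unfold \cref{Def:IPSMP} to recall that $\Pi_0(p) \in \QFFml(\Sigma, \cV)$ is a quantifier-free formula over the base signature $\Sigma$, and hence contains no atoms from $P$. The sentence $\forall \vec{x} \cdot (\Pi_0(p) \implies p(\vec{x}))$ therefore fits the CHC schema with $\varphi := \Pi_0(p)$, zero body predicates ($k = 0$), and head $h(\vec{y}) := p(\vec{x})$. Conjoining a CHC conjunction with finitely many additional CHCs (one per $p \in \Partial(\cP)$) yields again a CHC conjunction, which is exactly what the theorem asserts.

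I do not expect any real obstacle: the only subtlety is verifying that $\Pi_0(p)$ is indeed purely a constraint formula, with no predicate applications from $P$ hiding inside it. This is guaranteed by the typing $\Pi_0 : \Partial(\cP) \to \QFFml(\Sigma, \cV)$ in \cref{Def:IPSMP}, which restricts template bodies to the constraint signature $\Sigma$. Once that observation is made, the proof is essentially a one-line unfolding of definitions.
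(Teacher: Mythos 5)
Your proposal is correct and matches the paper's own argument: the paper likewise cites the fact that $\ToCHC(\cP)$ is a CHC conjunction and observes that each clause $\forall \vec{x} \cdot (\Pi_0(p) \implies p(\vec{x}))$ is a CHC because $\Pi_0(p)$ is quantifier-free and $p$ is a predicate symbol. Your additional remark about checking that no predicate atoms from $P$ hide inside $\Pi_0(p)$ is a slightly more careful version of the same step.
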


\begin{example}
  This example uses \cref{Thm:Realizability} to solve the \ipsmp problem in \cref{Fig:Overview:ClassSoln}.
  The program in \cref{Fig:Overview:ClassSoln} corresponds to the \ipsmp problem $( \cP, \cT, \Pi_0 )$ where $\cP$ is the source code, $\cT$ is the theory of integer linear arithmetic, and $\Pi_0: \code{CInv} \to \bot$.
  In this example we let $\mathcal{F}$ be the fragment of linear inequalities of the variables $\{ \code{m}, \code{p} \}$, where \code{m} and \code{p} are the arguments to \code{CInv}.
  Then our goal is to find an expression $e \in \mathcal{F}$ such that $\cP[\code{CInv} \gets e]$ is correct.
  According to \cref{Thm:Realizability}, we can extract $e$ from the output of a CHC-solver.
  The first step in this process is to construct the input $\SynthToCHC( \mathcal{P}, \Pi_0 )$.
  To construct $\SynthToCHC( \mathcal{P}, \Pi_0 )$ we must first construct the term $\ToCHC( \cP )$.
  Recall that $\ToCHC( \cP )$ encodes verification conditions for the program $\cP$.
  Since $\cP$ is open (\code{CInv} is unimplemented), then \code{CInv} will be an unknown in $\ToCHC( \cP )$.
  According to \cref{Sect:Background:WLP}, $\ToCHC( \cP )$ will consist of the verification conditions for $\cP[\code{main}]$, along with a summary for each function in $\cP$.
  We begin by constructing a summary for each method from the \code{Counter} object in $\cP$.
  As described in \cref{Sect:Background:WLP}, each predicate $f_{\mathit{pre}}(\mathbf{x})$ collects the inputs $\mathbf{x}$ to a function $f$, and each predicate $f_{\mathit{sum}}(\mathbf{x}, \mathbf{e})$ each argument $\mathbf{x}$ to a return value $\textbf{e}$
  For simplicity, we encode object state by passing member fields as arguments and return values.
  Redundant declarations are omitted.
  {\par\scriptsize\begin{align*}
    \varphi_{\mathit{Ctor}} :=\;& \forall m \cdot \mathit{Counter}_{\mathit{pre}}(m) \implies \left( (m > 0) \implies \mathit{Counter}_{\mathit{sum}}( m, m, 0 ) \right) \\
    \varphi_{\mathit{Reset}} :=\;& \forall m \cdot \forall p \cdot \mathit{reset}_{\mathit{pre}}( m, p ) \implies \mathit{reset}_{\mathit{post}}( m, p, m, 0 ) \\
    \varphi_{\mathit{Cap}} :=\;& \forall m \cdot \forall p \cdot \mathit{capacity}_{\mathit{pre}}( m, p ) \implies \mathit{capacity}_{\mathit{sum}}( m, p. m - p \ne 0 ) \\
    \varphi_{\mathit{Incr}} :=\;& \forall m \cdot \forall p \cdot \mathit{increment}_{\mathit{pre}}( m, p ) \implies (
    \begin{aligned}[t]
        & \left(( p \ge m ) \implies \mathit{increment}_{\mathit{sum}}( m, p, \bot ) \right) \land \\
        & \left(( p < m ) \implies \mathit{increment}_{\mathit{sum}}( m, p + 1, \top ) \right))
    \end{aligned}
  \end{align*}}%
  Next, we construct a summary for the function \code{drain}.
  Note that, unlike the methods of \code{Counter}, the function \code{drain} contains a loop.
  As described in \cref{Sect:Background:WLP}, loops are encoded using loop invariants with the loop at line $n$ associated with an invariant $\mathit{loop}_n$.
  In our example, the loop at \cref{Line:Overview:ClassProb:Loop} of \cref{Fig:Overview:ClassProb} is associated with a loop invariant $\mathit{loop}_{13}$.
  Then the summary of \code{drain} is as follows.
  {\par\scriptsize\begin{align*}
    \varphi_{\mathit{Exit}} :=&\; \mathit{capacity}_{\mathit{pre}}( p', m' ) \cdot \forall x \cdot \left( \mathit{capacity}_{\mathit{sum}}( p', m', x) \implies \left( x \land \mathit{drain}_{\mathit{sum}}( p, m, p', m' ) \right) \right) \\
    \varphi_{\mathit{Loop}} :=&\; \mathit{loop}_{13}( p, m, x ) \land \\
    &\; ((\mathit{loop}_{13}( p, m, x ) \land x > 0) \implies (\mathit{capacity}_{\mathit{pre}}( p, m ) \land \forall x \cdot ( \mathit{capacity}_{\mathit{sum}}( p, m x ) \implies \mathit{loop}_{13}))) \land \\
    &\; ((\mathit{loop}_{13}( p, m, x ) \land x \le 0) \implies \mathit{reset}_{\mathit{pre}}( p, m ) \land \forall p' \cdot \forall m' \cdot \left( \mathit{reset}_{\mathit{sum}}( p, m, p', m' ) \implies \varphi_{\mathit{Exit}} \right)) \\
    \varphi_{\mathit{Dr}} :=& \forall p \cdot \forall m \cdot \mathit{drain}_{\mathit{pre}}(p, m) \implies 
    ( \mathit{capacity}_{\mathit{pre}}( p, m ) \land \forall x \cdot ( \mathit{capacity}_{\mathit{sum}}( p, m, x ) \implies \varphi_{Loop} ))
  \end{align*}}%
  Finally, we construct the verification conditions for \code{main}.
  Since \code{main} is the entry-point to $\cP$, then \code{main} must be safe for all possible inputs.
  This means that \code{main} does not require a summary.
  The conditions are as follows.
  {\par\scriptsize\begin{align*}
    \varphi_{\mathit{Main}} :=\;& \forall b_1 \cdot \forall b_2 \cdot \forall b_3 \cdot \\
    & \begin{aligned}[t]
        & (b_1 = 1) \implies \left( \forall m \cdot \mathit{Counter}_{\mathit{pre}}( m ) \land \forall m' \cdot \forall p \cdot ( \mathit{Counter}_{\mathit{sum}}( m, m', p ) \implies \mathit{CInv}( m', p ) \right) \land \\
        & (b_1 \ne 1 \land b_2 = 1) \implies ( \forall m \cdot \forall p \cdot \mathit{CInv}( m, p ) \implies \\
        &\qquad \left( \mathit{reset}_{\mathit{pre}}( m, p ) \land \forall m' \cdot \forall p' \left( \mathit{reset}_{\mathit{sum}}( p, m, p', m' ) \implies \mathit{CInv}( p', m' ) \right) \right)) \land \\
        & (b_1 \ne 1 \land b_2 \ne 1 \land b_3 = 1) \implies ( \forall m \cdot \forall p \cdot \mathit{CInv}( m, p ) \implies \\
        &\qquad \left( \mathit{increment}_{\mathit{pre}}( m, p ) \land \forall m' \cdot \forall p' \left( \mathit{increment}_{\mathit{sum}}( p, m, p', m' ) \implies \mathit{CInv}( p', m' ) \right) \right)) \land \\
        & (b_1 \ne 1 \land b_2 \ne 1 \ne b_3 = 1) \implies ( \forall m \cdot \forall p \cdot \mathit{CInv}( m, p ) \implies \\
        &\qquad \left( \mathit{drain}_{\mathit{pre}}( m, p ) \land \forall m' \cdot \forall p' \left( \mathit{drain}_{\mathit{sum}}( p, m, p', m' ) \implies \top \right) \right))
    \end{aligned}
  \end{align*}}%
  As outlined in \cref{Sect:Background:WLP}, $\ToCHC( \mathcal{P} ) = \varphi_{\mathit{Main}} \land \varphi_{\mathit{Ctor}} \land \varphi_{\mathit{Reset}} \land \varphi_{\mathit{Cap}} \land \varphi_{\mathit{Incr}} \land \varphi_{\mathit{Dr}}$.
  Next, $\ToCHC( \cP )$ is strengthened by the predicate template $\Pi_0( \code{CInv} )$ to obtain $\SynthToCHC( \cP, \Pi_0 ) = \ToCHC( \cP ) \land \left(
    \forall m \cdot \forall p \cdot \bot \implies \code{CInv}(m, p) \right)$.
  Clearly the term $\bot \implies \code{CInv}(m, p)$ is trivially satisfied.
  This is because the predicate template $\Pi_0( \code{CInv} )$ is also trivial.
  In general, this need not be the case.
  Nonetheless, the term $\ToCHC( \cP )$ is non-trivial.
  If $\ToCHC( \cP )$ is provided to a CHC-solver, then the CHC-solver will return a solution $\sigma$ containing the following components:
  expressions $\sigma( \mathit{Counter}_{\mathit{pre}} )$, $\sigma( \mathit{Reset}_{\mathit{pre}} )$, $\sigma( \mathit{Capacity}_{\mathit{pre}} )$, $\sigma( \mathit{Increment}_{\mathit{pre}} )$, and $\sigma( \mathit{Drain}_{\mathit{pre}} )$, which over-approximate the inputs passed to each function;
  expressions $\sigma( \mathit{Counter}_{\mathit{sum}} )$, $\sigma( \mathit{Reset}_{\mathit{sum}} )$, $\sigma( \mathit{Capacity}_{\mathit{sum}} )$, $\sigma( \mathit{Increment}_{\mathit{sum}} )$, and $\sigma( \mathit{Drain}_{\mathit{sum}} )$, which over-approximate the return values of each function;
  an expression $\sigma( \mathit{loop}_{13} )$ which over-approximates the reachable states of the loop in \code{drain};
  an expression $\sigma( \code{CInv} )$ which describes a safe implementation for \code{CInv}.
  In one solution, $\sigma( \mathit{loop}_{13} ) = (p \le m \land (x \ne 0 \implies 0 < p) \land (x = 0 \implies 0 = p))$.
  This states that the counter is always in a valid position, and in position zero if and only if the capacity returns to zero.
  In such a solution, it is also possible that $\sigma( \code{CInv} ) = (m > 0 \land p \le m )$.
  Clearly $\sigma( \code{CInv} )$ is an $\mathcal{F}$-solution since $\sigma( \code{CInv} )$ is a conjunction of linear inequalities.
  Then by \cref{Thm:Realizability}, $\Pi: \code{CInv} \to (m > 0 \land p \le m )$ is a solution to $( \mathcal{P}, \mathcal{T}, \Pi_0 )$ with $\cP[\Pi]$ both closed and safe.
  \qed
\end{example}

Like CHC-solving, the general \ipsmp problem is also undecidable.
This is because program verification reduces to \ipsmp.
Intuitively, if a closed program $\cP$ is given to an \ipsmp solver, then a solution to the \ipsmp problem implies that $\cP$ is correct, and a witness to unrealizability implies that $\cP$ is incorrect.
In this way, the halting problem also reduces to \ipsmp.

We show that \ipsmp is undecidable for linear integer arithmetic by reducing the halting problem for $2$-counter machines to \ipsmp.
Recall that a $2$-counter machine is a program with a program counter and two integer variables~\cite{Minsky1967}.
The program has a finite number of locations, each with one of four instructions:
(1) \code{inc(x)} increases the variable \code{x} by $1$ and increment the program counter;
(2) \code{dec(x)} decreases the variable \code{x} by $1$ and increment the program counter;
(3) \code{jump(x, i)} goes to location \code{i} if \code{x} is $0$, else increments the program counter; 
(4) \code{halt()} halts execution of the program.
The halting program for $2$-counter machines is known to be undecidable~\cite{Minsky1967}.

\begin{theorem}
  \label{Thm:Undecidable}
  The \ipsmp problem is undecidable for linear integer arithmetic.
\end{theorem}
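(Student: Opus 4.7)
The plan follows the roadmap sketched informally just before the theorem: reduce the halting problem for $2$-counter machines to \ipsmp over linear integer arithmetic. The key observation is that when $\Partial( \cP ) = \varnothing$, the only candidate for $\Pi$ in \cref{Def:IPSMP} is the empty mapping, so $\cP[ \Pi ] = \cP$ and the \ipsmp instance admits a solution if and only if $\cP$ is correct modulo $\cT$. Thus ordinary program verification is a degenerate case of \ipsmp, and it suffices to encode the halting problem as a verification task over linear integer arithmetic.

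First, given a $2$-counter machine $M$ with locations $\{1, \ldots, n\}$ and halt location $h$, I would construct a program $\cP_M \in \Programs( \Sigma, \cV )$ over the signature $\Sigma$ of linear integer arithmetic. The variables are $\mathit{pc}$, $x$, and $y$, initialized to $1$, $0$, and $0$. The body of $\cP_M$ is a non-terminating loop that first executes $\code{assert}( \mathit{pc} \ne h )$ and then dispatches on $\mathit{pc}$ through a cascade of conditionals, one per location of $M$. Each instruction type translates directly: $\code{inc}( c )$ and $\code{dec}( c )$ become $c := c \pm 1$; $\code{jump}( c, i )$ becomes a branch on $c = 0$ that sets $\mathit{pc}$ to either $i$ or $\mathit{pc} + 1$; and in the non-jump cases $\mathit{pc}$ is incremented. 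Because every resulting test and update lies in $\QFFml( \Sigma, \cV )$, the construction yields a well-formed $\cP_M \in \Programs( \Sigma, \cV )$.

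Second, I would form the \ipsmp instance $( \cP_M, \cT_{\mathrm{LIA}}, \varnothing )$, where $\cT_{\mathrm{LIA}}$ is the theory of linear integer arithmetic. By the opening observation, this instance has a solution if and only if $\cP_M$ is correct. A routine induction on execution prefixes shows that the sequence of $( \mathit{pc}, x, y )$-valuations along the unique run of $\cP_M$ is precisely the configuration sequence of $M$ starting from its initial configuration, up to the first violation of $\mathit{pc} \ne h$. Hence $\cP_M$ is correct if and only if $M$ never halts, so $M$ halts if and only if the \ipsmp instance has no solution. Since $2$-counter machine halting is undecidable~\cite{Minsky1967}, this many-one reduction shows that \ipsmp modulo $\cT_{\mathrm{LIA}}$ is undecidable.

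The main obstacle is nailing down the $\code{dec}$ semantics so that the simulation is exactly faithful. In the standard Minsky convention, counters remain non-negative, so I would guard each $\code{dec}( c )$ with $\code{assume}( c > 0 )$ to prevent $\cP_M$ from stepping into LIA-states that do not correspond to any configuration of $M$; this introduces no additional executions because $M$ itself is either undefined or stalls in the forbidden case. Beyond this single modelling choice, the correspondence is a routine case analysis over the four instruction types, and the undecidability conclusion is immediate from \cref{Def:IPSMP}.
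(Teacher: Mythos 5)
Your proposal is correct and follows essentially the same route as the paper: encode a $2$-counter machine as a closed program over linear integer arithmetic, observe that with $\Partial(\cP) = \varnothing$ the \ipsmp instance reduces to ordinary verification, place an assertion at the halt location, and invoke the undecidability of $2$-counter halting. The only differences are cosmetic (the paper asserts \code{false} at the halt instruction rather than guarding the loop with $\code{assert}(\mathit{pc} \ne h)$, and phrases the argument as a proof by contradiction rather than a many-one reduction).
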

\section{From Verification to Synthesis}
\label{Sect:Reductions}

This section establishes reductions of \cref{Sect:Overview}.
Class invariant inference is proven directly.
Array abstraction and symmetric ring verification are subsumed by a reduction from parameterized compositional model checking to \ipsmp.
Loop invariant synthesis is proven in \appendixcite{Appendix:LoopReduction}.
We write $\Sigma$ for a first-order signature, $\mathcal{V}$ for a set of variables, and $\Pi_{\bot}$ for a collection of predicate templates which maps each predicate to $\bot$.

\subsection{Class Invariant Inference}
\label{Sect:Reductions:Class}

\begin{figure}[t]
  \begin{subfigure}{0.49\textwidth}
    \lstinputlisting[style=ipsmp]{code/class_reduction_prob.c}
    \caption{The input program.}
    \label{Fig:Reduction:ClassProb}
  \end{subfigure}
  \hfill
  \begin{subfigure}{0.49\textwidth}
    \lstinputlisting[style=ipsmp]{code/class_reduction_soln.c}
    \caption{The \ipsmp reduction.}
    \label{Fig:Reduction:ClassSoln}
  \end{subfigure}
  \caption{A reduction from class invariant inference to \ipsmp.}
\end{figure}

A \emph{safe class invariant} is a predicate that is true of a class instance after initialization, closed under the execution of each impure class method, and sufficient to prove the correctness of a function taking class instances as arguments~\cite{HuizingKuiper2000}.
\emph{Class invariant inference} asks to find a safe class invariant given a program.
The inference problem is \emph{intensional} if solutions are in the same logical fragment as assertions in the programming language~\cite{NielsonNielson2007}.
A definition of (intensional) class invariant inference is found in \cref{Def:ClassInvInference}.
In this definition, $\ToCHC( f )$ relates the class invariant $\varphi$ to a summary of each method $f$ in $\cP$, and $f_{pre}$ is used to enforce that $f$ is summarized.
For simplicity, a class has two fields and two impure methods, each taking at most two arguments (\cref{Fig:Reduction:ClassProb}).
A generalization to $m$ methods is not difficult.
A generalization to $n$ arguments follows immediately.

\begin{definition}
  \label{Def:ClassInvInference}
  A \emph{class invariant inference problem} is a tuple $( \cP, \cT )$ such that $\cP \in \Programs( \Sigma, \cV )$ is an open program as in \cref{Fig:Reduction:ClassProb} and $\cT$ is a theory.
  A solution to $( \cP, \cT )$ is a $\varphi \in \QFFml( \Sigma, \{ \code{x}, \code{y} \} )$ such that the following are $\cT$-satisfiable:
  {\par\scriptsize\begin{align*}
    \psi_{\mathit{Init}} &:= \forall V \left( \mmsmallcode{Cls}_{pre}( a ) \land \mmsmallcode{Cls}_{sum}( a, x, y ) \implies \varphi \right)
    &
    \psi_{\mathit{Close1}} &:= \forall V \left( \varphi \land \mmsmallcode{f}_{sum}( x, y, a, x', y' ) \implies \varphi' \right) \\
    \psi_{\mathit{Close2}} &:= \forall V \left( \varphi \land \mmsmallcode{g}_{sum}( x, y, a, b, x', y' ) \implies \varphi' \right)
    &
    \psi_{\mathit{Suffic}} &:= \forall V \left( \varphi \implies \mmsmallcode{func}_{sum}( x, y, a ) \right)
  \end{align*}
  \vspace{-\belowdisplayskip}
  \vspace{-\abovedisplayskip}
  \begin{align*}
    \psi_{\mathit{Sum}} &:= \ToCHC( \cP ) \land \forall V ( \varphi \implies \mmsmallcode{f}_{pre}( x, y, a ) \land \mmsmallcode{g}_{pre}( x, y, a, b ) \land \mmsmallcode{func}_{pre}( x, y, a ) )
  \end{align*}}%
\end{definition}

\begin{theorem}
  \label{Thm:ClassReduction}
  Let $( \cP, \cT )$ be a class invariant inference problem and $\cP'$ be the
  program obtained by adding \code{main} in
  \cref{Fig:Reduction:ClassSoln} to $\cP$. Then $\Pi$ is a solution to $( \cP', \Pi_{\bot}, \cT )$ if and only if $\Pi(
  \code{Inv} )$ is a solution to $( \cP, \cT )$.
\end{theorem}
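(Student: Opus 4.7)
The plan is to prove the equivalence by relating the IPS-MP problem $(\cP', \Pi_\bot, \cT)$ to its CHC encoding via \cref{Thm:Realizability,Thm:Unrealizability,Thm:SynthCHC}, and then showing that the CHCs generated by the added \code{main} correspond one-to-one with the conjuncts $\psi_{\mathit{Init}}, \psi_{\mathit{Close1}}, \psi_{\mathit{Close2}}, \psi_{\mathit{Suffic}}$ of \cref{Def:ClassInvInference}, while the remaining CHCs from the methods of $\cP'$ account for $\psi_{\mathit{Sum}}$. Since $\Pi_0 = \Pi_\bot$ maps \code{Inv} to $\bot$, the template clause $\forall V \cdot \bot \implies \code{Inv}(x,y)$ is trivially valid and contributes nothing, so $\SynthToCHC(\cP', \Pi_\bot) \equiv \ToCHC(\cP')$.

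First I would unfold $\ToCHC(\cP')$. Because $\cP'$ adds only the \code{main} procedure to $\cP$, we have $\ToCHC(\cP') = \wlp(\cP'(\code{main}), \top) \land \ToCHC(\cP)$, where $\ToCHC(\cP)$ is precisely $\psi_{\mathit{Sum}}$ after substituting \code{Inv} for $\varphi$. I then examine the \code{main} function sketched in \cref{Sect:Overview:Class} and rendered formally in \cref{Fig:Reduction:ClassSoln}: it branches nondeterministically into four cases, guarded by fresh Boolean witnesses. By standard properties of $\wlp$ over nondeterministic choice, $\wlp(\code{main}, \top)$ decomposes into the conjunction of the $\wlp$ of each branch. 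The first branch initializes an instance and asserts \code{Inv} on the result, yielding exactly $\psi_{\mathit{Init}}$ (under the substitution \code{Inv} $\mapsto \varphi$); the second and third branches assume \code{Inv}, execute \code{f} or \code{g} respectively, and reassert \code{Inv}, yielding $\psi_{\mathit{Close1}}$ and $\psi_{\mathit{Close2}}$; the fourth assumes \code{Inv} and calls \code{func}, whose $\wlp$ through its summary gives $\psi_{\mathit{Suffic}}$. The identification is routine since both sides use the same pre- and summary-predicates $\code{f}_{pre}, \code{f}_{sum}$, etc.

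With this equivalence established, the two directions are:
\emph{($\Rightarrow$)} If $\Pi$ solves $(\cP', \Pi_\bot, \cT)$, then by \cref{Thm:Realizability}'s converse direction (implicit in correctness of $\cP'[\Pi]$), the formula $\SynthToCHC(\cP', \Pi_\bot)$ is $\cT$-satisfied by the interpretation induced by $\Pi$ together with the natural summaries of all procedures. Projecting this model onto \code{Inv}, the five formulas $\psi_{\mathit{Init}}, \psi_{\mathit{Close1}}, \psi_{\mathit{Close2}}, \psi_{\mathit{Suffic}}, \psi_{\mathit{Sum}}$ are simultaneously $\cT$-satisfied with $\varphi = \Pi(\code{Inv})$, so $\Pi(\code{Inv})$ is a solution to $(\cP, \cT)$.
\emph{($\Leftarrow$)} Given a solution $\varphi$ to $(\cP, \cT)$, set $\Pi(\code{Inv}) := \varphi$ and extend it to summaries witnessing $\psi_{\mathit{Sum}}$. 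The four $\psi$ conditions, together with $\psi_{\mathit{Sum}}$, give a $\cT$-model of $\SynthToCHC(\cP', \Pi_\bot)$; by \cref{Thm:Realizability} this yields a solution to the IPS-MP problem, which is exactly $\Pi$ since no other partial predicates exist in $\cP'$ (the constraint $\Pi_\bot(\code{Inv}) = \bot \implies \Pi(\code{Inv})$ holds vacuously).

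The main obstacle will be the bookkeeping in the forward direction: one must verify that the WLP of each branch of \code{main}, after full unfolding of function calls through their summaries, matches the corresponding $\psi$ formula on the nose, including correct handling of auxiliary variables introduced by summarization (inputs $a, b$ of the constructor/methods and the primed post-state variables $x', y'$). A careful but mechanical calculation suffices, guided by the structural correspondence between the case-split sketched in \cref{Sect:Overview:Class} and the four semantic conditions of \cref{Def:ClassInvInference}.
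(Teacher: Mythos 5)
Your proposal is correct and follows essentially the same route as the paper's proof: reduce solution-hood of $\Pi$ to $\cT$-satisfiability of a CHC encoding, decompose $\wlp( \code{main}, \top )$ over the four non-deterministic branches, and match each branch against $\psi_{\mathit{Init}}$, $\psi_{\mathit{Close1}}$, $\psi_{\mathit{Close2}}$, and $\psi_{\mathit{Suffic}}$, with $\psi_{\mathit{Sum}}$ absorbed by $\ToCHC( \cP )$. The only cosmetic difference is that the paper works directly with $\ToCHC( \cP'[\Pi] )$ of the closed program via \cref{Prop:Safety}, where the generated clause $\forall V \cdot ( \code{Inv}( x, y ) \iff \Pi( \code{Inv} ) )$ automatically pins the interpretation of \code{Inv}, whereas you detour through $\SynthToCHC( \cP', \Pi_{\bot} )$ of the open program and must pin that interpretation by hand when constructing and projecting models.
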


\subsection{Reducing PCMC to \ipsmp}
\label{Sect:Reductions:PCMC}

\begin{figure}[t]
  \begin{subfigure}{0.49\textwidth}
    \lstinputlisting[style=ipsmp]{code/pcmc_reduction_prob.c}
    \caption{The input program.}
    \label{Fig:Reduction:PCMSProb}
  \end{subfigure}
  \hfill
  \begin{subfigure}{0.49\textwidth}
    \lstinputlisting[style=ipsmp]{code/pcmc_reduction_soln.c}
      \caption{The \ipsmp reduction.}
      \label{Fig:Reduction:PCMCSoln}
  \end{subfigure}
  \caption{A reduction from compositional ring invariant synthesis to \ipsmp. The state of each process and resource are both assumed to be integer values.}
\end{figure}

\emph{Parameterized compositional model checking}~(PCMC) is a framework to verify structures with arbitrarily many components~(e.g., an array with arbitrarily many cells, or a ring with arbitrarily many processes) by decomposing the structure into smaller structures of fixed sizes~\cite{NamjoshiTrefler2016}.
Intuitively, each of these smaller structures is a view of the larger structure from the perspective of a single component.
A proof of the larger structure is obtained by verifying each of the smaller structures, and showing that their proofs compose with one another~\cite{NamjoshiTrefler2016}.
If the number of smaller structures is finite (i.e.,~most perspectives are similar), then PCMC is applicable~\cite{NamjoshiTrefler2016}.
For example, in \cref{Sect:Overview:Array} and \cref{Sect:Overview:PCMC}, the array and ring were highly symmetric, and therefore, all perspectives were similar.

Once the larger structure has been decomposed, the proof of compositionality follows by inferring \emph{adequate compositional invariants} for groups of similar components~\cite{NamjoshiTrefler2016}.
The number of compositional invariants, and the properties they must satisfy, depend on the decomposition.
However, each property is one of initialization, closure, or non-interference.
An initialization property states that a compositional invariant is true for the initial state of a component.
A closure property states that a compositional invariant is closed under all transitions of its components.
A non-interference property states that for any component $\mathit{c}$, if $\mathit{c}$ satisfies its compositional invariant and an adjacent component (\emph{also satisfying its compositional invariant}) performs a transition, then $\mathit{c}$ continues to satisfy its compositional invariant after the transition.
In addition, all composition invariants must be adequate in that they imply the correctness of the larger structure.
To make the rest of this section concrete, we restrict ourselves to compositional ring invariants\footnote{\cref{Sect:Overview:Array} is a degenerate case. In this ring, processes communicate through locks. In an array, cells do not ``communicate''.}.
As in \cref{Sect:Reductions:Class}, the inference problem is assumed to be intensional.
A formal definition of (intensional) compositional invariant inference is given in \cref{Def:CompInvInference}\footnote{In PCMC, a witness to unrealizability does not entail the incorrectness of a structure. Instead, no proof of correctness exists relative to the chosen decomposition.}.
Note that in \cref{Def:CompInvInference} $\ToCHC$ relates the compositional invariant $\varphi$ to the summary of \code{tr}, $\code{tr}_{pre}$ enforces that \code{tr} is summarized, and $\varphi_{\mathit{Inf}} := \varphi[ l / r ][ s / i ][ r / l ]$ is the compositional invariant applied to a process $( r, i, l )$.

\begin{definition}
  \label{Def:CompInvInference}
  A \emph{compositional ring invariant (CRI) inference problem} is a tuple $( \cP, \cT )$ such that $\cP \in \Programs( \Sigma, \cV )$ is an open program as in \cref{Fig:Reduction:PCMSProb} and $\cT$ is a theory.
  A solution to $( \cP, \cT )$ is a $\varphi \in \QFFml( \Sigma, \{ \code{l}, \code{s}, \code{r} \} )$ such that the following are $\cT$-satisfiable given $\varphi_{\mathit{Inf}} := \varphi[ l / r ][ s / i ][ r / l ]$:
  {\scriptsize\begin{align*}
    \psi_{\mathit{Init}} &:= \forall V \left( \mmsmallcode{init}( l, s, r ) \implies \varphi \right)
    &
    \psi_{\mathit{Close}} &:= \forall V \left( \varphi \land \varphi_{\mathit{Inf}} \land \mmsmallcode{tr}_{sum}( l, s, r, l', s', r' ) \implies \varphi' \right)
    \\
    \psi_{\mathit{Adeq}} &:= \forall V \left( \varphi \implies \mmsmallcode{property}( l, s, r ) \right)
    &
    \psi_{\mathit{Inf}} &:= \forall V \left( \varphi \land \varphi_{\mathit{Inf}} \land \mmsmallcode{tr}_{sum}( l, s, r, l', s', r' ) \implies {\varphi_{\mathit{Inf}}}' \right)
  \end{align*}
  \vspace{-\belowdisplayskip}
  \vspace{-\abovedisplayskip}
  \begin{align*}
    \psi_{\mathit{Sum}} &:= \ToCHC( \cP ) \land \forall V \cdot \left( \varphi \land \varphi_{\mathit{Inf}} \implies \mmsmallcode{tr}_{pre}( l, s, r ) \right)
  \end{align*}}%
\end{definition}

\begin{theorem}
  \label{Thm:ParamReduction}
  Let $( \cP, \cT )$ be a CRI inference problem, $\cP'$ be the program obtained by adding \code{main} of \cref{Fig:Reduction:PCMCSoln} to $\cP$, and $\Pi_0$ be the predicate template from \cref{Fig:Reduction:PCMCSoln}.
  Then $\Pi$ is a solution to $( \cP', \Pi_0, \cT )$ if and only if $\Pi( \code{Inv} )$ is a solution to $( \cP, \cT )$.
\end{theorem}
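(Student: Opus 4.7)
The plan is to prove both directions by translating the correctness of $\cP'[\Pi]$ via the $\ToCHC$ transformer of \cref{Sect:Background:WLP} and matching the resulting verification conditions one-to-one against the five formulas in \cref{Def:CompInvInference}, after setting $\varphi := \Pi(\code{Inv})$ and $\varphi_{\mathit{Inf}} := \varphi[l/r][s/i][r/l]$. Because $\cP'$ differs from $\cP$ only by the addition of \code{main}, the summary CHCs of the procedures of $\cP$ pass through unchanged and contribute the $\ToCHC(\cP)$ portion of $\psi_{\mathit{Sum}}$; the remaining conjunct of $\psi_{\mathit{Sum}}$ and the other four conditions come from the assumes, asserts, and predicate calls inside \code{main}.

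Concretely, I would first argue that the predicate template $\Pi_0(\code{Inv})$ (taken in \cref{Fig:Reduction:PCMCSoln} to be the initialization condition over $(l,s,r)$) supplies $\psi_{\mathit{Init}}$ directly, via the constraint $\models_{\cT} \Pi_0(\code{Inv}) \implies \Pi(\code{Inv})$ from \cref{Def:IPSMP}. Next, the adequacy branch in \code{main} --- which instantiates a single process, assumes $\code{Inv}$ on it, and asserts $\code{property}$ --- produces $\psi_{\mathit{Adeq}}$ by WLP. The closure/non-interference branch instantiates two ring-adjacent processes $(l,s,r)$ and $(r,i,l)$, assumes $\code{Inv}$ on both (yielding $\varphi \land \varphi_{\mathit{Inf}}$), executes one call to $\code{tr}$, and then asserts $\code{Inv}$ on both: the assertion on the transitioning process gives $\psi_{\mathit{Close}}$, and the assertion on its stationary neighbor, after applying the substitution $[l/r][s/i][r/l]$ induced by identifying shared locks, gives $\psi_{\mathit{Inf}}$. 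The presence of $\varphi \land \varphi_{\mathit{Inf}}$ as a guard to $\code{tr}$ delivers the $\varphi \land \varphi_{\mathit{Inf}} \implies \code{tr}_{pre}(l,s,r)$ conjunct of $\psi_{\mathit{Sum}}$. The forward direction then follows: from a solution $\Pi$ of $(\cP', \Pi_0, \cT)$, $\cP'[\Pi]$ is correct, so each of the five $\psi$'s holds for $\varphi := \Pi(\code{Inv})$. The reverse direction is symmetric: given $\varphi$ satisfying all five $\psi$'s, setting $\Pi(\code{Inv}) := \varphi$ discharges every assertion along every path in $\cP'[\Pi]$ through exactly one $\psi$.

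The hard part will be the careful bookkeeping of variable renamings needed to make the ring-symmetry step rigorous. I must verify that the way \code{main} instantiates the neighboring process --- identifying its left resource with the first process's right resource, and vice versa --- produces exactly the substitution $[l/r][s/i][r/l]$ used to define $\varphi_{\mathit{Inf}}$, neither more nor less, so that the neighbor's post-assertion coincides with ${\varphi_{\mathit{Inf}}}'$. A secondary subtlety is confirming that the non-deterministic branching in \code{main} does not introduce spurious verification conditions: each branch must contribute exactly one of the non-summary $\psi$ formulas, with no cross-contamination through shared global variables between the two instantiated processes. Once these two technical points are handled, the one-to-one correspondence between the assertions of $\cP'[\Pi]$ and the five $\psi$'s of \cref{Def:CompInvInference} closes the proof.
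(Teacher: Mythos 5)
Your proposal matches the paper's own argument: both reduce correctness of $\cP'[\Pi]$ to $\cT$-satisfiability of $\ToCHC(\cP'[\Pi])$, split $\wlp(\cP'(\code{main}),\top)$ over the two non-deterministic branches (noting the branch variable does not occur in either branch body), identify the closure branch with $\psi_{\mathit{Close}} \land \psi_{\mathit{Inf}}$ plus the $\code{tr}_{pre}$ conjunct of $\psi_{\mathit{Sum}}$ and the other branch with $\psi_{\mathit{Adeq}}$, and obtain $\psi_{\mathit{Init}}$ from the template constraint $\models_{\cT} \Pi_0(\code{Inv}) \implies \Pi(\code{Inv})$, with the converse direction symmetric. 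The technical points you flag (the $[l/r][s/i][r/l]$ renaming from lock sharing and the independence of the branch guard) are exactly the ones the paper's proof discharges.
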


\section{Implementation and Evaluation}
\label{Sect:Implementation}

We have implemented an \ipsmp solver within the \seahorn verification framework.
\seahorn takes as input a C program, and returns a CHC-based verification problem in the SMT-LIB format according to~\cref{Sect:Background:WLP}~\cite{GurfinkelKahsai2015}.
We extend \seahorn to recognize predicate templates.
For each predicate, \seahorn adds clauses to the verification conditions according to \cref{Sect:Decidability:General}.
Proofs of unrealizability are generated with the implementation of ~\cite{10.1007/978-3-030-03592-1_2} found in \seahorn.
That is, proofs of unrealizability are already supported by \seahorn.

The goal of our evaluation is to confirm that:
\begin{enumerate}
    \item \ipsmp is practical for the reduction described in \cref{Sect:Reductions};
    \item CHC-based solvers are more efficient than general synthesis solvers for \ipsmp instances;
    \item The overhead incurred when using \ipsmp is tolerable.
\end{enumerate}
Towards (1) and (2), we have collected 92 \ipsmp problems with linear integer arithmetic as the background theory (see~\textsf{Safe} in~\cref{Table:Benchmarks}).
Of these benchmarks, 7 reflect loop invariant inference (and interpolation~\cite{McMillan2003}), 6 reflect class invariant synthesis, 4 reflect array (and memory) abstraction, 2 reflect ring PCMC, 3 reflect procedure summarization, and 70 reflect parameterized analysis of smart-contract (SC) programs~(see~\cite{WesleyEtAl2021,WesleyEtAl2022}).
The first $20$ benchmarks were collected from research papers in the area of software verification.
The remaining benchmarks, involving the parameterized analysis of SCs, were obtained by extending \smartace with support for \ipsmp.
Of these 70 SC benchmarks, 62 are taken from real-world examples used to manage monetary assets~\cite{PermenevDimitrov2020}.
To address question (3), we compare the performance of \smartace with and without \ipsmp, relative to these real-world examples.
Note that the extension to \smartace was a routine exercise, due to the original design of \smartace.
In particular, \smartace encodes all compositional invariants as predicates returning \code{true}, to then be refined manually by an end-user~\cite{WesleyEtAl2022}.
These predicates appear in assume and assert statements, as described in~\cref{Sect:Overview:PCMC}.
Our extended version of \smartace can replace these predicates with predicate templates, yielding valid \ipsmp problems.

\begin{table*}[t]
  \small
  \centering
  \scalebox{0.65}{\begin{tabular}{@{}lrrrrrrrrrrrrrrrrrrrrr@{}}
    \toprule \multicolumn{5}{c}{Benchmarks} &
    \phantom{a} & \multicolumn{3}{c}{\ipsmp(\spacer)} &
    \phantom{a} & \multicolumn{4}{c}{\ipsmp(\eldarica)} &
    \phantom{a} & \multicolumn{3}{c}{\hornspec} &
    \phantom{a} & \multicolumn{3}{c}{\cvc} \\

    \cmidrule{1-5} \cmidrule{7-9} \cmidrule{11-14} \cmidrule{16-18} \cmidrule{20-22}
    Type & Safe & Buggy & Preds & Size &&
    Time & TO & $\checkmark$ &&
    Time & TO & MEM & $\checkmark$ &&
    Time & UN & $\checkmark$ &&
    TO & N/A & $\checkmark$ \\

    \midrule
    Loop & 7 & 7 & 9 & 179~KB && \bf 4 & 0 & \bf 14 && 45 & 0 & 0 & \bf 14 && 4 & 12 & 2 && 7 & 7 & 0 \\
    Class & 6 & 6 & 6 & 694~KB && \bf 2 & 0 & \bf 12 && 1\,449 & 0 & 0 & \bf 12 && --- & 12 & 0 && 6 & 6 & 0 \\
    Array & 4 & 6 & 6 & 535~KB && \bf 4 & 0 & \bf 10 && 230 & 0 & 0 & \bf 10 && --- & 10 & 0 && 4 & 6 & 0 \\
    Ring & 2 & 3 & 2 & 197~KB && \bf 1 & 0 & \bf 5 && 52 & 0 & 0 & \bf 5 && --- & 5 & 0 && 2 & 3 & 0 \\
    Proc & 3 & 3 & 3 & 418~KB && \bf 2 & 0 & \bf 6 && 4 & 0 & 0 & \bf 6 && --- & 6 & 0 && 3 & 3 & 0 \\
    SC & 70 & 4 & 181 & 974~MB && \bf 6\,878 & 4 & \bf 70 && 6\,717 & 53 & 12 & 9 && --- & --- & --- && --- & --- & --- \\
    \midrule
    \bf Total & 92 & 29 & 207 & 975~MB && \bf 6\,891 & 4 & \bf 117 && 8\,497 & 53 & 12 & 56 && 4 & 45 & 2 && 22 & 29 & 0 \\
    \bottomrule
\end{tabular}}
  \vspace{5pt}
  \caption{Performance of various solvers on \ipsmp benchmarks.}
  \label{Table:Benchmarks}
\end{table*}

A summary of all benchmarks can be found in \cref{Table:Benchmarks}.
As reflected by their size (see~\textsf{Size} in~\cref{Table:Benchmarks}) and total number of unknown predicates across all realizable instances (see~\textsf{Preds} in~\cref{Table:Benchmarks}), SCs are included to evaluate \ipsmp on large programs.
When possible, benchmarks are drawn from prior works in program analysis~(i.e.,~\cite{KahsaiKersten2017,Logozzo2004,Schwerhoff2016,PermenevDimitrov2020}).
To reflect unrealizability in \ipsmp, 29 faults have been injected in these benchmarks (see~\textsf{Buggy} in~\cref{Table:Benchmarks}).
Further information can be found about the realizable real-world SC's in \cref{Table:Overhead}.
Each SC in this table is associated with one or more safety properties (see~\textsf{Props} in~\cref{Table:Overhead}), which in turn, corresponds to a realizable \ipsmp instance.
As before, \textsf{Preds} and \textsf{Size} indicate the total predicate count and size for these instances.
All benchmarks are available at \url{https://doi.org/10.5281/zenodo.5083785}.

\newcommand{\trademark}{\textsuperscript{\textregistered}}
To evaluate \ipsmp, we find the number of benchmarks that are solved by either of two state-of-the-art CHC solvers: \eldarica~\cite{HojjatRummer2018} and \spacer~\cite{KomuravelliGurfinkel2014}.
To compare CHC solvers to general synthesis tools, we provide our benchmarks to a state-of-the-art specification synthesizer, \hornspec~\cite{PrabhuFedyukovich2021}, and a state-of-the-art \sygus solver, \cvc\footnote{To support \cvc, we convert each \emph{realizable} problem from SMT-LIB format to the \sygus input language.}~\cite{BarretConway2011}.
Since \cvc solves \sygus instances, which do not support proofs on unrealizability, then we only evaluate \cvc on realizable benchmarks (see~\textsf{N/A} in~\cref{Table:Benchmarks}).
Due to the size of each SC benchmark, we only ran the tools that could solve \textsf{Loop} through to \textsf{Proc} on these benchmarks.
The results for each tool are reported in \cref{Table:Benchmarks}, where \textsf{TO} is the number of timeouts (after 30 minutes), \textsf{MEM} is the number of failures due to memory limits, \textsf{UN} is the number of benchmarks for which a tool returned \code{unknown}, $\checkmark$ is the number of benchmarks solved, and \textsf{Time} is the total time (in seconds) to find all solutions in a given set.
In \cref{Table:Overhead}, the total time for \spacer is further broken down by SC (see~\smartace(\ipsmp) in~\cref{Table:Overhead}).
For comparison, the verification times for \verx (an automated SC verifier with user-guided predicate abstraction~\cite{PermenevDimitrov2020}) and the original version of \smartace (see~\smartace(Manual) in~\cref{Table:Overhead}) are also provided.
All evaluations were run on an Intel\trademark~Core i7\trademark~CPU~@~1.8GHz 8-core machine with 16GB of RAM running Ubuntu~20.04.

\begin{table*}[t]
  \small
  \centering
  \scalebox{0.65}{\begin{tabular}{@{}lrrrrrrrr@{}}
    \toprule \multicolumn{5}{c}{Contracts} &
    \phantom{abc} & \multicolumn{3}{c}{Performance (Time)} \\

    \cmidrule{1-5} \cmidrule{7-9}
    Name & Props & Preds & Size & $\checkmark$ &&
    \verx~\cite{PermenevDimitrov2020} & \smartace(Manual)~\cite{WesleyEtAl2021} & \smartace(\ipsmp) \\

    \midrule
    Alchemist & 3 & 12 & 36 MB & 3 && 29 & 7 & 208 \\
    Brickblock & 6 & 12 & 122 MB & 6 && 191 & 13 & 1214 \\
    Crowdsale & 9 & 27 & 76 MB & 9 && 261 & 223 & 238 \\
    ERC20 & 9 & 27 & 45 MB & 9 && 158 & 12 & 103 \\
    Melon & 16 & 32 & 149 MB & 16 && 408 & 30 & 979 \\
    PolicyPal & 4 & 16 & 123 MB & 4 && 20\,773 & 26 & 3118 \\
    VUToken & 5 & 22 & 319 MB & 1 && 715 & 19 & 17 \\
    Zebi & 5 & 14 & 45 MB & 5 && 77 & 8 & 487 \\
    Zilliqa & 5 & 10 & 54 MB & 5 && 94 & 8 & 501 \\
    \midrule
    \bf Total & 62 & 172 & 969 MB & 58 && 22\,706 & 346 & 5685 \\
    \bottomrule
\end{tabular}}
  \vspace{5pt}
  \caption{Overhead of integrating \ipsmp-solving with \smartace.}
  \label{Table:Overhead}
\end{table*}

From this evaluation, we answer questions (1) through to (3) in the positive.
\begin{enumerate}
    \item As illustrated by \cref{Table:Benchmarks}, many examples of class invariant inference and compositional invariant inference (i.e.,~\textsc{Class}, \textsc{Array}, \textsc{Ring}, and \textsc{SC}) taken from the literature could be encoded using \ipsmp.
    In the case of SC, the generation of \ipsmp instances could be fully-automated using a modified version of \smartace.
    We conclude that \ipsmp is practical for the reductions described in \cref{Sect:Reductions}.
    \item  As shown in \cref{Table:Benchmarks}, all small benchmarks were solved by \eldarica and \spacer, with average times under a minute. 
    Furthermore, all but four SC benchmarks were solved by \spacer within a $30$-minute timeout, with an average time of $96$ seconds.
    Upon closer inspection, we found that \spacer would fail to solve these four examples, and would return \code{unknown} after approximately one hour.
    However, \cvc failed to solve any SC benchmarks within $30$-minutes.
    Therefore, we conclude that CHC-based \ipsmp-solving is effective for the reductions of \cref{Sect:Reductions}, and can outperform general synthesis solutions.
    We note that \hornspec returned \code{unknown} on all but two benchmarks\footnote{The authors of \hornspec confirm this result though the cause is unknown.}.
    \item As shown in \cref{Table:Overhead}, the \ipsmp version of \smartace incurred an average time overhead of 18x as compared to the manual version of \smartace.
    This should come as no surprise, since the manual version of \smartace achieved a verification time of under 3 seconds for 44 of the 62 properties with the help of user-provided compositional invariants.
    In these cases, a solving time as low as 60 seconds would correspond to an overhead of at least 20x.
    To better contextualize this overhead, we compare the verification time of \ipsmp version of \smartace to the verification time of \verx.
    We first note the outlying case of \textsc{PolicyPal}, in which the \ipsmp version of \smartace achieves a speedup of over 6x.
    For the remaining SC's, the \ipsmp version of \smartace fell within 1.3x of \verx on average.
    Since \verx is a specialized tool with less automation than the \ipsmp version of \smartace, we conclude that the overhead incurred by \ipsmp is tolerable in this particular real-world application.
    We note that in~\cite{PermenevDimitrov2020}, only the ``average'' times were reported for \verx.
    It is unclear whether this is the average time to verify all properties, or an average across all properties.
    The authors of \verx were contacted, but were unable to provide the original data.
    For this reason, we assume conservatively that all times reported by \verx are total.
\end{enumerate}
One limitation of the evaluation is its emphasis on SC verification.
However, compositional SC verification is representative of compositional verification, as illustrated in~\cite{WesleyEtAl2021}.
We do acknowledge that design patterns specific to SC development might bias the benchmark set.
We hope for this benchmark set to be expanded in future work.

Note, however, that we do not plan to benchmark our IPS-MP solver against invariant synthesis tools.
Recall that our implementation simply extends \seahorn with support for the IPS-MP synthesis language.
In cases where the IPS-MP instance reduces to invariant synthesis, our extension is bypassed, and verification reduces to executing \seahorn.
Therefore, a direct comparison is not possible, and the evaluation results would not be meaningful.
Furthermore, \seahorn is a state-of-the-art program verifier with prior success in SV-COMP.
Thus, \seahorn is already known to perform well on invariant synthesis tasks.

An important direction for future work is to understand why \cvc times out on all benchmarks.
We hypothesize that the lack of a grammar in \ipsmp proves challenging for \cvc's enumerative search.
We also note that many of our benchmarks produce non-linear CHC's, whereas the invariant synthesis track for \sygus reduces to solving linear CHC's.

\section{Related Work}

\noindent
\textbf{General program synthesis.}
As explained in \cref{Sec:Intro}, general synthesis engines
(e.g., Sketch~\cite{Solarlezama2013}, Rosette~\cite{TorlakB14},
\sygus~\cite{AlurBodik2015}, and \semgus~\cite{KimHu2021}) are fundamentally different from \ipsmp.
Among these frameworks, only \semgus can both solve synthesis problems and  
prove unrealizability.
Similar to \ipsmp, \semgus reduces the synthesis problem to satisfiability of CHCs.
However, this is  where the similarities end. 
\semgus reduces synthesis to unsatisfiability and extracts solutions from the refutation proofs.
In contrast, \ipsmp reduces to satisfiability and solutions are extracted from model of the CHCs.
\semgus solves a more general problem, which comes at a high price
both from a theoretical and practical perspective.
We show that \ipsmp modulo Boolean programs can be solved in polynomial time (in the number of states), while \semgus lacks this guarantee.
Existing \semgus solvers (e.g., \textsc{Messy}~\cite{KimHu2021}) synthesize programs from sets of candidates described using regular tree grammars.
As a result, their CHCs use constraints over Algebraic Data Types to represent the grammar terms, which are harder to solve than either Boolean or linear arithmetic constraints.
Only Sketch and Rosette are ``modulo programs'', but do not allow loops nor recursion.

\noindent
\textbf{Specification synthesis.}
Specification synthesis solves the problem of finding specifications for unknown procedures which enable the verification of a given program (e.g.,~\cite{DasLahiri2015,AlbarghouthiDillig2016,PrabhuFedyukovich2021}).
Unlike \ipsmp, specification synthesis is under-specified.
Trivial specifications such as \emph{false} are often sufficient but undesirable.
As a consequence, many tools aim to synthesize either \emph{weakest} (i.e.,
maximal) or non-vacuous solutions.
In \ipsmp, any solution is valid as long as it satisfies all program assertions.
In \cref{Sect:Implementation}, we also compare our \ipsmp solver with
\hornspec~\cite{PrabhuFedyukovich2021} and demonstrate that \hornspec
is unsuitable for \ipsmp.

\noindent
\textbf{Data-driven invariant generation.}
Multiple approaches have been proposed (e.g.,~\cite{GargLoding2014,ZhuMagill2018,SiNaik2020,YaoRyan2020,RyanWong2020,HuangQiu2020}) that rephrase loop invariant synthesis as a learning problem.
Recent work has extended these techniques to parameterized verification~\cite{YaoTao2021}.
Often, these techniques require problem-specific biases to learn useful invariants~(e.g.,~\cite{SiNaik2020,YaoRyan2020,RyanWong2020,YaoTao2021}).
Furthermore, these techniques lack the complexity bounds of decidable verification.
In contrast, \ipsmp is problem-agnostic, and achieves the same complexity as verification in the Boolean case.
Adapting data-driven techniques to \ipsmp-solving is an interesting future direction.

\noindent
\textbf{Constrained Horn clauses.} In recent years, CHC-solvers have become a common tool for verification and synthesis problems.
Example include \seahorn~\cite{GurfinkelKahsai2015}, \semgus~\cite{KimHu2021}, and \hornspec~\cite{PrabhuFedyukovich2021}.
The connection between CHCs and verification has long been explored in the CLP community~(e.g., \cite{JaffarSV04,PeraltaGS98,DelzannoP99}).
This direction was popularized again by the work of Rybalchenko et al.~\cite{GrebenshchikovLPR12}.
According to the annual CHC-COMP competition\footnote{\url{https://chc-comp.github.io}.}, \spacer~\cite{KomuravelliGurfinkel2014} and \eldarica~\cite{HojjatRummer2018} are the most effective general-purpose CHC-solvers. 

\section{Conclusion}

We proposed \ipsmp, a novel synthesis problem suitable for solving a wide range of verification problems, such as invariant inference and verification of parameterized systems.
To demonstrate the relevance of \ipsmp, we provided three reductions from classic verification problems to \ipsmp.
To highlight \ipsmp's practicality, we proposed a solution that effectively leverages off-the-shelf CHC solvers and implemented it in the \seahorn verification framework.
Our evaluation demonstrates the effectiveness of CHC solvers in solving \ipsmp when compared with general synthesis tools such as \hornspec and \cvc.

Finally, we demonstrated that the interesting instance of \ipsmp for Boolean programs is efficiently decidable, whereas the general instance is undecidable.
Despite this, the general instance of \ipsmp is theoretically simpler than general synthesis, and thus, warrants specialized solvers.
In future work, we plan to study other instances of \ipsmp, such as IPS modulo timed automata.
We further suspect that \ipsmp will enable new practical applications of PCMC.

\bibliographystyle{plainurl}
\bibliography{bibliography}

\newpage

\addappendix{\appendix}
\addappendix{\section{Syntax and Semantics}
\label{Appendix:Language}

\begin{figure}[t]
  \input{diagrams/grammar}
  \caption{The formal grammar for programs with variables $\cV$ and operations over the signature $\Sigma$. That is, $\langle \textrm{Var} \rangle ::= \cV$, $\langle \textrm{Val} \rangle ::= \Terms( \Sigma, \cV )$, and $\langle \textrm{Expr} \rangle ::= \QFFml( \Sigma, \cV )$. The set of programs in the language is denoted by $\Programs( \Sigma, \cV )$.}
  \label{Fig:Appendix:Lang}
\end{figure}

\begin{figure}[t]
  \input{diagrams/wlp}
  \vspace{-0.25in}
  \caption{The WLP transformer for \cref{Fig:Appendix:Lang}. This follows the presentation of~\cite{BjornerGurfinkel2015}.}
  \label{Fig:Appendix:WLP}
 \end{figure}

The syntax of $\Programs( \Sigma, \cV )$ is presented in \cref{Fig:Appendix:Lang}.
To simplify the presentation, types are omitted and all local variables are declared as inputs to procedures.
Up to these simplifications, all IPS-MP instances in \cref{Sect:Overview} can be thought of as programs in this language.
A qualitative description of this language can be found in \cref{Sect:Background:Lang}.
Denotational semantics for this language are given by the WLP transformer in \cref{Fig:Appendix:WLP}.
In this semantic interpretation, each loop $S$ at line $\textit{ln}$ is associated with a predicate $\linv_{ln}$.
The result of $\wlp( S, Q )$ encodes that $\linv_{ln}$ is a loop invariant for $S$ which is sufficient to entail $Q$.
}
\addappendix{\section{Loop Invariant Inference as Synthesis}
\label{Appendix:Loop}

\begin{figure}[t]
  \begin{subfigure}{0.42\textwidth}
    \lstinputlisting[style=ipsmp]{code/ind_prob.c}
    \vspace{-0.1in}
    \caption{The original program.}
    \label{Fig:Overview:LoopProb}
  \end{subfigure}
  \hfill
  \begin{subfigure}{0.51\textwidth}
    \lstinputlisting[style=ipsmp]{code/ind_soln.c}
    \vspace{-0.1in}
    \caption{The \ipsmp problem.}
    \label{Fig:Overview:LoopSoln}
  \end{subfigure}
  \caption{A program that is correct relative to the loop invariant $( 2 \cdot i = y ) \land (i \le x )$, and an \ipsmp problem that synthesizes the loop invariant.}
\end{figure}

Consider the program in \cref{Fig:Overview:LoopProb}.
This program takes as input a non-negative integer \code{x}, and then computes $2 \cdot x$ through repeated addition.
The function is correct if $y = 2 \cdot x$ on \auxlineref{Line:Overview:LoopProb:Return}.
A standard approach to this problem is to find an invariant for the loop on \auxlineref{Line:Overview:LoopProb:Loop} that entails $y = 2 \cdot x$ on \auxlineref{Line:Overview:LoopProb:Return}.
Therefore, the goal of this example is to construct an \ipsmp problem to synthesize such a loop invariant (existence of this loop invariant entails the correctness of \cref{Fig:Overview:LoopProb}).

By definition, a loop invariant is a predicate that is true upon entry to the loop, closed under each iteration of the loop, and true of the program's state upon loop termination~\cite{Gries1982}.
Each requirement of a loop invariant can be represented through assumptions and assertions, as in \cref{Fig:Overview:LoopSoln}.
First, to ensure that the loop invariant is true upon entry, the loop invariant is asserted upon entering the loop (\auxlineref{Line:Overview:LoopSoln:EntryStart}).
Second, to ensure that the loop invariant is closed under each iteration of the loop, \cref{Fig:Overview:LoopSoln} over-approximates the state of the program upon entry to an arbitrary iteration of the loop.
To restrict the program to an arbitrary iteration of the loop, the loop is first unrolled to a single iteration (\linerange{Line:Overview:LoopSoln:UnrollIn}{Line:Overview:LoopSoln:UnrollOut}).
Before checking the loop condition, the state of an arbitrary loop iteration is then selected by setting each mutable variable non-deterministically, and assuming that these new values satisfy the loop invariant (\linerange{Line:Overview:LoopSoln:SelectStart}{Line:Overview:LoopSoln:SelectEnd}).
If this state also satisfies the loop condition, then closure is enforced by first executing the body of the loop, and then asserting that the loop invariant is maintained (\linerange{Line:Overview:LoopSoln:ClosedStart}{Line:Overview:LoopSoln:ClosedEnd}).
Otherwise, the state does not satisfy the loop condition, and the program exits the loop (\auxlineref{Line:Overview:LoopSoln:Exit}).
This gives a program with unknowns, as required by the verification methodology.

Next, the shape of the loop invariant is restricted.
We note that the loop is correct because \code{i} is incremented up to \code{x}, while maintaining that \code{y} is twice the value of \code{i}.
It follows that there exists a loop invariant for \cref{Fig:Overview:LoopProb}
that does not relate \code{x} and \code{y} in the same constraint.
However, \code{x} must be compared with \code{i}, and \code{i} must be compared with \code{y}.
Therefore, our loop invariant has the shape $\code{Inv1}( x, i ) \land \code{Inv2}( i, y )$.
In the \ipsmp encoding, both \code{Inv1} and \code{Inv2} correspond to partial predicates (see lines~\ref{Line:Overview:LoopSoln:Inv1} and \ref{Line:Overview:LoopSoln:Inv2}, respectively) that  are assumed and asserted together (see lines~\ref{Line:Overview:LoopSoln:AssumeStart} and \ref{Line:Overview:LoopSoln:EntryStart}, respectively).
One solution to this problem is to assign the expression \code{(i <= x)} to the hole in \code{Inv1}, and the expression \mbox{\code{(y == 2 * i)}} to the hole in \code{Inv2}.

\section{Loop Invariant Inference: Reduction to \ipsmp}
\label{Appendix:LoopReduction}

A \emph{safe loop invariant} is a predicate that is true upon entry to a loop, maintained by each iteration of a loop, and is sufficient to prove that the program is correct~\cite{Gries1982}.
\emph{Loop invariant inference} asks to find a safe loop invariant given a program.
The inference problem is \emph{intensional} if solutions must be in the same logical fragment as assertions in the programming language~\cite{NielsonNielson2007}.
A formal definition of (intensional) safe loop invariant inference is given in \cref{Def:LoopInvInference}.
Note that in \cref{Def:LoopInvInference}, $\ToCHC( f )$ is used to relate the loop invariant $\varphi$ to the summary of each procedure $f$ in $\cP$.
For simplicity of presentation, a program has a single loop and two variables (\cref{Fig:Reduction:LoopProb}).
A generalization to $m$ loops is possible and not difficult.
A generalization to $n$ variables follows immediately.

\begin{definition}
  \label{Def:LoopInvInference}
  A \emph{loop invariant inference problem} is a tuple $( \cP, \cT )$ such that $\cP \in \Programs( \Sigma, \{ x_1, x_2 \} )$ is a problem with the \code{main} procedure from \cref{Fig:Reduction:LoopProb} (where $\Sigma$ is a first-order signature), and $\cT$ is a theory.
  A solution to $( \cP, \cT )$ is a $\varphi \in \QFFml( \Sigma, \{ x_1, x_2 \} )$ such that the following are $\cT$-satisfiable:
  \begin{enumerate}
  \item $\psi_{\mathit{Pre}} := \forall \,V \cdot \wlp( S_1, \varphi )$, where $S_1$ is the statement before the loop;
  \item $\psi_{\mathit{Tr}} := \forall \,V \cdot ( \varphi \land e ) \implies \wlp( S_2, \varphi )$, where $S_2$ is the loop body and $e$ is the loop condition;
  \item $\psi_{\mathit{Post}} := \forall \,V \cdot ( \varphi \land \neg e ) \implies \wlp( S_3, \top )$, where $S_3$ is the statement after the loop;
  \item $\psi_{\mathit{Procs}} := \bigwedge_{f \in \Procs( \cP )} \ToCHC( f )$.
  \end{enumerate}
\end{definition}

\begin{theorem}
  \label{Thm:LoopReduction}
  Let $( \cP, \cT )$ be a loop invariant inference problem and $\cP'$ be the program obtained by replacing \code{main} in $\cP$ with the definition of \code{main} in \cref{Fig:Reduction:LoopSoln}.
  The partial predicate implementation $\Pi$ is a solution to $( \cP', \Pi_{\bot}, \cT )$ if and only if $\Pi( \code{Inv} )$ is a solution to $( \cP, \cT )$.
\end{theorem}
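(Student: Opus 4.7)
The plan is to apply the $\wlp$ transformer to the instrumented \code{main} from \cref{Fig:Reduction:LoopSoln} and match the resulting verification conditions against the four constraints of \cref{Def:LoopInvInference}. Since $\cP'$ and $\cP$ differ only in the body of \code{main}, and $\Pi_0 = \Pi_{\bot}$ imposes no constraint, the auxiliary procedures in $\Procs(\cP') \setminus \{\code{main}\}$ contribute the same summary clauses in both $\ToCHC(\cP'[\Pi])$ and $\psi_{\mathit{Procs}}$, so that component can be discharged immediately.

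The heart of the argument is to compute $\wlp$ of the body of the instrumented \code{main}. Abstracting its shape as $S_1;\,\wlpassert(\code{Inv});\,x_1 := *;\,x_2 := *;\,\wlpassume(\code{Inv});\,\wlpif(e)\{S_2;\,\wlpassert(\code{Inv})\}\,\wlpelse\{S_3\}$ and working from the innermost statement outward using the rules of \cref{Fig:Appendix:WLP}, I would propagate the post-condition $\top$ through each construct. After setting $\varphi := \Pi(\code{Inv})$ and pulling the universal quantifier over the havoced variables outward, the result simplifies to
\[
\wlp(S_1,\varphi)\;\land\;\forall V\cdot\bigl((\varphi\land e)\implies \wlp(S_2,\varphi)\bigr)\;\land\;\forall V\cdot\bigl((\varphi\land \neg e)\implies \wlp(S_3,\top)\bigr).
\]
These three conjuncts are precisely $\psi_{\mathit{Pre}}$, $\psi_{\mathit{Tr}}$, and $\psi_{\mathit{Post}}$ of \cref{Def:LoopInvInference}, modulo the universal closure over program variables that is added anyway when forming CHCs.

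With this equivalence in hand, both directions follow. For ($\Rightarrow$), if $\Pi$ solves the \ipsmp problem, then $\cP'[\Pi]$ is $\cT$-correct, so $\ToCHC(\cP'[\Pi])$ is $\cT$-satisfiable; the matching above then shows that $\varphi := \Pi(\code{Inv})$ satisfies $\psi_{\mathit{Pre}}\land\psi_{\mathit{Tr}}\land\psi_{\mathit{Post}}\land\psi_{\mathit{Procs}}$, so $\Pi(\code{Inv})$ is a loop invariant solution. For ($\Leftarrow$), given a loop-invariant solution $\varphi$, set $\Pi(\code{Inv}) := \varphi$; the same equivalence yields $\cT$-satisfiability of $\ToCHC(\cP'[\Pi])$, and the template side-condition $\models_\cT \bot \implies \Pi(\code{Inv})$ is vacuously satisfied, so $\Pi$ is a legal \ipsmp solution.

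The main obstacle is the careful bookkeeping in the $\wlp$ computation: one must verify that the havoc of $x_1,x_2$ immediately followed by $\wlpassume(\code{Inv})$ really produces a clean universally-quantified implication $\forall V\cdot(\varphi \implies \ldots)$ rather than an extra conjunct, that the entry assertion $\wlpassert(\code{Inv})$ sitting between $S_1$ and the havoc merges with $\wlp(S_1,\cdot)$ to yield exactly $\psi_{\mathit{Pre}}$, and that no free variables escape the intended scope when the branches are recombined. Once this mechanical decomposition is nailed down, both implications follow by direct inspection.
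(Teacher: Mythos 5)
Your proposal is correct and follows essentially the same route as the paper's proof: both reduce correctness of $\cP'[\Pi]$ to $\cT$-satisfiability of $\ToCHC(\cP'[\Pi])$, compute the WLP of the instrumented \code{main}, and rewrite it as the conjunction $\psi_{\mathit{Pre}} \land \psi_{\mathit{Tr}} \land \psi_{\mathit{Post}} \land \psi_{\mathit{Procs}}$ of \cref{Def:LoopInvInference}, with the converse direction being symmetric. Your explicit observations that $\Pi_{\bot}$ makes the template side-condition vacuous and that the havoc-then-assume pattern yields the universally quantified implications are exactly the bookkeeping the paper's proof performs implicitly.
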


\begin{proof}
  Let $\Pi$ be a solution $( \cP', \Pi_{\bot}, \cT )$.
  This is true if and only if $\cP'[\Pi]$ is correct relative to $\cT$.
  By \cref{Prop:Safety}, this is true if and only if $\ToCHC( \cP'[\Pi] )$ is $\cT$-satisfiable.
  By definition:
  {\par\footnotesize\begin{align*}
    \ToCHC( \cP'[\Pi] ) &= \wlp( \cP'( \mmsmallcode{main}  ), \top ) \land \psi_{\mathit{Procs}} \land \forall V \cdot \left( \mmsmallcode{Inv}( x, y ) \iff \varphi \right) \\
    \wlp( \cP'( \mmsmallcode{main} ), \top ) &= \begin{aligned}[t]
        & \forall V \cdot \wlp( S_1, \mmsmallcode{Inv}( x, y ) \land \forall V' \cdot (
        \mmsmallcode{Inv}( x', y' ) \implies \\
        & ( ( e' \implies \wlp( S_2, \mmsmallcode{Inv}( x', y' ) ) ) \land ( \neg e' \implies \wlp( S_3, \top ) ))))
    \end{aligned}
  \end{align*}}%
  As $\forall \,V \cdot \code{Inv}( x, y ) \iff \varphi$, then $\ToCHC( \cP'[\Pi] )$ is $\cT$-satisfiable if and only if $\ToCHC( \cP'[\Pi] )$ is $\cT$-satisfiable after substituting $\varphi$ for \code{Inv}.
  By substituting $\varphi$ for \code{Inv} and rewriting $\wlp( \cP'( \code{main} ), \top )$ as a conjunction of CHCs, it follows that $\ToCHC( \cP'[\Pi] ) \iff \varphi_{\mathit{Pre}} \land \varphi_{\mathit{Tr}} \land \varphi_{\mathit{Post}} \land \varphi_{\mathit{Procs}}$.
  Therefore, $\varphi$ is a solution to $( \cP, \cT )$.
  The other direction is symmetric.
\end{proof}

\begin{figure}[t]
  \begin{subfigure}{0.47\textwidth}
    \lstinputlisting[style=ipsmp]{code/loop_reduction_prob.c}
    \vspace{-0.1in}
    \caption{The input program.}
    \label{Fig:Reduction:LoopProb}
  \end{subfigure}
  \hfill
  \begin{subfigure}{0.47\textwidth}
    \lstinputlisting[style=ipsmp]{code/loop_reduction_soln.c}
    \vspace{-0.1in}
    \caption{The \ipsmp reduction.}
    \label{Fig:Reduction:LoopSoln}
  \end{subfigure}
  \caption{A reduction from loop invariant inference to \ipsmp. In both programs, $( S_1, S_2, S_3 )$ are statements, and $e$ is a Boolean expression in the language grammar.}
\end{figure}
}
\addappendix{\section{Proof of Theorem \ref{Thm:BoolComplexity}} \label{Appendix:BoolComplexityProof}

\begin{proof}
  Let $k = |\GV \cup \LV|$.
  The call to \code{Init} on \auxlineref{Line:BoolSynth:InitAll} of \cref{Alg:BoolSynth} has complexity $O( |\Locs| + |\Bsum| )$ since \code{Init} iterates over $\Partial( \cP )$ and calls \code{InitBoolReach}, \code{InitBoolReach} iterates over $Locs$ and $\Bsum$, and $|\Locs| \ge \Partial( \cP )$.
  The loop on \auxlineref{Line:BoolSynth:Loop} of \cref{Alg:BoolSynth} requires $O \left( |\Locs| \cdot 2^{2 k} \right)$ iterations, since each state is visited at most once, there are at most $2^{2 k}$ variable assignments, and exactly $|\Locs|$ control-flow locations for each assignment.
  During each iteration, six procedures are called with the following worst-case runtime complexities:
  \begin{itemize}
  \item Each call to \code{DoIntraproc} performs $O( |\Bops| )$ operations since \code{DoIntraproc} iterates over $\Bops$ and does $O(1)$ operations per edge in $\Bops$.
  \item Each call to \code{DoProcs} performs $O( |\Bcalls| )$ operations since \code{DoProcs} iterates over $\Bcalls$ and does $O(1)$ operations per edge in $\Bcalls$.
  \item Each call to \code{DoAssumes} performs $O( |\Bassume| )$ operations since \code{DoAssumes} iterates over $\Bassume$ and does $O(1)$ operations per edge in $\Bassume$.
  \item Each call to \code{DoAsserts} performs $O( |\Bassert| )$ operations since \code{DoAsserts} iterates over $\Bassert$ and does $O(1)$ operations per edge in $\Bassert$.
  \item Each call to \code{DoProcSum} iterates over $\Bsum$.
        However, \code{DoProcSum} only processes $( \lin, \lout ) \in \Bsum$ if $\lout = \lwork$.
        Since each function has a single exit location, there is at most one edge in $\Bsum$ such that $\lout = \lwork$.
        Therefore, each call to \code{DoProcSum} performs $O( |\Bcalls| )$ operations, as once $( \lin, \lout )$ is found, \code{DoProcSum} does $O(1)$ operations, iterates over $\Bcalls$, and does $O(1)$ operations per edge in $\Bcalls$.
  \item Each call to \code{DoFuncSum} performs $O( |\Bassume \cup \Bassert| )$ operations since it iterates over both $\Bassume$ and $\Bassert$, and does $O(1)$ operations per edge in $\Bassume \cup \Bassert$.
  \end{itemize}
  Therefore, \code{BoolSynth} terminates within $O \left( 2^{2 k} \cdot |\Locs| \cdot |\Bops \cup \Bcalls \cup \Bassume \cup \Bassert| \right)$ operations.
\end{proof}

\section{Proof of Theorem \ref{Thm:Analyze}}
\label{Appendix:AnalyzeProof}

\begin{proof}
  From the correctness of \code{ComputeBoolReach} (\cref{Alg:BebopReach}):
  \begin{enumerate}
  \item \code{DoInterproc} encodes rule~3 of \cref{Def:Summary};
  \item \code{DoProcs} encodes rules~4~and~5 of \cref{Def:Summary};
  \item \code{DoProcSum} encodes rule~6 and maintains rule~5 of \cref{Def:Summary}.
  \end{enumerate}
  \cref{Alg:BoolSynth} introduces three new procedures, that correspond to the rules of \cref{Def:PartialSummary}.
  \begin{enumerate}
  \item \code{DoAssumes} encodes rule~5 of \cref{Def:PartialSummary};
  \item \code{DoAsserts} encodes rule~3, and queues work for \code{DoFunSum};
  \item \code{DoFunSum} maintains rules~4~and~5 of \cref{Def:PartialSummary} for items queued in \code{DoAsserts}.
  \end{enumerate}
  Following the proof of \cref{Alg:BebopReach}, the loop on \auxlineref{Line:BoolSynth:Loop} computes a fixed point of the equations in \cref{Def:PartialSummary}.
  This is the least solution that is weaker than the initial assignment to $( \theta, \sigma, \Pi )$ on \auxlineref{Line:BoolSynth:InitAll}.
  Both $\theta$ and $\sigma$ are initialized according to \code{ComputeBoolReach}, and follow rule~2 of \cref{Def:Summary}.
  The initial assignment to $\Pi$ is $\Pi_0$.
  Therefore, when \code{Analyze} terminates, $( \theta, \sigma, \Pi )$ is a least partial program summary such that $\forall p \in \Partial( \cP ) \cdot \Pi( p ) \implies \Pi'( p )$.
\end{proof}

\section{Proof of Corollary \ref{Cor:BoolCorrectness}}
\label{Appendix:BoolCorrectnessProof}

The following proposition was stated informally in \cref{Sect:Background:WLP}.
\begin{proposition}[\cite{AlurBouajjani2018}]
  \label{Prop:BoolSafety}
  If $( \theta, \sigma )$ is the least summary of a Boolean program $\cP$, then $\cP$ is correct if and only if $\sigma( \lbot ) = \bot$.
\end{proposition}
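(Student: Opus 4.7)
The plan is to reduce the claim to the fact that, in the least summary, $\theta$ exactly characterizes location reachability from $\texttt{main}$. (Note: as written, the proposition mentions $\sigma(\lbot)$, but the decision on \auxlineref{Line:BoolSynth:Decide} of \cref{Alg:BoolSynth} tests $\theta(\lbot)$, and the argument below concerns $\theta(\lbot)=\bot$; the $\sigma$ component only summarizes procedure entries and is trivially $\bot$ at $\lbot$ regardless of correctness.) Concretely, I would establish that in the least $(\theta,\sigma)$ satisfying \cref{Def:Summary}, $s \models \theta(l)$ if and only if there exists an initial assignment $s_0$ and a computation $(\texttt{main}, s_0) \to^* (l, s)$. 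Given this, because $\texttt{assert}(e)$ compiles to an edge $(l_1, \neg e \land \keep(\GV\cup\LV), \lbot)$ (as recalled in \cref{Sect:Background:Lang}), $\cP$ is correct iff no execution reaches $\lbot$, iff $\theta(\lbot)=\bot$.

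For the soundness direction ($s$ reachable at $l$ implies $s \models \theta(l)$ in every valid summary), I would induct on the length of the execution. The base case places $(\texttt{main}, s_0)$ into $\theta(\texttt{main})$, which is at least $\top$ in any valid summary because rule~2 forces $\sigma(\texttt{main})=\top$ and rule~6 then propagates through $(\texttt{main},\lout)\in\Bsum$. The inductive step dispatches on the edge taken: rule~3 handles $\Bops$, rule~5 propagates reachability into the callee's entry along $\Bcalls$, and rules~4 and~6 together feed a callee's exit state back into the caller's return location via $\sigma$. For the completeness direction, I would define the ``canonical'' pair $(\theta^\star, \sigma^\star)$ where $\theta^\star(l)$ is the exact set of reachable states at $l$ and $\sigma^\star(\lin)$ is the true input/output relation of the procedure at $\lin$, and verify mechanically that $(\theta^\star,\sigma^\star)$ satisfies each rule of \cref{Def:Summary}. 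Since the least summary is pointwise $\le (\theta^\star,\sigma^\star)$, and soundness gives the reverse inequality, they coincide, and in particular $\theta(\lbot)=\theta^\star(\lbot)$.

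The main obstacle is making the soundness induction go through under recursion, where $\theta$ at call sites and $\sigma$ at procedure entries are mutually dependent through rules~4--6 and a naive induction on call-stack depth is circular. The clean fix is to present the constraints of \cref{Def:Summary} as a monotone operator $F$ on $(\Locs\to Q)\times(\Locs\to Q)$ ordered by pointwise implication, observe that all five rules are monotone (so Kleene's fixed-point theorem gives the least summary as the supremum of the chain $F^n(\bot,\bot)$), and then prove by induction on $n$ that every computation of at most $n$ transitions is already witnessed by $F^n(\bot,\bot)$. Once this matching between operational steps and Kleene iterates is in place, the stated equivalence drops out by taking the supremum on both sides and invoking the correspondence between $\lbot$-reachability and assertion violations.
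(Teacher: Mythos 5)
The paper never actually proves this proposition---it is imported by citation from \cite{AlurBouajjani2018} and used as a black box in the proof of \cref{Cor:BoolCorrectness}---so there is no in-paper argument to compare against. Your skeleton is the standard one from that literature and is the right one: show that in the least summary $\theta$ characterizes reachability exactly (soundness by induction over executions, completeness via the canonical summary whose $\theta$ is exact reachability and whose $\sigma$ is the true input/output relation, then equality by minimality), and combine this with the fact that failed assertions compile to edges into $\lbot$. Your repair of the statement is also correct and necessary: the test must be $\theta( \lbot ) = \bot$, not $\sigma( \lbot ) = \bot$, since $\sigma$ is only fed through rule~6 along summary edges $( \lin, \lout ) \in \Bsum$ and $\lbot$ is not a procedure exit, so $\sigma( \lbot ) = \bot$ holds in the least summary regardless of whether $\cP$ is correct. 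This reading matches both the informal statement in \cref{Sect:Background:WLP} and the way the proposition is actually invoked (with $\theta( \lbot )$) in the proof of \cref{Cor:BoolCorrectness}.

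One step of your soundness argument fails as written. You justify the base case by claiming that rule~2 ($\sigma( \code{main} ) = \top$) propagates into $\theta( \code{main} )$ via rule~6. Rule~6 is $\theta( \lout ) \implies \sigma( \lin )$: it flows from $\theta$ at an exit \emph{into} $\sigma$, never the reverse, so nothing in \cref{Def:Summary} as literally written forces $\theta( \code{main} )$ to contain any state. The least solution of rules~1--6 alone is $\theta \equiv \bot$ (with $\sigma( \code{main} ) = \top$ and $\sigma = \bot$ elsewhere), under which your reachability characterization---and the proposition itself---would be false, since every program would test as correct. The repair is not to derive the seed but to recognize it as part of the intended definition: \code{InitBoolReach} in \cref{Alg:BebopReach} initializes $\theta( \code{main} ) \gets \top$, and the paper's footnote states that \cref{Def:Summary} is meant to be equivalent to \cite{AlurBouajjani2018}; so the base case should rest on the (implicit) seeding constraint $\top \implies \theta( \code{main} )$, after which your induction goes through. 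A second, minor point: the claim that every computation of at most $n$ transitions is witnessed by $F^n( \bot, \bot )$ is off by a constant, because witnessing a call's return costs extra iterates beyond the callee's own steps (rule~5 to enter, rule~6 to summarize, rule~4 to return). Either weaken it to ``witnessed by some finite iterate,'' which is all the supremum argument needs, or bound it by $F^{2n}( \bot, \bot )$.
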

The proof of \cref{Cor:BoolCorrectness} follows.

\begin{proof}
  Assume that $( \cP, \varnothing, \Pi_0 )$ is an \ipsmp problem for a Boolean program $\cP$.
  By \cref{Thm:Analyze}, \auxlineref{Line:BoolSynth:CallAnalyze} computes a least partial program summary $( \theta, \sigma, \Pi )$ for $\cP$ such that $\forall p \in \Partial( \cP ) \cdot \Pi_0( p ) \implies \Pi( p )$.
  Furthermore, by \cref{Thm:BoolComplexity}, the call on \auxlineref{Line:BoolSynth:CallAnalyze} always terminates.
  Then, $\Pi$ solves the \ipsmp instance if and only if $\cP$ is safe.
  By \cref{Prop:BoolSafety}, $\cP[\Pi]$ is safe if and only if $\theta( \lbot ) = \bot$.
  On \auxlineref{Line:BoolSynth:Decide}, if $\theta( \lbot ) = \bot$, then a solution is returned, else a witness to unrealizability is returned.
  Therefore, \code{BoolSynth} decides the \ipsmp problem for Boolean programs.
\end{proof}

\section{Proof of Theorem \ref{Thm:Realizability}}
\label{Appendix:RealizabilityProof}

The following proposition was stated informally in \cref{Sect:Background:WLP}.
\begin{proposition}[\cite{BjornerGurfinkel2015}]
    \label{Prop:Safety}
    A program $\cP$ is correct relative to theory $\cT$ if and only if $\ToCHC( \cP )$ has a $\cT$-model.
\end{proposition}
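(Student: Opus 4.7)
The plan is to prove the biconditional by establishing the soundness and completeness of the WLP-based encoding separately. Recall that $\ToCHC(\cP)$ is constructed from $\wlp(\cP(\code{main}), \top)$ together with one summary conjunct $\ToCHC(f)$ per procedure $f$, where each loop at line $\mathit{ln}$ contributes an uninterpreted invariant predicate $\linv_{ln}$ and each procedure $f$ contributes uninterpreted predicates $f_{pre}$ and $f_{sum}$. A $\cT$-model of $\ToCHC(\cP)$ is therefore a simultaneous interpretation of all such auxiliary predicates, layered on top of a fixed interpretation of $\cT$.

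For the ``if'' direction (soundness), I would first prove, by structural induction on the grammar of \cref{Fig:Appendix:Lang}, a soundness lemma stating that for any $\cT$-model $M$ of $\ToCHC(\cP)$, any statement $S$, and any post-condition $Q$, if the $M$-interpretation of $\wlp(S, Q)$ holds in a state $s$, then every execution of $S$ from $s$ satisfies each \textsf{assert} it encounters and, upon termination, ends in a state satisfying $Q$. The base cases $\wlpskip$, assignment, $\wlpassume$, and $\wlpassert$ are immediate from \cref{Fig:Appendix:WLP}; sequencing and branching are routine; loops are handled by a nested induction on the number of iterations using the two clauses that $\wlp$ produces at each \wlpwhile\ (inductiveness of $\linv_{ln}$ along the body, and exit via $\neg\varphi$); procedure calls are handled using the $f_{pre}$ and $f_{sum}$ clauses together with the per-procedure summary conjunct $\ToCHC(f)$, which guarantees that every actual input/output pair of $f$ is over-approximated by $f_{sum}$. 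Instantiating the lemma with $S = \cP(\code{main})$ and $Q = \top$ shows that every assertion is satisfied along every execution, i.e., $\cP$ is correct.

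For the ``only if'' direction (completeness), I would build a canonical interpretation $M^{\star}$ directly from the operational semantics of $\cP$: for each loop header at line $\mathit{ln}$, interpret $\linv_{ln}$ as the set of program states reachable at that header in some execution of $\cP(\code{main})$; for each procedure $f$, interpret $f_{pre}$ as the set of argument tuples with which $f$ is invoked in some execution, and $f_{sum}$ as the input/output relation induced by the small-step semantics of $f$. I would then verify, clause by clause, that $M^{\star}$ satisfies every CHC generated by $\wlp$: the two loop clauses follow because the set of reachable states is closed under the loop body and, since $\cP$ is correct, satisfies the post-condition on exit; the $f_{sum}$-clauses follow because $f_{sum}$ captures every actual behaviour of $f$; the $\wlpassert$ clauses hold because no execution violates an assertion. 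Together these show $M^{\star}\models\ToCHC(\cP)$.

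The main obstacle is the definability gap in the completeness direction: the semantic sets used to construct $M^{\star}$ need not be expressible in any syntactic fragment a CHC-solver could enumerate. However, the proposition only requires the existence of \emph{some} $\cT$-model, and the freedom to interpret the auxiliary predicates arbitrarily on top of $\cT$'s fixed interpretation means a conservative extension by the reachability relations above is always consistent with $\cT$. This is the subtlety that separates \cref{Prop:Safety} from the stronger statement that a CHC-solver in a given fragment succeeds, and is what permits the clean biconditional formulation of \cite{BjornerGurfinkel2015}.
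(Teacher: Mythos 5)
The paper contains no proof of \cref{Prop:Safety} to compare against: the proposition is imported from~\cite{BjornerGurfinkel2015}, stated informally in \cref{Sect:Background:WLP} and restated in the appendix only so that it can be invoked in the proofs of \cref{Thm:Realizability} and of the reduction theorems. Your proposal is therefore a self-contained reconstruction of the standard argument behind the cited result, and its architecture is the right one: soundness of the clause system via a WLP-soundness lemma, and completeness via a canonical model that interprets $\linv_{ln}$ as header reachability and $f_{pre}$, $f_{sum}$ as call-context and input/output relations. Your closing observation is also exactly right and worth keeping: the proposition asserts only the existence of \emph{some} $\cT$-model, so the canonical interpretations need not be definable in any syntactic fragment; that is precisely the distinction between this proposition and \cref{Thm:Realizability}, whose hypothesis demands an $\cF$-solution.

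However, one step fails as written. Both directions are claimed ``by structural induction on the grammar of \cref{Fig:Appendix:Lang}'', but that language admits (mutually) recursive procedures, and the body of a callee is not a subterm of the call statement. In the soundness direction, your call case leans on the claim that $f_{sum}$ ``over-approximates every actual input/output pair of $f$''; this is not given by the summary conjunct $\ToCHC(f)$ alone --- it is exactly what must be proven, by applying your lemma to the body of $f$, and for recursive $f$ that body contains another call to $f$, so pure structural induction is circular. The standard repair is to induct on the length of the execution (or lexicographically on execution length and statement structure). The completeness direction hides the same issue inside ``verify, clause by clause'': checking the summary clause for $f$ and the loop clauses requires an inner lemma (if every execution of $S$ from $s$ is assert-safe and terminates only in states satisfying $Q$, then the canonical model satisfies $\wlp(S, Q)$ at $s$) whose proof again must pass through call bodies and therefore needs the same execution-length measure. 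With the induction measure corrected, both directions go through, and your proof establishes what the paper imports by citation.
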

The proof of \cref{Thm:Realizability} follows.

\begin{proof}
  Let $\chcbound( \cP ) := \bigwedge_{p \in \Partial( \cP )} \forall \vec{x} \cdot \left( \Pi_0( p ) \implies p( \vec{x} ) \right)$.
  Assume that $\sigma$ is an $\cF$-solution to $\SynthToCHC( \cP )$.
  Since $\chcbound( \cP )$ is a term of $\SynthToCHC( \cP )$, then $\forall p \in \Partial( \cP ) \cdot \models_{\cT} \Pi_0( p ) \implies \Pi( p )$.
  It then follows by induction on the size of $\Partial( \cP )$ that $\cP[ \Pi ]$ is correct.
  \begin{itemize}
  \item \textbf{Hypothesis}.
        For some $k \ge 0$, if $|\Partial( \cP )| \le k$ and $\SynthToCHC( \cP, \Pi_0 )$ has an $\cF$-solution, then $\cP$ is correct.
  \item \textbf{Base Case}.
        If $|\Partial( \cP )| = 0$, then $\cP$ is correct by \cref{Prop:Safety}.
  \item \textbf{Inductive Case}.
        Assume that $|\Partial( \cP )| = k + 1$, $\SynthToCHC( \cP )$ has an $\cF$-solution $\sigma$, and the inductive hypothesis holds up to $k$.
        Let $p \in \Partial( \cP )$ and $\cP' = \cP[ p \leftarrow \sigma( p ) ]$.
        By the definition of an interpretation, $\sigma$ is also a solution to $\ToCHC( \cP ) \land \chcbound( \cP ) \land \forall \vec{x} \cdot p( \vec{x} ) \iff \sigma( p ).$
        Furthermore, $\forall \vec{x} \cdot \Pi_0( p ) \implies p( \vec{x} )$ is subsumed by $\forall \vec{x} \cdot p( \vec{x} ) \iff \sigma( p )$.
        Therefore, $\sigma$ is an $\cF$-solution to $\SynthToCHC( \cP' )$.
        By hypothesis, $\cP'[ \Pi ]$ is correct.
        Since $\Pi(p) = \sigma(p)$, then $\cP'[ \Pi ] = \cP[ p \leftarrow \sigma(p) ][ \Pi ] = \cP[ \Pi ]$.
        Therefore, $\cP[ \Pi ]$ is also correct.
  \end{itemize}
  Therefore, $\Pi$ is a solution to $( \cP, \cT, \Pi_0 )$.
\end{proof}

\section{Proof of Theorem \ref{Thm:Unrealizability}}
\label{Appendix:UnrealizabilityProof}

The following proposition was stated informally in \cref{Sect:Background:WLP}.
\begin{proposition}[\cite{BjornerGurfinkel2015}]
    \label{Prop:CHCSafety}
    $\ToCHC( \cP )$ is a CHC conjunction.
\end{proposition}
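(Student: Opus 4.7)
The plan is to proceed by structural induction, mirroring the definitions of $\ToCHC$ and $\wlp$. Since $\ToCHC(\cP)$ is by definition a conjunction of $\wlp(\cP(\texttt{main}), \top)$ together with $\ToCHC(f)$ for every procedure $f \in \Procs(\cP)$, it suffices to show that each conjunct is (equivalent to) a CHC conjunction. The per-procedure terms $\ToCHC(f)$ are already syntactically of the form $\forall \overline{\vec{x}} \cdot (\vec{x} = \overline{\vec{x}} \land f_{pre}(\vec{x}) \implies \wlp(S, f_{sum}(\overline{\vec{x}}, \vec{e})))$, so their handling reduces to analysing the inner $\wlp$ on an atomic post-condition. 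Predicate definitions $\forall \vec{x} \cdot (p(\vec{x}) \iff e)$ split cleanly into two CHCs: the rule $\forall \vec{x} \cdot (e \implies p(\vec{x}))$ and the goal clause $\forall \vec{x} \cdot (p(\vec{x}) \land \neg e \implies \bot)$, both syntactically CHCs because $e$ is quantifier-free over $\Sigma$ with no predicate atoms.

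The core lemma to establish is: for every statement $S$ and every CHC conjunction $Q$, the formula $\wlp(S, Q)$ is equivalent to a CHC conjunction. I would prove this by structural induction on $S$, leaning on the auxiliary distributivity $\wlp(S, Q_1 \land Q_2) \equiv \wlp(S, Q_1) \land \wlp(S, Q_2)$. Distributivity holds because each transformer in \cref{Fig:Appendix:WLP} is built from conjunction, implication, substitution, and universal quantification, all of which commute with $\land$ (after duplicating the $Q$-independent conjuncts that arise in the $\wlpwhile$ and $\wlpif$ rules). With distributivity in hand, the inductive step reduces to the case where $Q$ is a single CHC $\forall W \cdot (\beta \implies h)$: straight-line transformers either prepend a constraint to $\beta$ ($\wlpassume$, conditional guards), apply a substitution to the whole clause ($x = e$), introduce a fresh outer universal quantifier ($x = *$), or emit an additional standalone conjunct such as the guard of $\wlpassert$---which is itself a CHC (a trivial rule $\top \implies p(\vec{x})$ when it is a predicate atom, or a goal clause $\neg \varphi \implies \bot$ when it is a quantifier-free constraint).

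The main obstacle is the loop rule, since it introduces a fresh invariant predicate $\linv_{ln}$ whose atoms must simultaneously occupy the head of one new CHC and the bodies of two others. My plan is to rewrite $\wlp(\wlpwhile_{ln}(\varphi)\{S\}, Q)$ as the conjunction of (i) the entry assertion $\linv_{ln}(\vec{w})$, read as the trivial CHC $\top \implies \linv_{ln}(\vec{w})$; (ii) the inductiveness clause $\forall \vec{w} \cdot ((\linv_{ln}(\vec{w}) \land \varphi) \implies \wlp(S, \linv_{ln}(\vec{w})))$, whose inner $\wlp$ has the atomic post-condition $\linv_{ln}(\vec{w})$ and, by the induction hypothesis combined with distributivity, becomes a conjunction of CHCs each with head $\linv_{ln}(\vec{w})$ and into whose bodies the leading guard $\linv_{ln}(\vec{w}) \land \varphi$ is absorbed; and (iii) the exit clause $\forall \vec{w} \cdot ((\linv_{ln}(\vec{w}) \land \neg \varphi) \implies Q)$, handled analogously by distributing the implication over the conjuncts of $Q$. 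The procedure call case is structurally similar: $f_{pre}(\vec{e})$ becomes a top-level conjunct, $f_{sum}(\vec{e}, \vec{r})$ enters the bodies of the CHCs produced for $Q[\vec{y}/\vec{r}]$, and the fresh $\forall \vec{r}$ wraps around all of them. Once these two cases are justified, the remaining cases are routine, and the global conjunction $\ToCHC(\cP)$ is a CHC conjunction.
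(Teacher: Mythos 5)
The paper does not actually prove this proposition: it is imported verbatim from \cite{BjornerGurfinkel2015} as a known fact about the weakest-liberal-precondition encoding, so there is no in-paper argument to compare yours against (the only place it is used, the proofs of \cref{Thm:Unrealizability} and \cref{Thm:SynthCHC}, simply invokes it). Your reconstruction is the standard argument and is essentially correct: the decomposition into per-procedure clauses, the core lemma that $\wlp(S,Q)$ maps CHC conjunctions to CHC conjunctions, and the distributivity of every transformer in \cref{Fig:Appendix:WLP} over $\land$ are exactly what is needed, and the loop and call cases are handled correctly. Two points deserve care. First, several of your clauses have head $\bot$ --- the goal clauses arising from \code{assert} of a quantifier-free constraint and from the only-if direction of $\forall \vec{x}\,(p(\vec{x}) \iff e)$ --- whereas the definition of a CHC in \cref{Sect:Background:FOL} requires the head to be an atom $h(\vec{y})$ with $h \in P$; you must either admit a distinguished falsity predicate as a legal head (the standard convention, and the one used in \cite{BjornerGurfinkel2015}) or note that you are mildly extending the stated definition. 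Second, in the loop case the conjuncts of $\wlp(S, \linv_{ln}(\vec{w}))$ need not all have head $\linv_{ln}$ --- asserts and nested loops inside $S$ contribute clauses with other heads --- but this is harmless, since the guard $\linv_{ln}(\vec{w}) \land \varphi$ is absorbed into every body regardless of the head, and a predicate atom plus a quantifier-free constraint is always a legal body extension. With those caveats, and routine renaming to avoid capture when pushing guards under the universal quantifiers, the argument goes through.
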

The proof of \cref{Thm:Unrealizability} follows.

\begin{proof}
  Assume for the intent of contradiction that $\Pi$ is a solution to $( \cP, \cT, \Pi_0 )$.
  Then $\cP[\Pi]$ is correct relative to $\cT$, since $\Pi$ is a solution.
  Then by \cref{Prop:CHCSafety}, $\ToCHC( \cP[\Pi] )$ is $\cT$-satisfiable, since $\cP[\Pi]$ is correct relative to $\cT$.
  By definition:
  {\footnotesize\begin{equation*}
    \ToCHC( \cP[\Pi] ) = \ToCHC( \cP ) \land \left( \bigwedge_{p \in \Partial( \cP )} \forall \vec{x} \left( \Pi( p ) \iff p( \vec{x} ) \right) \right)
  \end{equation*}}%
  Then, $\ToCHC( \cP ) \land \left( \bigwedge_{p \in \Partial( \cP )} \forall \vec{x} \left( \Pi_0( p ) \implies p( \vec{x} ) \right) \right)$ has a $\cT$-solution, since $\Pi$ is a solution to $( \cP, \cT, \Pi_0 )$ and therefore satisfies $\forall p \in \Partial( \cP ) \cdot \models_{\cT} \Pi( p ) \implies \Pi_0( p )$.
  Then $\SynthToCHC( \cP, \Pi_0 )$ is $\cT$-satisfiable.
  By contradiction, $( \cP, \cT, \Pi_0 )$ is unrealizable.
\end{proof}

\section{Proof of Theorem \ref{Thm:SynthCHC}}
\label{Appendix:SynthCHCProof}

\begin{proof}
    By \cref{Prop:CHCSafety}, $\ToCHC( \cP )$ is a CHC conjunction.
    For each $p \in \Partial( \cP )$, $\Pi_0( p ) \implies p( \vec{x} )$ is a CHC, since $\Pi_0( p )$ is quantifier-free and $p$ is a predicate symbol.
    Therefore, $\SynthToCHC( \cP, \Pi_0 )$ is a CHC conjunction.
\end{proof}

\section{Proof of Theorem \ref{Thm:Undecidable}}
\label{Appendix:UndecidableProof}

The following proposition was stated informally in \cref{Sect:Decidability:General}.
\begin{proposition}[\cite{Minsky1967}]
  \label{Prop:2CMHalting}
  The halting problem is undecidable for $2$-counters.
\end{proposition}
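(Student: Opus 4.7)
The plan is to reduce the halting problem for $2$-counter machines, which is undecidable by \cref{Prop:2CMHalting}, to IPS-MP over linear integer arithmetic. The reduction turns a given $2$-counter machine $M$ into a closed (partial-predicate-free) IPS-MP instance whose realizability coincides with non-halting of $M$.

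Concretely, given a $2$-counter machine $M$ with locations $\{1, \ldots, n\}$, I will construct a program $\cP_M \in \Programs( \Sigma, \cV )$, where $\Sigma$ is the signature of linear integer arithmetic. The program declares three integer variables $c_1, c_2, pc$ for the two counters and the program counter, initialized to $c_1 = c_2 = 0$ and $pc = 1$. The simulation runs inside a \code{while (pc != 0)} loop whose body is a cascade of \code{if} statements, one per location $\ell \in \{1, \ldots, n\}$, each guarded by $pc = \ell$ and encoding the corresponding instruction using only LIA updates: \code{inc}$(c_i)$ becomes $c_i := c_i + 1;\ pc := pc + 1$; \code{dec}$(c_i)$ becomes $c_i := c_i - 1;\ pc := pc + 1$; \code{jump}$(c_i, j)$ becomes an \code{if}-\code{else} on $c_i = 0$ that assigns $pc$ to $j$ or $pc + 1$; and \code{halt}$()$ assigns $pc := 0$ to exit the loop. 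After the loop, $\cP_M$ executes \code{assert(false)}. By construction, $\cP_M$ is correct if and only if the main loop runs forever, i.e., if and only if $M$ does not halt.

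Next, I will form the IPS-MP instance $( \cP_M, \cT, \Pi_{\bot} )$, where $\cT$ is the theory of linear integer arithmetic. Since $\cP_M$ contains no partial predicates, $\Partial( \cP_M ) = \varnothing$, and the empty map is the only candidate implementation. As observed in \cref{Sect:Overview:Discussion}, IPS-MP in this degenerate case is equivalent to verification: the instance is realizable exactly when $\cP_M$ is correct. Hence $( \cP_M, \cT, \Pi_{\bot} )$ is realizable if and only if $M$ does not halt.

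Any decision procedure for IPS-MP over linear integer arithmetic, composed with the effective map $M \mapsto ( \cP_M, \cT, \Pi_{\bot} )$, would therefore decide the halting problem for $2$-counter machines, contradicting \cref{Prop:2CMHalting}. The only real obstacle is confirming that the simulation of $M$ fits within the syntactic fragment $\Programs( \Sigma, \cV )$ of \cref{Fig:Appendix:Lang}; since that fragment supports \code{while}, nested \code{if}-\code{else}, integer variables, and arbitrary quantifier-free LIA expressions as guards and right-hand sides, the encoding is routine and the reduction is effective.
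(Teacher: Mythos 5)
Your proposal does not prove the statement at hand; it proves a different one. The statement is Minsky's classical result that the halting problem for $2$-counter machines is undecidable, which the paper does not prove at all but cites directly from \cite{Minsky1967}. Your argument instead takes this result as a hypothesis (you explicitly invoke \cref{Prop:2CMHalting} in your first sentence) and reduces $2$-counter halting to \ipsmp over linear integer arithmetic, concluding that \ipsmp is undecidable. That is precisely the content and the proof of \cref{Thm:Undecidable} in the paper: the encoding you give --- counters and program counter as integer variables, a \code{while} loop with a cascade of \code{if}s keyed on the program counter, and a failing assertion to mark halting --- matches the paper's reduction almost line for line, differing only in whether \code{assert(false)} sits inside the loop as the halt instruction or after the loop exit. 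As a proof of \cref{Prop:2CMHalting} itself, your argument is circular: it assumes the proposition in order to derive something else, so no part of it establishes the proposition.

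To actually prove \cref{Prop:2CMHalting} you would need a simulation argument in the other direction: show that some model whose halting problem is already known to be undecidable (Turing machines, or equivalently $k$-counter register machines) can be simulated by machines with only \emph{two} counters. Minsky's argument does this in two steps. First, a Turing machine tape is split at the head into two stacks, and each stack is encoded as a natural number, yielding a machine with finitely many counters whose halting problem is undecidable. Second, $k$ counters $c_1, \ldots, c_k$ are packed into a single counter holding the value $2^{c_1} 3^{c_2} \cdots p_k^{c_k}$, with the second counter used as scratch space to implement increment, decrement, and zero-test of each $c_i$ by multiplying, dividing, and checking divisibility by the prime $p_i$. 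None of this simulation machinery appears in your proposal, so as it stands there is no proof of the stated proposition; what you have written is, in substance, the paper's own proof of \cref{Thm:Undecidable} attached to the wrong statement.
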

The proof of \cref{Thm:Undecidable} follows.

\begin{proof}
  Assume for the intent of contradiction that \ipsmp is decidable for linear integer arithmetic.
  Let $\cV = \{ x, y, z \}$, $\Sigma$ be the signature of linear integer arithmetic, and $\cT$ be the theory of linear integer arithmetic.
  Every $2$-counter machine can be encoded in $\Programs( \Sigma, \cV )$ as follows:
  \begin{enumerate}
  \item The two integer counters are $x$ and $y$.
  \item The program counter is $z$.
  \item The body of $\cP( \code{main} )$ is a \code{while} loop with loop condition \code{true}.
  \item The body of the \code{while} loop is a sequence of \code{if}-\code{else} statements that maps each value of $z$ to an instruction.
  \item The instruction \code{inc(x)} is: \code{x = x + 1; z = z + 1;}
  \item The instruction \code{dec(x)} is: \code{x = x - 1; z = z + 1;}
  \item The instruction \code{jump(x, i)} is: \code{if (x==0) \{ z=i; \} else \{ z=z+1; \}}
  \item The instruction \code{halt()} is: \code{assert(false);}
  \end{enumerate}
  If $\cP \in \Programs( \Sigma, \cV )$ is a $2$-counter machine (following the above encoding), then $\cP$ halts if and only if $\cP$ violates an assertion.
  Furthermore, $|\Partial( \cP )| = 0$.
  Let $\Pi_0$ be the trivial function from $\Partial( \cP )$ to $\QFFml( \Sigma, \cV )$.
  Then the \ipsmp problem $( \cP, \cT, \Pi_0 )$ has a solution if and only if $\cP$ halts.
  Since \ipsmp is decidable, then the halting problem for $2$-counter machines is decidable.
  However, the halting problem is undecidable for $2$-counter machines by \cref{Prop:2CMHalting}.
  Then by contradiction, \ipsmp is undecidable for the theory of integer linear arithmetic.
\end{proof}

\section{Proof of Theorem \ref{Thm:ClassReduction}}
\label{Appendix:ClassReductionProof}

\begin{proof}
  Let $\Pi$ be a solution $( \cP', \Pi_{\bot}, \cT )$.
  This is true if and only if $\cP'[\Pi]$ is correct relative to $\cT$.
  By \cref{Prop:Safety}, this is true if and only if $\ToCHC( \cP'[\Pi] )$ is $\cT$-satisfiable.
  By definition
  {\footnotesize\begin{align*}
    \ToCHC( \cP'[\Pi] ) &= \wlp( \cP'( \mmsmallcode{main} ), \top ) \land \ToCHC( \cP ) \land \forall V \cdot \left( \mmsmallcode{Inv}( x, y ) \iff \varphi \right) \\
    \wlp( \cP'( \mmsmallcode{main} ), \top ) &= \forall V \cdot ( ( ( \mmsmallcode{br} = 0 ) \implies \tau_0 ) \land \cdots \land ( ( \mmsmallcode{br} = 3 ) \implies \tau_3 ) )
  \end{align*}}%
  where $\tau_i$ is the WLP of the $i$-th branch of $\cP'( \code{main} )$.
  Since \code{br} does not appear in any $\tau_i$, then $\wlp( \cP'( \code{main} ), \top )$ is equisatisfiable with $( \forall V \cdot \tau_0 ) \land \cdots \land ( \forall V \cdot \tau_3 )$.
  Since $\forall V \cdot \left( \code{Inv}( x, y ) \iff \varphi \right)$, $\ToCHC( \cP'[\Pi] )$ is $\cT$-satisfiable if and only if $\ToCHC( \cP'[\Pi] )$ is $\cT$-satisfiable after substituting $\varphi$ for \code{Inv}.
  Observe that, after substituting $\varphi$ for \code{Inv} and simplifying to CHC conjunctions:
  {\footnotesize\begin{align*}
      \tau_0 &= \psi_{\mathit{Init}}
      &
      \tau_1 &= \psi_{\mathit{Closure1}} \land \forall V \cdot \left( \varphi \implies \mmsmallcode{f}_{pre}( x, y, a ) \right) \\
      \tau_2 &= \psi_{\mathit{Closure2}} \land \forall V \cdot \left( \varphi \implies \mmsmallcode{g}_{pre}( x, y, a, b ) \right)
      &
      \tau_3 &= \psi_{\mathit{Sufficient}} \land \forall V \cdot \left( \varphi \implies \mmsmallcode{func}_{pre}( x, y, a ) \right)
  \end{align*}}%
  As a direct result, $\ToCHC( \cP[\Pi'] ) \iff \psi_{\mathit{Init}} \land \psi_{\mathit{Closure1}} \land \psi_{\mathit{Closure2}} \land \psi_{\mathit{Sufficient}} \land \psi_{\mathit{Sum}}$.
  Therefore, $\varphi$ is a solution to $( \cP, \cT )$.
  The other direction is symmetric.
\end{proof}

\section{Proof of Theorem \ref{Thm:ParamReduction}}
\label{Appendix:PCMCReductionProof}

\begin{proof}
  Let $\Pi$ be a solution $( \cP', \Pi_0, \cT )$.
  This is true if and only if $\cP'[\Pi]$ is correct relative to $\cT$.
  By \cref{Prop:Safety}, this is true if and only if $\ToCHC( \cP'[\Pi] )$ is $\cT$-satisfiable.
  By definition
  {\footnotesize\begin{align*}
    \ToCHC( \cP'[\Pi] ) &= \wlp( \cP'( \mmsmallcode{main} ), \top ) \land \psi_{\mathit{Process}} \land \forall V \cdot \left( \mmsmallcode{Inv}( l, s, r ) \iff \varphi \right) \\
    \wlp( \cP'( \mmsmallcode{main} ), \top ) &= \forall V \cdot \left( \left( \left( \mmsmallcode{br} = 0 \right) \implies \tau_0 \right) \land \left( \left( \mmsmallcode{br} \ne 0 \right) \implies \tau_1 \right) \right),
  \end{align*}}%
  where $\tau_i$ is the WLP of the $i$-th branch of $\cP'( \code{main} )$.
  Since \code{br} does not appear in $\tau_0$ or $\tau_1$, then $\wlp( \cP'( \code{main} ), \top )$ is equisatisfiable with $( \forall V \cdot \tau_0 ) \land ( \forall V \cdot \tau_1 )$.
  By definition of $\wlp$:
  {\footnotesize\begin{align*}
    \tau_0 =&\, \left( \mmsmallcode{Inv}( l, s, r ) \land \mmsmallcode{Inv}( r, i, l ) \right) \implies \\
           &\, \left( \mmsmallcode{tr}_{pre}( l, s, r ) \land \left( \mmsmallcode{tr}_{sum}( l, s, r, l', s', r' ) \implies \left( \mmsmallcode{Inv}( l', s', r' ) \land \mmsmallcode{Inv}( r', i, l' ) \right) \right) \right)
  \end{align*}}%
  Since $\forall V \cdot \left( \code{Inv}( x, y ) \right) \iff \varphi$, $\ToCHC( \cP'[\Pi] )$ is $\cT$-satisfiable if and only if $\ToCHC( \cP'[\Pi] )$ is $\cT$-satisfiable after substituting $\varphi$ for \code{Inv}.
  Observe that, after substituting $\varphi$ for \code{Inv} and simplifying to CHC conjunctions:
  {\footnotesize\begin{align*}
    \tau_0 &= \psi_{\mathit{Closure}} \land \psi_{\mathit{Inf}} \land \forall V \cdot \left( \varphi \land \varphi_{\mathit{Inf}} \implies \mmsmallcode{tr}_{pre}( l, s, r) \right) &
    \tau_1 &= \psi_{\mathit{Adequate}}
  \end{align*}}%
  Then $\ToCHC( \cP[\Pi'] ) \iff \psi_{\mathit{Closure}} \land \psi_{\mathit{Inf}} \land \psi_{\mathit{Adequate}} \land \psi_{\mathit{Sum}}$.
  Since the template for \code{Inv} is \code{init}, then also $\models_{\cT} \code{init}( l, s, r ) \implies \varphi$.
  Therefore, $\varphi$ is a solution to $( \cP, \cT )$.
  The other direction is symmetric.
\end{proof}
}

\end{document}